\theoremstyle{plain}
\newtheorem{thm}{Theorem}[]
\newtheorem*{result}{Main Result}
\newtheorem{rslt}{Result}
\newtheorem*{cor*}{Corollary}
\newtheorem*{thm*}{Theorem}
\newtheorem{lem}[thm]{Lemma}
\newtheorem{cor}[thm]{Corollary}
\newtheorem{defn}[thm]{Definition}
\theoremstyle{remark}
\newcommand{\cH}{\mathcal{H}}
\newcommand{\cV}{\mathcal{V}}
\newcommand{\cU}{\mathcal{U}}
\newcommand{\tcV}{\widetilde{\mathcal{V}}}
\newcommand{\tnu}{\widetilde{\nu}}
\newcommand{\cE}{\mathcal{E}}
\def\id{{\mathbb I}}
\newcommand{\ps}[1]{{\color{purple} #1}}
\newcommand{\sumab}[2]{\underset{#1}{\overset{#2}{\sum}}}
\newcommand{\suma}[1]{\underset{#1}{\sum}}
\newcommand{\rec}[2]{C(#1,#2)}
\newcommand{\aut}{{\tt Eig}}
\newcommand{\ii}{\mathrm{i}}
\begin{document} 

\title{The entropic coherence is a necessary resource for non-energy preserving gates}

\date{\today}

\author{Riccardo Castellano}
\email{riccardo.castellano@unige.ch}
\affiliation{Scuola Normale Superiore, 56126 Pisa, Italy}
\affiliation{Dipartimento di Fisica dell’Universit\`a di Pisa, Largo Pontecorvo 3, I-56127 Pisa, Italy}
\affiliation{Department of Applied Physics, University of Geneva, 1211 Geneva, Switzerland}

\author{Vasco Cavina}
\affiliation{Scuola Normale Superiore, 56126 Pisa, Italy}

\author{Martí Perarnau-Llobet}
\affiliation{Department of Applied Physics, University of Geneva, 1211 Geneva, Switzerland}
\affiliation{F\'isica Te\`orica: Informaci\'o i Fen\`omens Qu\`antics, Department de F\'isica, Universitat Aut\`onoma de Barcelona, 08193 Bellaterra (Barcelona), Spain}

\author{Vittorio Giovannetti}
\affiliation{Scuola Normale Superiore, 56126 Pisa, Italy}
\affiliation{NEST, and Istituto Nanoscienze-CNR, 56126 Pisa, Italy}

\author{Pavel Sekatski}
\email{pavel.sekatski@unige.ch}
\affiliation{Department of Applied Physics, University of Geneva, 1211 Geneva, Switzerland}

\begin{abstract}
 We consider the task of implementing non-energy preserving gates (NEPG) on a finite-dimensional system $S$ via an  energy-preserving interaction with an external battery $B$. We prove that the entropic coherence of the battery (an instance of the relative entropy of resource) is a necessary resource for this task, and find a lower bound on its minimum amount that has to be present in the battery to be able to implement NEPGs with a fixed desired precision. An immediate corollary is that any finite-dimensional battery is doomed to a certain minimal error in the gate implementation task. Moreover, under assumptions on the density of energy levels in the battery Hamiltonian, our main results imply additional lower bounds on the minimal amount of energy and quantum Fisher information required to implement any gate. We show that these bounds can be stronger than the universal bounds previously established in the literature.
\end{abstract}

\maketitle

Understanding the physical requirements for precise control of quantum systems is a central challenge in quantum information science.  Theoretical work has explored the limits of control in a variety of tasks, including quantum measurements~\cite{WAY2006, EnergyMesuramentNavascues, WAY2008, WAYUnbounded(Tajima), WAY2011}, quantum channels~\cite{ChannelsCost(Tajima)}, and state preparation~\cite{AsymmetryPureStates, IIDNonAbelian, IID-Rates}. In parallel, the development of quantum resource theories~\cite{QResourceTh, RECForAsymmetrySpekkens(2009), HolevoAsymmDef(2014)} has provided a framework for analyzing these limitations at a fundamental level.

Among the different control operations, the realisation of precise unitary gates stands as a crucial task, particularly in quantum computation.  Assuming energy conservation as a fundamental symmetry, non-energy preserving gates (NEPGs) 
on a system $S$ can only be implemented with the aid of an auxiliary system $B$, commonly refereed to as a battery. The performance of a battery state for this task is measured by the \lq\lq distance" between the open system dynamics $S$ (due to the coupling with $B$) and the target gate.
Progress has been made in identifying the necessary conditions that the battery must satisfy to achieve a certain precision. In particular, Quantum Fisher Information (QFI) (with respect to the Hamiltonian) and average energy of the battery are both known to be essential resources~\cite{Chiribella, Chiribella2, WeakQFIbound, QFIBound,SaturatingWAY(Gaussian)}.

In this paper, we show that  the entropic coherence (EC, in Eq.~\eqref{eq: ECE}) of the battery is an essential resource for the implementation of any NEPG. In contrast to the above examples, this is an entropic (unit-less) quantity unchanged by rescaling of battery's Hamiltonian, which provides new insights on the battery requirements. In particular, our results imply that an ideal implementation of the gate requires a battery of unbounded dimension. For a single system the EC is tightly related to the relative entropy of superposition \cite{aberg2006quantifying} and of coherence~\cite{BaumgratzREC}, but behaves differently under system composition and can be seen as a specific instance of the relative entropy of resource defined in~\cite{RECAsUniversalConvertibilityRate(2002), RECForAsymmetrySpekkens(2009),Corrected-Uni-Conv-Rate(2012)}.

{\it Framework.---}
Given a system $S$ with Hamiltonian ${H}_S$, our goal is to implement a generic unitary gate $\mathcal{V}_{S}({\rho}_S) = {V}_S {\rho}_S {V}_S^{\dag}$ using energy-preserving operations. If $S$ is isolated, this requirement restricts us to energy-preserving gates such that $[{V}_S, {H}_S] = 0$. To overcome this limitation, we allow the presence of a battery system $B$ with Hamiltonian ${H}_B$ and prepared in the state $\beta_B$, such that only the total energy $H_{SB}=H_S+H_B$ must be conserved, see Fig.~\ref{Def:MainChannel}.

\begin{figure}
    \includegraphics[width=0.85\linewidth]{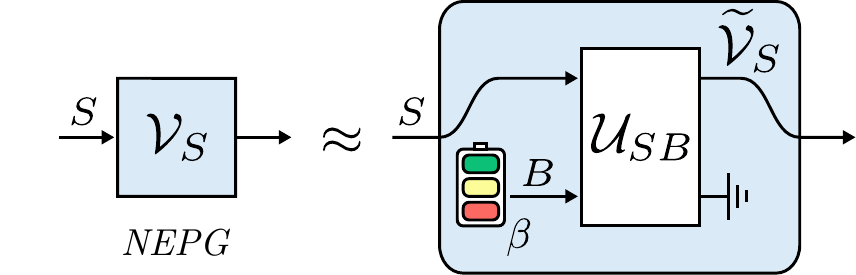}

    \caption{A non-energy-preserving gate (NEPG) $\mathcal{V}_{S}$ on the system $S$ can be approximated with a channel $\widetilde{\mathcal{V}}_{S}$, realized via a joint 
    total-energy-preserving unitary transformation $\mathcal{U}_{SB}$ on the system and a battery $B$. 
    To be useful for the task the initial state of the battery $\beta_B$ must present some coherence.  We relate its entropic coherence $\rec{\beta_B}{H_B}$ in Eq.~\eqref{eq: ECE}, to the error in gate approximation, quantified by the worst-case infidelity $\epsilon_{wc}(\mathcal{V}_{S},\widetilde{\mathcal{V}}_{S})$ in Eq.~\eqref{eq: wc fid}.
    }
    \label{fig:1}
\end{figure}

In this framework the possible transformations induced on $S$ are given by the following family of CPTP maps \cite{OpenQ.SystemsPetruccione}

\begin{align}  \notag
& {\tt TEP}(H_{B},\beta_B): =  \Big \{ \tcV_S ({\rho}_{S})=\tr_B  \cU_{SB}[{\rho}_{S}\otimes \beta_B],\\
&\qquad\qquad \qquad\qquad\text{s.t.}\; [{U}_{SB},{H}_{SB}] =0 \Big\},
    \label{Def:MainChannel}
\end{align}
depending on the initial state of the battery ${\beta}_B$ and its Hamiltonian $H_{B}$. For ease of notation we also introduced $\mathcal{U}_{SB}[\bullet] := {U}_{SB}\bullet {U}^{\dag}_{SB}$. Since our desire is to implement a fixed target map  $\mathcal{V}_{S}$, we look for choices of ${\beta_B},H_B$  granting the best possible approximation thereof $\tcV_S  \approx \mathcal{V}_{S}$. This requirement can be put on formal grounds by minimizing a divergence between the maps, here the worst-case infidelity
\begin{equation}\label{eq: wc fid}
\epsilon_{wc}(\mathcal{X}_S ,\mathcal{Y}_S) = 1- \min_{\rho_{SA}} F(\mathcal{X}_S[\rho_{SA}], \mathcal{Y}_S[\rho_{SA}])
\end{equation}
where $F(\rho,\sigma)= \left(\tr | \sqrt{\rho} \sqrt{\sigma} |\right)^{2}$ and $A$ is any auxiliary system  (some useful properties of $\epsilon_{wc}$ are summarized in App.~\ref{Sec:ChannelDist}).

There are some conditions that the initial state of the battery should necessarily satisfy for high precision to be possible, often stated in terms of no-go theorems linking $\epsilon_{wc}$ with a certain amount of {\it quantum resources} \cite{QResourceTh} initially present in the battery
\footnote{This minimal quantity is sometimes called the \lq\lq cost'' of implementation of the gate, however, we avoid using this terminology, as only a small fraction of the resource is lost after one use of the battery, implying (as underlined for istance in \cite{CataliticCoherence,Chiribella}) that the same battery state can be used a large number of times}.
Different notions of resources $R(\beta_B,H_B)$, determined by  
battery Hamiltonian $H_{B}$ and its state ${\beta}_B$, have emerged in this context \cite{QFIBound,QFI-NonAbelian,Chiribella, aberg2006quantifying}. We are specifically interested in its {\it entropic coherence} which we now define.

{\it Entropic coherence as a resource for gate implementation.---}
For a system $B$ and with the Hamiltonian $H_B$, let $\mathcal{G}_{H_B}$ be the twirling map $\mathcal{G}_{H_B}[\bullet] := \sum_{E} \Pi_E \bullet \Pi_E$, with $\Pi_E$ running through projectors on the eigenspaces of $H_B$. To a state $\rho_S$ of the system we associate the entropic coherence of energy, given by the following quantity
\begin{equation}\label{eq: ECE}
\rec{\rho_B}{H_B}= S\big(\mathcal{G}_{H_B}[\rho_B]\big) -  S\big(\rho_B\big),
\end{equation}
 where $S$ is the von Neumann entropy. Here we focus on energy for convenience, but the definition immediately generalizes to the entropic coherence of any additive quantity associated to an Hermitian operator.  For a single system, $\rec{\rho_B}{H_B}$ is identical to the relative entropy of superposition~\cite{aberg2006quantifying}, computed with respect to energy eigenspaces of $H_B$. Furthermore, when $H_B$ is non-degenerate it also coincides with the relative entropy of coherence~\cite{BaumgratzREC}, computed with respect to the eigenbasis of $H_B$.
However, in both cases a crucial difference appears when considering composite systems. In our case, the entropic coherence $C(H_{SB},\rho_{SB})$ of a joint state of two systems $S$ and $B$ is computed by applying Eq.~\eqref{eq: ECE} to the total Hamiltonian $H_{SB}= H_S +H_B$. In contrast, to compute the joint relative entropy of superposition/coherence one apples the same equation but for the composed twirling map $\mathcal{G}_{H_S}\otimes \mathcal{G}_{H_B}$~\cite{aberg2006quantifying,BaumgratzREC}, which leads to a very different  class of free operations as we discuss below.

We show in App.~\ref{app:RECprop} that EC satisfies the following four conditions, that we identify as prerequisites for any resource for the gate implementation task:  

\begin{itemize} 
    \item[(C1)] It is non-increasing under energy preserving operations: $ [U_B, {H}_B] = 0 \implies \rec{\beta_B}{H_B} = \rec{\cU_B[\rho_B]}{H_B}$;
    \item[(C2)] It is non-increasing under partial trace: $\rec{\rho_{BA}}{H_{BA}} \geq \rec{\rho_B}{H_B}$, where $A$ is an auxiliary system possibly correlated with $B$, and $\rho_B= \tr_{A} \rho_{BA}$;
    \item[(C3)] It is sub-additive on product states $\rec{\rho_B\otimes \rho_A}{H_{BA}}\leq \rec{\rho_B}{H_B} +\rec{\rho_A}{H_A}$;
    \item[(C4)] It is regular with respect to the trace distance $D$, that is $\underset{D(\rho_{B},\sigma_{B}) \rightarrow 0}{\lim}|\rec{\rho_B}{H_B}-\rec{\sigma_B}{H_B}| =0$.
\end{itemize}
The conditions closely resemble those proposed in \cite{Chiribella,UnivLimitUnitary(Takagi)}. The crucial difference is that in (C1) we explicitly define the energy preserving operations as the set of free operations.  This rules out the relative entropy of coherence as a valid resource, since for composite systems the  latter can increase under generic energy preserving operations and is non-increasing  only under a smaller set of unitary operations, which can not exchange energy between the subsystems. Namely, those of the form  $U_{AB}=\bigoplus_{a,b} e^{i \varphi_{a,b}} U_{A}^{(E_a)}\otimes U_B^{(E_b)}$, with $U_{A}^{(E_a)}$ and $U_{B}^{(E_b)}$ operating   on fixed energy subspaces for both  subsystems. This implies that the bound on the relative entropy of coherence discussed in ~\cite{Chiribella,UnivLimitUnitary(Takagi)} is not relevant for the gate implementation task (see App.~\ref{App:Other-REC} for details).

{\it Batteries suitable for the task.---}
We are interested in approximating a NEPG $\mathcal{V}_S$ on a quantum system with Hamiltonian $H_S$ and dimension $d_S$, by a channel  $\widetilde{\mathcal{V}}_S \in {\tt TEP}(H_B,\beta_B)$  induced by a total-energy preserving operation on the system and the battery (see Eq.~\ref{Def:MainChannel}). In order to discuss the convergence of such channels to the ideal gate, we introduce the following definition. 
\begin{defn}
A family of batteries  $\{H_B^{(\epsilon)},\beta_B^{(\epsilon)}\}_{\epsilon\in(0,\eta)}$
is called {\bf suitable} for $\mathcal{V}_S$ if for every $\epsilon \in (0,\eta)$ there exist a channel
\begin{equation}
    \widetilde{\mathcal{V}}_S \in {\tt TEP}(H_B^{(\epsilon)},\beta_B^{(\epsilon)}) \quad \text{such that} \quad \epsilon_{wc} (\widetilde{\mathcal{V}}_S, \mathcal{V}_{S}) \leq  \epsilon.
\end{equation}
\end{defn}
In words, $H_B^{(\epsilon)}$ and $\beta_B^{(\epsilon)}$ define a battery system capable of implementing $\mathcal{V}_{S}$ with $\epsilon$ precision, as quantified by the worst case fidelity in Eq.~\eqref{eq: wc fid}. This definition is general and, in particular,  makes no assumptions on the size of the battery system and its scaling with $\epsilon$. However, as we will see, the final result can be significantly strengthened by introducing a mild assumption on the number of energy levels on which the battery state is supported. To do so let 
\begin{equation}\label{eq: energy density scaling}
\mathcal{N}(E,H_{B}) := \suma{E' \leq E} \text{Rank}[\Pi_{E'}]
\end{equation}
be the dimension of the subspace with energy bounded by $E$ for the battery system. Consider the following definition of a {\it proportionate} family of batteries, which has the number of accessible levels scaling at most polynomially with $\epsilon^{-1}$: 
\begin{defn} \label{Def:Proportionate} 
A family of batteries $\{H_B^{(\epsilon)},\beta_B^{(\epsilon)}\}_{\epsilon\in(0,\eta)}$
is called {\bf proportionate} if 
\begin{align}
&\mathcal{N}(2 E_{max}(\beta^{(\epsilon)}_B), H_{B}^{(\epsilon)}) \leq \rm{poly}\left(\epsilon^{-1}\right),
\end{align}
where $E_{max}(\beta_B^{(\epsilon)}):= \max \{E\in \mathcal{E}[H_{B}^{(\epsilon)}]: \bra{E}\beta_B^{(\epsilon)}\ket{E}>0\}$ is the maximal energy on which the state $\beta_B^{(\epsilon)}$ is supported, and $\mathcal{E}[H]$ is the set of eigenvalues of $H$.
\end{defn}

{\it Entropic coherence is necessary for NEPG implementation.---}
With these definitions we are ready to state our main result, establishing a strong constraint on the entropic coherence of any battery system suitable to approximate the ideal gate.
\begin{result}[EC required to approximate a NEPG] \label{Th:main}
 (i) If $\{H_B^{(\epsilon)},\beta_B^{(\epsilon)}\}_{\epsilon}$ is a suitable batteries family for the gate $\mathcal{V}_S $, then
\begin{equation} \label{Eq:RECforAbelian}
   \rec{\beta_B^{(\epsilon)}}{H_B^{(\epsilon)}}
   \geq  \frac{r_{2}(V_{S})}{8}\log(\frac{\sigma(V_{S})}{\epsilon})-o(1).
\end{equation}

(ii) If $\{H_B^{(\epsilon)},\beta_B^{(\epsilon)}\}_{\epsilon}$ is also {\bf proportionate}, then a stronger bound holds
\begin{equation} \label{Eq:RECforAbelian2}
    \rec{\beta_B^{(\epsilon)}}{H_B^{(\epsilon)}} \geq  \frac{r_{2}(V_{S})}{4}\log(\frac{\sigma'(V_{S})}{\epsilon \log^{2}(\epsilon^{-1})})-o(1).
\end{equation}
Where $r_{2}(V_{S}),\sigma(V_{S}),\sigma'(V_{S})$ only depend on the gate and the spectrum of $H_{S}$, $r_{2}(V_{S})\in \{1,.. d_{S}^{2}-1\}$, and $\sigma(V_{S}),\sigma'(V_{S})\geq 0$ with equality iff $[V_{S},H_{S}]=0$.

(iii) For a qubit system, with $p := |\bra{0}V_S\ket{1}|^2$,

the above bound holds with $r_2(V_S)=2$ \footnote{ However, it is not true that for a 2-dimensional system the quantity defined in \ref{app: defn r lambda new} can be equal to 2, it just happens that the bound can be proven independently with that coefficient},  $\log(\sigma(V_S))= 4 \log(p) +{\it cst}$ and  $\sigma'(V_S)= 2 \log(p)+ {\it cst}$.
\end{result}

Here, $o(1)\to 0$ in the limit $\epsilon\to 0$,  $r_{2}(V_{S})$, defined in  Def.~\ref{app: defn r lambda new} in App.~\ref{App:BoundingProduction}, is connected to the incommensurability rank of the system's energy spectrum $\mathcal{E}[H_{S}]$ \cite{OurIIDEntropy}, 
and $\sigma(V_{S}),\sigma'(V_{S})>0$, defined in Eqs.~(\ref{def:sigma(V_{S})},\ref{def:sigma'(V_{S})}), quantify the amount of asymmetry of $V_{S}$ with respect to $H_{S}$.

The result relies on the facts that $S$ has finite dimension $d_{S}$ and $\mathcal{V}_S$ unitary. Furthermore, the bounds depend on the gate $\mathcal{V}_S$ and the spectrum of the system Hamiltonian $H_S$. In App.~\ref{App:BoundingProduction} we prove that if $H_{S},V_{S}$ are sampled randomly  
, then $r_{2}(V_{S})\geq d_{S}-1$ with probability 1. 

{\it Proof sketch of the main result.---}
Inspired by ref. \cite{Chiribella}, we consider $2m$ copies $S_iA_i$ of the system $SA$ with total Hamiltonian $H_{\bm S \bm A}=\underset{i=1}{\overset{2m}{\sum}} H_{S_i}+ H_{A_i}$. The introduced auxiliary systems $A_{i}$ are copies of $S$ and never interact with the battery, their role will be clarified later. By hypothesis, there exists a {\tt TEP} channel $\tilde{\mathcal{V}}_{S}(\cdot):=tr[\mathcal{U}_{SB}(\cdot \otimes \beta_{B})]$ such that $\epsilon_{wc}(\mathcal{V}_{S},\tilde{\mathcal{V}}_{S})\leq \epsilon$. Let the $S_{i}$ systems sequentially interact with a single battery $B$ in state $\beta_{B}$ via $~\mathcal{U}_{S_i B}$ (for odd $i$) and the adjoint unitary $\mathcal{U}^{*}_{S_i B}$ (for even $i$). Since the battery-systems interaction is energy preserving, and the EC satisfies (C1) and (C3), the extra entropic coherence in $\bm{SA}$ must be provided by the battery, formally for any initial state $\rho_{\bm S \bm A}=\rho_{SA}^{\otimes 2m}$, we have
\begin{align}
C(\beta_B,H_{B}) &\geq  
C\big(\tnu_{\bm S\bm A}, H_{\bm S \bm A}\big) -C \big(\nu_{\bm S\bm A}, H_{\bm S \bm A}\big) \nonumber\\
&+ C\big(\nu_{\bm S\bm A}, H_{\bm S \bm A}\big) 
 - C\big(\rho_{\bm S\bm A}, H_{\bm S\bm A} \big),  \label{eq: main sketch 2}
\end{align}
where $\nu_{\bm S\bm A} = \big((\mathcal{V}_{S}\otimes \mathcal{V}^{*}_{S})^{\otimes m} \otimes \text{id}_{\bm A}\big)[\rho_{\bm S\bm A}]$ is the ``ideal'' final state, and $\tnu_{\bm S\bm A}$ (resulting from the action of $\mathcal{U}_{S_i B}$ and $\mathcal{U}^{*}_{S_i B}$) is guaranteed to be close to the latter $D(\tnu_{\bm S\bm A},\nu_{\bm S\bm A})\leq 4m\sqrt{\epsilon}$~\cite{Chiribella}.

 Combing this inequality for $D(\tnu_{\bm S\bm A},\nu_{\bm S\bm A})$  with a refinement of property (C4), based on entropy continuity~\cite{EntropyContinuity,EntropyContinuity2024}, we bound the first line in Eq.~\eqref{eq: main sketch 2}
\begin{align} 
    C \big(&\nu_{\bm S\bm A}, H_{\bm S \bm A}\big) - C\big(\tnu_{\bm S\bm A}, H_{\bm S \bm A}\big) \leq \nonumber \\
   & 16 m^{2}\sqrt{\epsilon}(d_{S}-1) \log\left( d_{S} \right)+2 \,h_2(4m\sqrt{\epsilon})-o(1),
   \label{Eq:RegMaintext}
\end{align}
where $h_2$ is the binary entropy. For \textit{proportionate} batteries this inequality can be straightened to a linear function of $m$, which explains the difference between the two claims of the main result.

To bound the second line in Eq.~\eqref{eq: main sketch 2}, we chose $\rho_{\bm S \bm A}$ to be pure and eigenstate of $H_{\bm S \bm A}$, such that $C\big(\rho_{\bm S\bm A}, H_{\bm S\bm A} \big)=0$. Then, we identify the quantity $C\big(\nu_{\bm S\bm A}, H_{\bm S \bm A}\big)$ with the entropy of a sum of $2m$ i.i.d. discrete random variables. A lower bound on this entropy has recently been derived in ~\cite{OurIIDEntropy}, and implies the following bound
\begin{align}
C\big(\nu_{\bm S\bm A}, &H_{\bm S \bm A}\big)
 - C\big(\rho_{\bm S\bm A}, H_{\bm S\bm A} \big)  \geq \nonumber \\  
& \frac{r_{2}(V_{S})}{2}\log\big(2 \pi e \; m \; g(V_{S})\big)-o(1),
\label{eq: proof sketch prod}
\end{align}
where $g(V_{S})$ quantifies the asymmetry of the gate and enters in both $\sigma(V_S)$ and $\sigma'(V_S)$ later.

For a qubit system we proceed differently and bound Eq.~\eqref{eq: main sketch 2} with an explicit construction of the initial state $\rho_{\bm S \bm A}$ which is not a product of $2m$ identical states. This yields a tighter but less general bound.

The last step is to combine the ineqs.~\eqref{eq: proof sketch prod}
with \eqref{Eq:RegMaintext} (or a stronger version for \textit{proportionate} families and/or qubit gates) to bound the entropic coherence of the battery $C(\beta_B,H_{B}) $ via Eq.~\eqref{eq: main sketch 2}. Optimizing over the number of copies $m$ gives the final bounds $\square.$

Now, we discuss some implications of the result.\\
{\it Vanishing error requires unbounded battery. --}
The first corollary, given below, is that any finite-dimensional battery is bound to a certain minimal implementation error. To the best of our knowledge, this is the first demonstration of this important physical insight. Additionally, it proves that for a qubit the NEPG implementation scheme with $O(\epsilon^{-\frac{1}{2}})$ battery levels, presented in \cite{UDpaper}, is tight in the exponent.

\begin{cor}[Vanishing error requires unbounded battery]
\label{cor: cor}
     Let $\{H_B^{(\epsilon)},\beta_B^{(\epsilon)}\}_{\epsilon}$ be a suitable batteries family for the NEPG $\mathcal{V}_S $, then 
    \begin{equation}
       {\rm dim}[H_{B}^{(\epsilon)}] \geq \left(\frac{1-o(1)}{\log(\epsilon^{-1})}\right)^{\frac{r_{2}(V_{S})}{2}} \left(\frac{\sigma'(V_{S})}{\epsilon}\right)^{\frac{r_{2}(V_{S})}{4}} \to \infty
    \end{equation}
    in the limit $\epsilon \to 0$.
\end{cor}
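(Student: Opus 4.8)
The plan is to convert the entropic-coherence lower bound of the Main Result into a dimension bound, using the elementary fact that entropic coherence cannot exceed the logarithm of the dimension. First I would record the bridging inequality: since the von Neumann entropy is non-negative and bounded above by the logarithm of the dimension,
\begin{equation}\label{eq: EC leq log dim}
\rec{\beta_B^{(\epsilon)}}{H_B^{(\epsilon)}} \leq S\big(\mathcal{G}_{H_B^{(\epsilon)}}[\beta_B^{(\epsilon)}]\big) \leq \log\big({\rm dim}[H_B^{(\epsilon)}]\big),
\end{equation}
where the first step drops the term $-S(\beta_B^{(\epsilon)})\leq 0$ in Eq.~\eqref{eq: ECE} and the second uses $S(\cdot)\leq \log{\rm dim}$. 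This turns any lower bound on the EC into a lower bound on $\log{\rm dim}[H_B^{(\epsilon)}]$, hence on the dimension itself.

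The result then follows from a dichotomy on whether the suitable family is \emph{proportionate}. If $\{H_B^{(\epsilon)},\beta_B^{(\epsilon)}\}_\epsilon$ is proportionate, I would insert the bound of Eq.~\eqref{Eq:RECforAbelian2} into \eqref{eq: EC leq log dim} and exponentiate, obtaining
\begin{equation}\label{eq: dim prop case}
{\rm dim}[H_B^{(\epsilon)}] \geq e^{-o(1)}\Big(\tfrac{\sigma'(V_S)}{\epsilon\,\log^{2}(\epsilon^{-1})}\Big)^{r_2(V_S)/4}.
\end{equation}
Rewriting $(\log^{2}(\epsilon^{-1}))^{-r_2/4}=(\log(\epsilon^{-1}))^{-r_2/2}$ and absorbing $e^{-o(1)}=1-o(1)$ (noting that $(1-o(1))^{2/r_2}=1-o(1)$ for fixed integer $r_2\geq 1$) reproduces exactly the claimed inequality.

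If instead the family is \emph{not} proportionate, then by Def.~\ref{Def:Proportionate} the quantity $\mathcal{N}(2E_{max}(\beta_B^{(\epsilon)}),H_B^{(\epsilon)})$ grows faster than any polynomial in $\epsilon^{-1}$; since it is the dimension of an energy subspace it satisfies $\mathcal{N}(2E_{max},H_B^{(\epsilon)})\leq {\rm dim}[H_B^{(\epsilon)}]$, so the dimension is itself super-polynomial in $\epsilon^{-1}$ and in particular exceeds the merely polynomial right-hand side of the corollary for small $\epsilon$. In both branches the stated lower bound holds, and because $r_2(V_S)\geq 1$ the polynomial growth $(\sigma'(V_S)/\epsilon)^{r_2/4}$ dominates the poly-logarithmic suppression $(\log(\epsilon^{-1}))^{-r_2/2}$, forcing the right-hand side—and hence ${\rm dim}[H_B^{(\epsilon)}]$—to diverge as $\epsilon\to 0$.

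I expect the main obstacle to lie entirely in the non-proportionate branch. The negation of the polynomial bound in Def.~\ref{Def:Proportionate} only guarantees super-polynomial growth along a subsequence of $\epsilon$, so the conclusion must be phrased asymptotically (as a statement holding in the limit $\epsilon\to 0$) rather than for every individual $\epsilon$; on the complementary subsequence, where a fixed polynomial bound on $\mathcal{N}(2E_{max},\cdot)$ does hold, one still applies the proportionate estimate \eqref{Eq:RECforAbelian2} pointwise. Reconciling these two subsequences into the single clean divergence statement of the corollary is the one place where genuine care, rather than routine manipulation, is required; everything else reduces to the crude dimension bound \eqref{eq: EC leq log dim} and the already-established Main Result.
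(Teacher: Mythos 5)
Your proof follows the same route as the paper's: the dichotomy on whether the family is \emph{proportionate}, the bridging inequality $C(\beta_B^{(\epsilon)},H_B^{(\epsilon)})\le \log\big({\rm dim}[H_B^{(\epsilon)}]\big)$, and exponentiation of the proportionate bound \eqref{Eq:RECforAbelian2}. Your write-up is in fact more careful than the paper's one-line proof, which cites Eq.~\eqref{Eq:RECforAbelian} where the exponents and the appearance of $\sigma'$ show that Eq.~\eqref{Eq:RECforAbelian2} is meant, states the dimension bound as $C\le 2^{{\rm dim}}$ where ${\rm dim}\ge 2^{C}$ is needed, and does not acknowledge the subsequence subtlety in the non-proportionate branch that you correctly flag.
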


\begin{proof}
If $\{H_B^{(\epsilon)},\beta_B^{(\epsilon)}\}_{\epsilon}$ is not \textit{proportionate}, ${\rm dim}[H_{B}^{(\epsilon)}]$ must grow exponentially with $\epsilon^{-1}$ and the bound is true. Otherwise, we obtain the corollary by combining Eq.~\eqref{Eq:RECforAbelian} with the following immediate upper bound on the coherent entropy of energy
$ \rec{\beta_B^{(\epsilon)}}{H_B^{(\epsilon)}}\leq 2^{ {\rm dim}[H_{B}^{(\epsilon)}]}$
implied by the dimension of the state.
\end{proof}

{\it Implications for other resources.---}
Without any assumption on the Hamiltonian of the system the entropic coherence gives no informations on the content of other quantities studied for our task, like energy or quantum Fisher information. However, as we now demonstrate, once the spectrum of $H_B$ is fixed, the presence of entropic coherence in the battery implies strong bound on the presence of other resources. 

In particular, we show in App.~\ref{app: corrs} that for any system with $C(\beta_B,H_B)\geq C$, the average energy satisfy
\begin{align}
    \tr(H_B \beta_B) &\geq \frac{\tr(H_B e^{-\gamma_{C} H_B})}{\tr(e^{-\gamma_{C} H_B})} \label{Eq:minenergy}
\end{align}
where the Lagrangian multiplier $\gamma_{C}$ is determined by the equation 
$  S\left(\frac{ e^{-\gamma_{C} H_B}}{\tr e^{-\gamma_{C} H_B}} \right)=C$~\footnote{Since $S\left(\frac{ e^{-\gamma H}}{\tr e^{-\gamma H}} \right)$ is strictly decreasing in $\gamma$, the solution is always unique.}.

Once  the spectrum of $H_B$ is fixed the rhs of Eqs.~(\ref{Eq:minenergy}) can be easily computed. 
For illustration, we now consider the example of an harmonic oscillator. In this case Eq.~\eqref{Eq:minenergy} implies $\tr [ \beta_B H_{B}]\geq \omega \, (2^{C(\beta,H_{B})-1}-1)+o(1)$ in the large $C(\beta_B,H_{B})$ limit. 
In fact, for any system whose spectral volume grows linearly with the energy
\begin{equation}\label{eq: linear in E}
    \mathcal{N}(H_B^{(\epsilon)},E)\leq 1+\eta E, 
\end{equation}
this simple argument implies the following corollary of the main result (see  App.~\ref{app: corrs} for derivation).

\begin{cor}[Energy constraint to implement NEPGs]\label{Cor:Energy}
Let $\{H_B^{(\epsilon)},\beta_B^{(\epsilon)}\}_{\epsilon}$ be suitable for the NEPG $\mathcal{V}_S$, and $\mathcal{N}(H_B^{(\epsilon)},E)\leq 1+\eta E$ then 
\begin{equation} \label{eq: en lb}
  \tr[\beta_B^{(\epsilon)}H_B^{(\epsilon)}] 
   \geq  \left[\frac{\sigma(V_{S},H_{S})}{2 \eta}+o(1)\right]\epsilon^{-\frac{r_{2}(V_{S},H_{S})}{8}} .
\end{equation}
\end{cor}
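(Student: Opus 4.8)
The plan is to combine the main result's entropic-coherence lower bound (Result~\ref{Th:main}(i)) with the entropy-to-energy conversion encapsulated in Eq.~\eqref{Eq:minenergy}, then specialize the right-hand side of the latter to a battery whose spectral counting function grows at most linearly, $\mathcal{N}(H_B^{(\epsilon)},E)\leq 1+\eta E$. First I would recall that Eq.~\eqref{Eq:RECforAbelian} gives $C(\beta_B^{(\epsilon)},H_B^{(\epsilon)}) \geq \tfrac{r_2(V_S)}{8}\log(\sigma(V_S)/\epsilon)-o(1) =: C^{(\epsilon)}$, which diverges as $\epsilon\to 0$. Plugging $C=C^{(\epsilon)}$ into the energy bound Eq.~\eqref{Eq:minenergy} reduces the problem to estimating the thermal energy $\tr(H_B e^{-\gamma_C H_B})/\tr(e^{-\gamma_C H_B})$ of the Gibbs state whose entropy equals $C^{(\epsilon)}$, expressed via the implicitly defined multiplier $\gamma_C$.

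The core computation is to control how $\gamma_C$ and the associated thermal energy behave for large $C$ under the linear density assumption. I would first translate the constraint $\mathcal{N}(H_B,E)\leq 1+\eta E$ into a bound on the partition function: the linear growth of the number of levels below energy $E$ means the density of states is at most $\eta$, so $Z(\gamma)=\tr(e^{-\gamma H_B})\leq \int_0^\infty e^{-\gamma E}\,\eta\, dE = \eta/\gamma$ (up to boundary contributions from the ground level), and correspondingly the entropy of the Gibbs state is bounded as $S(\gamma)\leq \log(\eta/\gamma)+1+o(1)$ in the small-$\gamma$ regime. Inverting the defining relation $S(\gamma_C)=C$ then forces $\gamma_C \leq \eta\, e^{1-C}(1+o(1))$, i.e. $\gamma_C$ must be exponentially small in $C$. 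Feeding this into the thermal-energy expression, which for a density-$\eta$ spectrum scales like $\tr(H_B\,\rho_{\mathrm{th}})\sim 1/\gamma_C$, yields $\tr(H_B\beta_B)\gtrsim \tfrac{1}{\eta} e^{\,C-1}(1+o(1))$, i.e. the energy grows exponentially in the available coherence.

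The final step is purely substitutional: inserting $C^{(\epsilon)}=\tfrac{r_2(V_S)}{8}\log(\sigma(V_S)/\epsilon)$ into the exponential $e^{C^{(\epsilon)}}$ turns the logarithm back into a power law, giving $e^{C^{(\epsilon)}} = (\sigma(V_S)/\epsilon)^{r_2(V_S)/8}$, and hence $\tr[\beta_B^{(\epsilon)}H_B^{(\epsilon)}] \geq [\tfrac{\sigma(V_S)}{2\eta}+o(1)]\,\epsilon^{-r_2(V_S)/8}$, matching Eq.~\eqref{eq: en lb} (the factor $\tfrac12$ and the $\sigma(V_S)$ inside the bracket absorbing the $e^{-1}$ and normalization constants from the thermal integral). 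The main obstacle I anticipate is making the asymptotic estimates of $\gamma_C$ and of the thermal energy rigorous and uniform: the bound $Z(\gamma)\leq\eta/\gamma$ and the inversion $S(\gamma_C)=C$ rely on the continuum approximation to the discrete spectral sum, so the $o(1)$ corrections must be shown to be controlled by the deviation of the true (discrete) counting function from its linear envelope, and one must verify that the \emph{minimizing} Gibbs state in Eq.~\eqref{Eq:minenergy}—which is a lower bound on the true energy—still inherits the density constraint cleanly. Handling the ground-energy offset and ensuring monotonicity of $S(\gamma)$ (already noted in the footnote) are the technical points to nail down, but the structure of the argument is otherwise a direct chaining of the two established bounds.
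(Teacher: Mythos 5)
Your overall chaining is the same as the paper's: combine the main result (i) with the entropy-to-energy conversion \eqref{Eq:minenergy}, exploit the linear counting assumption to turn the coherence lower bound $C^{(\epsilon)}=\tfrac{r_2}{8}\log(\sigma/\epsilon)-o(1)$ into an exponentially large energy, and substitute to get the power law in $\epsilon^{-1}$. The final substitution step is identical to the paper's. However, the middle step contains a genuine gap. You argue that $\mathcal{N}(H_B,E)\le 1+\eta E$ gives $Z(\gamma)\lesssim \eta/\gamma$, hence $S(\gamma)\le \log(\eta/\gamma)+1+o(1)$, hence $\gamma_C\le \eta\, e^{1-C}$, and then conclude that the thermal energy ``scales like $1/\gamma_C$'' and is therefore at least of order $e^{C-1}/\eta$. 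That last inference is false for general spectra satisfying the counting bound: for a Hamiltonian with finitely many levels (e.g.\ a truncated harmonic oscillator of frequency $\eta^{-1}$ with $N+1$ levels, which lies in $\mathcal{L}_\eta$), the Gibbs energy stays bounded by $N/(2\eta)$ as $\gamma\to 0$ while $1/\gamma\to\infty$; the relation $E(\gamma)\gtrsim 1/\gamma$ simply does not hold, so exponential smallness of $\gamma_C$ does not by itself yield an exponentially large energy. What must be proved is a direct relation between the Gibbs \emph{entropy} and the Gibbs \emph{energy} under the counting constraint, not a detour through the temperature.

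The paper closes exactly this gap with an extremality argument. First (Lemma \ref{lem:Pure>mixed} together with the classical maximum-entropy characterization) it reduces the minimization of $\tr[H_B\beta_B]$ at fixed $C(\beta_B,H_B)\ge C$ to the Gibbs form \eqref{Eq:minenergy}. Then it shows that among all Hamiltonians in $\mathcal{L}_\eta=\{H:\mathcal{N}(H,E)\le 1+\eta E\}$ the harmonic oscillator of frequency $\eta^{-1}$ uniquely minimizes the energy at fixed entropic coherence (any other member of $\mathcal{L}_\eta$ admits a downward level shift that lowers the energy without decreasing the coherence), and finally computes $\gamma_C$ and the thermal energy exactly for the harmonic oscillator, obtaining $\eta^{-1}(2^{C-1}-1)+O(1)$. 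To repair your route you would need to replace the $1/\gamma_C$ heuristic by either this extremality argument or a direct bound relating the entropy of the energy distribution to its mean via the counting function (e.g.\ a Markov-plus-counting estimate of the form $S\le \log \mathcal{N}(K\bar E)+\dots$); as written, the step does not go through.
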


This corollary is particularly remarkable for systems and gates that operate on non-harmonic energy levels, i.e. such that $r_{2}(V_{S},H_S)>2$ (or $r_{2}(V_{S},H_S)>4$ for a non-proportionate battery). In this case the bound~\eqref{eq: en lb} is stronger in the scaling than all previously known results~\cite{Chiribella, UDpaper}.

We leave open the question whether the extra energy cost required by 
corollary \ref{Cor:Energy} can be avoided with  a battery not satisfying the spectral volume constraint~\eqref{eq: linear in E}. Nevertheless, we speculate
that for $r_{2}(V_{S},H_{S})\geq 2$, a more energy efficient battery system would be composed of multiple harmonic oscillators with frequencies resonant with each transition.

A similar argument holds for the variance and its convex roof, i.e. the Quantum Fischer Information (QFI)~\cite{QFI-ConvexRoof}. In App. \ref{app: corrs} we prove the following
\begin{cor}[QFI constraits to implement NEPGs]\label{Cor:QFI}
Let $H_{B}$ be a harmonic oscillator of frequency $\omega$ and $\{H_B,\beta_B^{(\epsilon)}\}_{\epsilon}$ be suitable for the NEPG $\mathcal{V}_S $, then 
\begin{equation} \label{eq: QFI bound corr}
  {\rm QFI}(\beta,H_{B})\geq \left(\frac{\omega^{2}\sigma(V_{S},H_{S})^2}{e\pi}+o(1)\right)(\epsilon d_{S})^{-\frac{r_{2}(V_{S},H_{S})}{4}}.
\end{equation}
\end{cor}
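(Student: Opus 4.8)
The plan is to exploit the fact that, for a harmonic oscillator $H_B=\omega\,a^\dagger a$, the entropic coherence controls the energy fluctuations, and then to feed in the lower bound on $\rec{\beta_B^{(\epsilon)}}{H_B}$ supplied by the Main Result. The oscillator spectrum is non-degenerate and equally spaced, so $\mathcal{G}_{H_B}$ is the full dephasing in the Fock basis and the energy variance is rigidly tied to the number variance, $\mathrm{Var}_\rho(H_B)=\omega^2\,\mathrm{Var}_\rho(N)$. This rigidity is exactly what lets a dimensionless (entropic) statement about $\rec{\beta_B}{H_B}$ be promoted to a statement about the energetic fluctuation measured by the QFI, and it is why the harmonic assumption cannot be dropped from this argument.

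First I would write the QFI through its convex-roof representation~\cite{QFI-ConvexRoof}, $\mathrm{QFI}(\beta_B,H_B)=4\min_{\{p_i,\psi_i\}}\sum_i p_i\,\mathrm{Var}_{\psi_i}(H_B)$, and fix $\{p_i,\psi_i\}$ to an optimal pure-state decomposition. For each pure $\psi_i$ one has $S(\ket{\psi_i}\bra{\psi_i})=0$, so $\rec{\psi_i}{H_B}=S(\mathcal{G}_{H_B}[\ket{\psi_i}\bra{\psi_i}])$ is precisely the Shannon entropy of the occupation distribution $q^{(i)}_n=|\langle n|\psi_i\rangle|^2$, while $\mathrm{Var}_{\psi_i}(H_B)=\omega^2\,\mathrm{Var}(q^{(i)})$. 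The central classical tool is the integer entropy--variance uncertainty relation: smearing an integer-valued variable with an independent uniform variable on $[-\tfrac12,\tfrac12]$ and invoking the Gaussian maximum-entropy bound gives $S(q^{(i)})\le\tfrac12\log(2\pi e(\mathrm{Var}(q^{(i)})+\tfrac1{12}))$, which I invert to $\mathrm{Var}(q^{(i)})\ge\tfrac{1}{2\pi e}2^{2\rec{\psi_i}{H_B}}-\tfrac1{12}$.

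It then remains to combine these per-state estimates and to eliminate the auxiliary decomposition. Averaging over $i$ with weights $p_i$, applying Jensen to the convex map $x\mapsto2^{2x}$, and using that the relative entropy of coherence $\rec{\rho}{H_B}=S(\rho\|\mathcal{G}_{H_B}[\rho])$ is convex, so $\rec{\beta_B}{H_B}\le\sum_i p_i\rec{\psi_i}{H_B}$, produces the decomposition-free bound $\mathrm{QFI}(\beta_B,H_B)\ge\tfrac{2\omega^2}{\pi e}2^{2\rec{\beta_B}{H_B}}-\tfrac{\omega^2}{3}$. Substituting the Main Result, part~(i), $\rec{\beta_B^{(\epsilon)}}{H_B}\ge\tfrac{r_2(V_S)}{8}\log(\sigma(V_S)/\epsilon)-o(1)$ (valid for every suitable family, so that no proportionality assumption is needed) yields $2^{2\rec{\beta_B^{(\epsilon)}}{H_B}}\ge(1-o(1))(\sigma(V_S)/\epsilon)^{r_2(V_S)/4}$ and hence the announced scaling $\mathrm{QFI}\gtrsim\epsilon^{-r_2(V_S)/4}$; the exact prefactor and the residual $d_S$-dependence follow upon inserting the explicit definition of $\sigma(V_S,H_S)$ from the main proof. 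I expect the main obstacle to be this last combination step: one must verify that the convex-roof minimization of the QFI (which drives the $\mathrm{Var}_{\psi_i}$ down) and the convexity of $\rec{\cdot}{\cdot}$ (which bounds $\sum_i p_i\rec{\psi_i}{H_B}$ from below by $\rec{\beta_B}{H_B}$) pull in compatible directions, so that the pure-state entropy--variance estimates survive the mixing into $\beta_B$ without loss, together with tracking the constant $\tfrac{1}{2\pi e}$ and all $o(1)$ remainders through the substitution.
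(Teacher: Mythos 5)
Your argument is correct and establishes the claimed $\epsilon^{-r_2(V_S,H_S)/4}$ scaling, but it reaches the QFI by a genuinely different route than the paper. The paper first proves that at fixed energy distribution pure states have the largest entropic coherence (via strong subadditivity), solves the classical minimum-variance-at-fixed-entropy problem to bound $\mathrm{Var}(\beta_B,H_B)$, and then — to pass from the variance of $\beta_B$ to the QFI — introduces the Choi infidelity, exploits its linearity in the battery state to write $\epsilon_C=\sum_j p_j\,\epsilon_C^{(j)}$ over the optimal convex-roof ensemble, applies the variance-versus-error bound to \emph{each} pure component $\ket{j^{(\epsilon)}}$, and recombines with Jensen applied to $y\mapsto y^{-1/r}$; this is needed because the paper's per-component estimate is phrased in terms of the (unknown) individual gate error $\epsilon_C^{(j)}$. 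You sidestep the Choi machinery entirely: your per-component estimate $\mathrm{Var}_{\psi_i}(H_B)\geq \tfrac{\omega^2}{2\pi e}2^{2\rec{\psi_i}{H_B}}-\tfrac{\omega^2}{12}$ is a state property (the integer entropy–variance inequality in the Fock basis), valid whether or not $\psi_i$ is itself a good battery, and you descend to the mixture using only Jensen for $x\mapsto 2^{2x}$ together with convexity of $\rec{\cdot}{H_B}$ (which indeed follows from joint convexity of the relative entropy and convexity of the incoherent set, so $\rec{\beta_B}{H_B}\le\sum_i p_i\rec{\psi_i}{H_B}$); only then do you invoke the Main Result once, for the actual suitable battery $\beta_B^{(\epsilon)}$. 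All inequalities point the right way, so the combination step you flagged as the main obstacle does go through. What your route buys is a shorter, more self-contained argument that avoids both Lemma~\ref{lem:Pure>mixed} and the Choi-infidelity detour, and consequently does not pick up the $d_S^{-r_2/4}$ penalty (your bound is, if anything, slightly stronger in that respect); what the paper's route buys is the standalone variance corollary and explicit constants tied to $\sigma(V_S,H_S)$. Your prefactor $\tfrac{2\omega^2}{\pi e}\,\sigma(V_S,H_S)^{r_2/4}$ does not literally coincide with the displayed $\tfrac{\omega^2\sigma(V_S,H_S)^2}{e\pi}$, but the scaling in $\epsilon$ — the substantive content of the corollary — is reproduced exactly.
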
 
We believe that this result can be generalized up to a constant to all batteries whose energies do not concentrate $\mathcal{N}(E+ \omega, H_{B})-\mathcal{N}(E, H_{B})\leq \eta  \omega$. In \cite{QFIBound}, it  was shown that there always exist batteries $\{\beta^{\epsilon},H_{B}^{\epsilon}\}$ suitable for $V_{S}$ with $ {\rm QFI}(\beta,H_{B})\simeq \epsilon^{-1}$. Thus, the corollary demonstrates that for systems with $r_{2}(V_{S},H_{S})>2$ (or $r_{2}(V_{S},H_{S})>4 $ in the non-proportionate case) harmonic oscillators make very sub-optimal batteries.

{\it Conclusions.---}  

 We investigated the role of entropic coherence of energy as a fundamental resource for the implementation of non-energy-preserving gates (NEPGs). 
We established that, to implement any NEPG with accuracy $\epsilon$, a minimum amount of entropic coherence scaling as $r\log(\epsilon^{-1})$ is required in the battery, where $r \geq 1/8$ depends on the specific system and gate. Under mild assumptions on the battery, this bound is strengthened by a factor of two and is tight in the case of qubits.

We then examine the implications of this bound. First, we show that the dimension of the battery system must scale at least as $d_B \sim \epsilon^{-\frac{r}{4}}$, thus diverging to enable perfect implementation. Furthermore, we demonstrated that the battery's energy and Quantum Fisher Information (QFI) must scale similarly, provided the Hamiltonian has a bounded density of energy levels. These bounds can be stricter than those derived independently in \cite{Chiribella,QFIBound}, suggesting a hierarchy among physical resources. Notably, constraints on energy and QFI do not imply corresponding limits on EC, since, by rescaling the battery's Hamiltonian, both energy and QFI can become arbitrarily large while EC remains unchanged.
Finally, our results offer criteria for evaluating the efficiency of batteries in implementing NEPGs. In particular, for systems with multiple incommensurable energy gaps, batteries with a bounded energy level density are shown to be highly inefficient. We speculate that a battery composed of multiple harmonic oscillators—one for each energy gap—could instead prove to be efficient.

Future developments could aim to tighten the inequality by a factor of two for arbitrary systems, by extending the strategy used in the qubit case, namely selecting an appropriately entangled initial state during the resource production stage of the proof. Another promising direction is to derive upper bounds on the EC cost by explicitly constructing battery states and interactions that implement non-energy-preserving gates beyond the already explored case of equally spaced energy levels \cite{CataliticCoherence,Chiribella,UDpaper}.

{\it Acknowledgments :--}
RC and PS acknowledge support from  Swiss National Science Foundation (NCCR SwissMAP).
VC and VG acknowledge financial support by MUR (Ministero dell’ Università e della Ricerca) through the PNRR MUR project PE0000023-NQSTI. M.P.-L. acknowledges support from the Grant RYC2022-036958-I funded by MICIU/AEI/10.13039/501100011033 and by ESF+.

\bibliography{biblio}

\clearpage

\appendix

\begin{widetext}

{\color{black}
\tableofcontents
}

\vspace{10 mm}

\section{Channel distances and their relationships} \label{Sec:ChannelDist}

To evaluate how successful the setup in Fig. \ref{fig:1} was we need a way to measure how much the maps $\widetilde{\mathcal V}_S \in {\tt TEP}(H_{B},\beta_B)$ and $\mathcal{V}_S$ are close the one to the other. 
This can be done by using several {channel distances} between two quantum channels $\cE_1$ and $\cE_2$ acting on the system $X$. A commonly used distance is the {\it diamond norm}
\begin{equation} \label{Def:DiamondNorm}
    \lVert\cE_1- \cE_2 \rVert_{\diamond}:=\underset{\rho_{SA}\in\mathcal{D}(\mathcal{H}_{SA}) }{\text{sup}} D\Big( (\cE_1\otimes \text{id}_A)[{\rho}_{SA}],(\cE_2\otimes \text{id}_A)[{\rho}_{SA}] \Big),
\end{equation}
where the channels are trivially extended to act on an auxiliary system $A$ (that can be taken of the same dimension as $X$ without loss of generality), the supremum is taken over all states $\rho_{XA}$ of the composed systems $XA$, and $D(\rho,\sigma) = \frac{1}{2} \tr |\rho-\sigma|$ is the trace distance. The diamond norm has the operational meaning of the best single-shot distinguishability between the two channels.

Similarly, the notion of fidelity between quantum states can be generalized to channels by introducing the {\it worst-case fidelity}
\begin{equation} \label{eq:WCinfdef}
    F_{wc}(\cE_1,\cE_2):=\underset{\rho_{SA}\in\mathcal{D}(\mathcal{H}_{SA}) }{\text{inf}} F\Big( (\cE_1\otimes \text{id}_A)[{\rho}_{SA}],(\cE_2\otimes \text{id}_A)[{\rho}_{SA}] \Big),
\end{equation} 
where use the definition of fidelity with the square $F(\rho,\sigma) = \left(\tr |\sqrt{\rho}\sqrt{\sigma}|\right)^2$.
Using $F_{wc}$ we can define the {\it worst-case infidelity}
\begin{equation} \label{Def:WCInfidelity}
    \epsilon_{wc}(\cE_1,\cE_2):=1- F_{wc}(\cE_1,\cE_2).
\end{equation}
Notice that $\epsilon_{wc}(\cE_1,\cE_2)$ is not a distance because it fails to satisfy the triangle inequality,
however one can verify that the worst-case angle $A_{wc}(\cE_1,\cE_2):=\arccos(\sqrt{F_{wc}(\cE_1,\cE_2)})$ is instead a distance.
The worst-case infidelity is the error quantifier used in our definition of the gate approximation task. It is related to the diamond norm by the following inequalities
\begin{equation}
 1- \sqrt{1-\epsilon_{wc}(\cE_1,\cE_2)} \leq \lVert  \cE_1-\cE_2 \rVert_{\diamond}\leq \sqrt{\epsilon_{wc}(\cE_1,\cE_2)},
\end{equation}
which is a direct consequence of the Fuchs Van-der Graaf inequality  $ 1-\sqrt{F({\rho},{\tau})}\leq D({\rho},{\tau})\leq \sqrt{1-F({\rho},{\tau})}$ \cite{FuchsInequality}.

\section{The entropic coherence is a resource for the gate implementation}
\label{app:RECprop}
We start by introducing the set of incoherent states.
Let $A,B$ be two system with respective Hamiltonians $H_{A},H_{B}$, the incoherent states are defined as 
\begin{align} \label{eq:IncoherentSet}
    \Delta_{AB} & :=\{{\rho}_{AB}: C({\rho}_{AB},{H}_{A}+{H}_{B})=0\};\\
    \Delta_{B}& :=\{{\rho}_{B}: C({\rho}_{B},{H}_{B})=0\}. \notag
\end{align}
It is immediate to verify that every incoherent state of a given Hamiltonian $H$ is a fixed point of the corresponding twirling channel $\mathcal{G}_{H}$ i.e., if $\rho \in \Delta$ then  $\mathcal{G}_{H}(\rho)=\rho$, and that $\mathcal{G}_{H}(\sigma) \in \Delta \quad \forall \sigma \in \mathcal{D}[\mathcal{H}] $, making it a projector into the incoherent states set.
It is also evident that $\rho$ is  an incoherent states if and only if it does not evolve under the action of $e^{-i Ht}$, i.e. $\rho\in \Delta \Longleftrightarrow \rho= e^{-iHt} \rho e^{iHt}$ for all $t$.
Further, jointly incoherent states are locally incoherent:
\begin{lem}\label{lem:incoherent}
    Let $\Delta_{AB}$, $\Delta_{B}$ be the sets of incoherent states as \eqref{eq:IncoherentSet}, then 
    $Tr_{A}[\Delta_{BA}]=\Delta_{B}$.
\end{lem}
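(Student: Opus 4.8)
The plan is to prove both inclusions $\tr_A[\Delta_{BA}]\subseteq \Delta_B$ and $\Delta_B \subseteq \tr_A[\Delta_{BA}]$. I would phrase the entire argument through the dynamical characterization of incoherence established just above the lemma, namely that $\rho\in\Delta$ if and only if $\rho$ is invariant under the Heisenberg flow $e^{-\ii H t}\bullet e^{\ii H t}$ for all $t$. This reformulation is the cleanest tool because it linearizes the incoherence condition and interacts transparently with the partial trace.

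For the inclusion $\tr_A[\Delta_{BA}]\subseteq \Delta_B$, I would take any $\rho_{AB}\in\Delta_{BA}$, so that $e^{-\ii(H_A+H_B)t}\rho_{AB}\,e^{\ii(H_A+H_B)t}=\rho_{AB}$ for all $t$. Since the total evolution factorizes as $e^{-\ii H_A t}\otimes e^{-\ii H_B t}$, I would trace out $A$ on both sides; using cyclicity of the partial trace over $A$ together with the fact that $e^{-\ii H_A t}$ acts only on $A$, the $A$-evolution cancels inside the trace and one is left with $e^{-\ii H_B t}\rho_B\, e^{\ii H_B t}=\rho_B$ for all $t$, where $\rho_B=\tr_A\rho_{AB}$. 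By the dynamical characterization this says exactly $\rho_B\in\Delta_B$, giving the forward inclusion.

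For the reverse inclusion $\Delta_B\subseteq \tr_A[\Delta_{BA}]$, the natural strategy is to exhibit, for each $\rho_B\in\Delta_B$, a joint incoherent state that reduces to it. The simplest choice is the product state $\rho_{AB}=\sigma_A\otimes\rho_B$ where $\sigma_A$ is any fixed incoherent state of $A$ (for instance the maximally mixed state $\id_A/d_A$, or any Gibbs state of $H_A$, both of which commute with $H_A$ and hence lie in $\Delta_A$). I would verify that the product of two locally invariant states is invariant under the joint flow: $e^{-\ii(H_A+H_B)t}(\sigma_A\otimes\rho_B)e^{\ii(H_A+H_B)t}=\big(e^{-\ii H_A t}\sigma_A e^{\ii H_A t}\big)\otimes\big(e^{-\ii H_B t}\rho_B e^{\ii H_B t}\big)=\sigma_A\otimes\rho_B$, so $\rho_{AB}\in\Delta_{BA}$, and clearly $\tr_A[\sigma_A\otimes\rho_B]=\rho_B$. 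This establishes $\rho_B\in\tr_A[\Delta_{BA}]$ and completes the equality of sets.

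I do not expect a serious obstacle here, since both inclusions follow directly from the flow-invariance reformulation; the only point requiring a little care is the forward inclusion, where one must justify that the $A$-unitaries genuinely drop out under $\tr_A$ rather than merely conjugating $\rho_B$ — this is secured by the tensor-product factorization of the joint evolution and cyclicity of $\tr_A$ with respect to operators supported on $A$. A minor presentational subtlety is that $\Delta_{BA}$ and $\Delta_{AB}$ denote the same set (incoherence of a bipartite state is symmetric in the labelling), which I would note in passing so the statement $\tr_A[\Delta_{BA}]=\Delta_B$ reads unambiguously.
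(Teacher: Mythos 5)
Your proof is correct and follows essentially the same route as the paper: the forward inclusion via the flow-invariance characterization of incoherence (tracing out $A$ so that the $e^{-\ii H_A t}$ factors cancel under $\tr_A$), and the reverse inclusion via a product with an incoherent ancilla state, which the paper simply declares trivial. No gaps.
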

\begin{proof}
The inclusion $Tr_{A}[\Delta_{AB}] \supseteq \Delta_{B}$ is trivial. On the other hand given $\rho_{AB}\in \Delta_{AB}$ we have that for all $t$ it holds $\rho_{AB}=e^{-it(H_{A}+H_{B})}\rho_{AB}e^{+it(H_{A}+H_{B})}$. Taking the partial trace over $A$ on both sides we have $\rho_{B}=Tr_{A}[e^{-it(H_{A}+H_{B})}\rho_{AB}e^{+it(H_{A}+H_{B})}]=e^{-itH_{B}}\rho_{B}e^{-itH_{B}}$, proving that $\Delta_{B} \supseteq Tr_{A}[\Delta_{AB}]$ and so the thesis.
\end{proof}

We now prove that the entropic coherence admits other tree equivalent definitions:
\begin{align}
    & (i) \quad C(\rho,H):= S(\mathcal{G}_{H}(\rho))- S(\rho), \\
    & (ii) \quad C(\rho,H)=S(\rho||\mathcal{G}_{H}(\rho)),   \\ 
    & (iii)\quad C(\rho,H)= \underset{\sigma \in \Delta}{\min} S(\rho||\sigma).
\end{align}
\begin{proof}
 We use the identity $\log(\sigma)=\log(\mathcal{G}_{H}(\sigma))=\mathcal{G}_{H}(\log(\sigma))$, and the fact that $\mathcal{G}_{H}$ is self adjoint, to prove 
\begin{align}
   & S(\rho || \sigma)=-Tr[\rho\log(\sigma)] -S(\rho)=-Tr[\rho \mathcal{G}_{H}(\log(\sigma))]-S(\rho) \nonumber  \\
   & = Tr[\mathcal{G}_{H}(\rho) \log(\sigma)]-S(\rho)=S(\mathcal{G}_{H}(\rho))-S(\rho)+S(\mathcal{G}_{H}(\rho)||\sigma).
\end{align}
This equality proves $(i)\iff (ii)$ since for $\sigma=G_{H}(\rho)$ one has $S(G_{H}(\rho)||\sigma)= 0$,  and $(i)\iff (iii)$  because 
for all other choices of $\sigma$ one has 
$S(G_{H}(\rho)||\sigma)> 0$.  
\end{proof}
Notice that in particular the expression $(iii)$ makes it explicit that the Entropic Coherence can be seen as the relative entropy of resource, where $\Delta$ is the convex set of free states. This guarantees that (EC) satisfies many desired properties as a coherence quantifier \cite{Corrected-Uni-Conv-Rate(2012), PlenioCoherenceRev}. We now prove that it additionally satisfies $C_{1}-C_{4}$ of the main text: 

\begin{proof}
    First, since $\mathcal{G}_{H}$ is unital channel, $C_{r}(\rho,H)\geq 0$ \cite{UnitalEntro}.\\
(C1) If $[U,H]=0$ then $ \mathcal{G}_{H} \circ \mathcal{U}= \mathcal{U}\circ\mathcal{G}_{H}$ thus
\begin{align}
     &C(\mathcal{U}(\rho),H)=S(\mathcal{G}_{H} \circ \mathcal{U}(\rho))-S(\mathcal{U}(\rho))\\
     &=S(\mathcal{G}_{H} (\rho))-S(\rho)=C(\rho,H).
\end{align}

(C2) From lemma \ref{lem:incoherent} we know that $\Delta_{A} = tr_{B}[\Delta_{AB}]$, thus we have
\begin{align}
   & C(\rho_{AB},H_{AB})=\underset{\sigma \in \Delta_{AB}}{\min} S(\rho_{AB}||\sigma)\geq 
   \underset{\sigma \in \Delta_{AB}}{\min} S(\rho_{A}||tr_{B}[\sigma]) 
   \\
    &= \underset{\sigma \in \Delta_{A}}{\min} S(\rho_{A}||\sigma):=C(\rho_{A},H_{A})
\end{align}
 where in the first inequality we used the data processing inequality of the relative entropy \cite{RelativeEntroIneq} $S(\mathbf{\Phi}(\rho)||\mathbf{\Phi}(\sigma))\leq S(\rho||\sigma)$.

(C3) First observe that $\Delta_{A}\otimes \Delta_{B}\subseteq \Delta_{AB}$ i.e. the tensor product of incoherent states is still incoherent with the respect to the joint system, this implies according to caracterization (iii)
\begin{align}
    &S(\rho_{A}\otimes \sigma_{B} ||\mathcal{G}_{H_{AB}}(\rho_{A}\otimes \sigma_{B}))\leq S(\rho_{A}\otimes \sigma_{B} ||\mathcal{G}_{H_{A}}(\rho_{A})\otimes \mathcal{G}_{H_{B}}(\sigma_{B})) \\
    &= S(\rho_{A}||\mathcal{G}_{H_{A}}(\rho_{A}))+ S(\sigma_{B} ||\mathcal{G}_{H_{B}}(\sigma_{B})).
\end{align}

(C4) Because the twirled states are closer in trace distance than their counterparts, we can apply the Fannes-Audenaert inequality \cite{EntropyReg} twice:
\begin{align} \label{Eq:RECRegularity}
    &\abs{C(\rho,H)-C(\sigma,H)}\leq \abs{S(\overline{\rho})-S(\overline{\sigma})} +\abs{S(\rho)-S(\sigma)} \\
    & \leq 2\log(d-1)D(\rho,\sigma)+2h_{2}
     (D(\rho,\sigma)),
\end{align} 
where $d:=dim(\mathcal{H})$ and $h_{2}(x):=-x\log(x)-(1-x)\log(1-x)$ is the binary entropy. 
\end{proof}

\section{Comment on the bound on the relative entropy of coherence in refs \cite{Chiribella,UnivLimitUnitary(Takagi)}} \label{App:Other-REC}

In \cite{Chiribella,UnivLimitUnitary(Takagi)} a theorem analogous to our result was derived for a similarly looking quantity, the relative entropy of coherence (REC). However, in these works the definition of REC differs for ours in a subtle way that turns out to be dramatically less relevant in this context of NEPG implementation, as we now explain.

Let $S,B$ be two quantum systems with respective Hamiltonians $ H_{S},H_{B}$. Let $\mathcal{G}_{{H}_S},\mathcal{G}_{{H}_B}$ be the respecting dephasing channels. 
Accordingly to the definition used in refs.~\cite{Chiribella,UnivLimitUnitary(Takagi)}, the REC of a composite system, that we denote $C_{loc}(\cdot)$, reads 
\begin{equation}
    C_{loc}(\rho_{SB}):=S(\mathcal{G}_{{H}_S}\circ\mathcal{G}_{{H}_B}(\rho_{SB}))-S(\rho_{SB}).
\end{equation} 

Notice that in this definition the knowledge of the total Hamiltonian $H_{SB}$ and the joint state $\rho_{SB}$ does not uniquely specify the value of $C_{loc}(\rho_{SB})$, instead the partition of the global system in subsystems must be specified. Furthermore, in refs.~\cite{Chiribella,UnivLimitUnitary(Takagi)} our condition C1 is replaced with the request that a resource should be non-increasing under free operations and partial trace, while free operations are defined as being not able to increase resources. From the latter we conclude that
\begin{equation}
    U_{SB} \quad\textit{is a free unitary}\quad \iff U_{SB}\ket{E_{a}}_{S}\otimes\ket{E_{b}}_{B}=e^{i\phi_{E_{a},E_{b}}}\ket{E_{a}}_{S}\otimes\ket{E_{b}}_{B},
\end{equation}
for some arbitrary real phases $\{\phi_{E_{a},E_{b}}\}_{a,b}$, and denote with  $ {\tt LEP}(\beta_{B},H_{B})$ the set of operations induced by such free unitaries on the system $S$, analogously to Eq.~\eqref{Def:MainChannel}.
This set of operations is much more constrained than the set ${\tt TEP}(\beta_{B},H_{B})$,  defined with the total energy preserving condition $[U_{SB},H_{S}+H_{B}]=0$. Crucially, it simply forbids interactions to exchange energy between systems $S$ and $B$, i.e. 

for all $\widetilde{\cV}_S \in {\tt LEP}(\beta_{B},H_{B})$ and $\rho_{S}$
\begin{equation}
    \tr[H_{S} \widetilde{\cV}_S(\rho_{S})]=\tr[H_{S} \rho_{S}].
\end{equation}

In view of this it shouldn't come as a surprise that the bound found in Eq.(10) of \cite{Chiribella} is exponentially higher than ours
\begin{equation} \label{Eq:ChiribellaRECbound}
C_{loc}(\beta_{B},H_{B})\geq \frac{f(V_{S})}{\sqrt{\epsilon_{wc}}}-o(1).    
\end{equation}
 As a matter of fact any NEPG $V_{S}$ cannot be approximated arbitrary well by channels in $ {\tt LEP}(\beta_{B},H_{B})$ irrespectivly of the chosen $\beta_{B},H_{B}$. On the other hand Eq.\eqref{Eq:ChiribellaRECbound} is only valid in the $\epsilon_{wc}\rightarrow 0$ limit, making the theorem inapplicable.

\section{Proof of the main result}
 
This section is devoted to the proof of the main result on the main text.

\label{app:Recproof}
\subsection{Resource inequalities and proof outline} \label{Sec:Resource inequalities}

In this section we will present a  generalization of the method used in \cite{Chiribella,UnivLimitUnitary(Takagi)} and show how to use a resource satisfying the properties (C1)-(C4) to create a bound on the precision in the implementation of NEPGs.
For this sake, let us consider a battery $B$ in contact with a system $S$ and an auxiliary system $A$ (this will be convenient later on). Let the joint system $SA$ be prepared in the state $\rho_{SA}$ and the battery in $\beta_B$, as usual. Further, we always suppose the energy to be additive, i.e. $H_{SBA}:= H_{S}+H_{B}+H_{A}$. We now prove the following lemma.

\begin{lem} \label{lem: REC bound gen}
    Consider a system with Hamiltonian $H_S$ and a battery with Hamiltonian $H_B$, prepared in the state $\beta_B$. Let $\mathcal{U}_{SB}[\bullet]=U_{SB}\bullet U_{SB}^\dag$ be a joint unitary channel, commuting with the total hamiltonian $[U_{SB},H_S+H_B]=0$, and $\tcV_S[\bullet] = \tr_B \mathcal{U}_{SB}[\bullet \otimes \beta_B]$ be the channel induced by this unitary on the system (element of  ${\tt TEP}(H_{B},\beta_B)$). Then, for any initial state $\rho_{SA}$ of the system S and and an auxiliary system $A$, with Hamiltonian $H_A$, the following bound holds
 \begin{align} \label{eq:WHATEVER}
    C\big(\beta_B,H_{B}\big) \geq
    C\big(\tcV_S[\rho_{SA}], H_S+H_A\big) - C\big(\rho_{SA}, H_{S}+H_A\big). 
\end{align}
\end{lem}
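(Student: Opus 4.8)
The plan is to track the entropic coherence of the global system $SBA$ through the three stages of the protocol---state preparation, the joint energy-preserving unitary, and the partial trace over the battery---invoking only the properties established in App.~\ref{app:RECprop}; in fact only (C1)--(C3) are needed, while (C4) will be reserved for the later many-copy estimates.

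First I would consider the initial global state $\rho_{SA}\otimes\beta_B$ with additive total Hamiltonian $H_{SBA}=H_S+H_B+H_A$. Treating $SA$ as a single composite system and $B$ as the other factor, subadditivity on product states (C3) gives
\[
C\big(\rho_{SA}\otimes\beta_B,\, H_{SA}+H_B\big)\;\leq\; C\big(\rho_{SA},H_{SA}\big)+C\big(\beta_B,H_B\big),
\]
with $H_{SA}:=H_S+H_A$. This isolates the battery's coherence $C(\beta_B,H_B)$ as the budget that will upper-bound everything produced downstream.

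Next I would apply the extended unitary $U_{SB}\otimes\id_A$. Since $A$ is inert (it never couples to the battery) and $[U_{SB},H_S+H_B]=0$ by hypothesis, additivity of the Hamiltonian yields $[U_{SB}\otimes\id_A,\,H_{SBA}]=0$, so this is a free (energy-preserving) operation on the whole tripartite system. Writing $\sigma_{SBA}:=(U_{SB}\otimes\id_A)(\rho_{SA}\otimes\beta_B)(U_{SB}^\dagger\otimes\id_A)$, invariance under energy-preserving unitaries (C1) gives $C(\sigma_{SBA},H_{SBA})=C(\rho_{SA}\otimes\beta_B,H_{SBA})$. Finally, tracing out the battery and using monotonicity under partial trace (C2),
\[
C\big(\tr_B\sigma_{SBA},\,H_{SA}\big)\;\leq\;C\big(\sigma_{SBA},H_{SBA}\big).
\]
The observation that closes the chain is that $\tr_B\sigma_{SBA}$ is precisely $\tcV_S[\rho_{SA}]$ (the channel acting on $S$ with the identity on the spectator $A$), directly from the definition of $\tcV_S$. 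Concatenating the three inequalities and rearranging then produces the claimed bound $C(\beta_B,H_B)\geq C(\tcV_S[\rho_{SA}],H_{SA})-C(\rho_{SA},H_{SA})$.

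I do not expect a genuine obstacle here: the lemma is a straightforward concatenation of (C1)--(C3), and the content lies entirely in recognizing which property to apply at each stage. The only points demanding care are bookkeeping: (i) checking that the energy-preservation condition lifts from $SB$ to the full system because $A$ is inert and $H_{SBA}$ is additive, so $U_{SB}\otimes\id_A$ is legitimately free; and (ii) correctly identifying the reduced state $\tr_B\sigma_{SBA}$ with $\tcV_S[\rho_{SA}]$, i.e.\ reading $\tcV_S$ as acting on $S$ with the identity on $A$. Property (C4) plays no role in this lemma and will enter only when the bound is iterated over many copies and combined with continuity arguments.
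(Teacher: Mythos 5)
Your proof is correct and follows exactly the same route as the paper's: apply (C3) to the product $\rho_{SA}\otimes\beta_B$, then (C1) for the free unitary $U_{SB}\otimes\id_A$, then (C2) when tracing out $B$, and identify the reduced state with $\tcV_S\otimes\mathcal{I}_A[\rho_{SA}]$. No differences worth noting.
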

\begin{proof}

Using the properties (C3), (C1) we can write
\begin{align} \label{eq:1Pcoh}
 C(\beta_B,H_{B}) + C(\rho_{SA},H_{SA})  \geq C(\beta_B \otimes \rho_{SA},H_{SBA})  
     = C\big((\mathcal{U}_{SB}\otimes \text{id}_A)[\beta_B \otimes \rho_{SA}],H_{SBA}\big)
\end{align}

Using the definition of $\tcV_S$ as the trace over $B$ of the global evolution and property (C2) we obtain
\begin{align} \label{eq:2Pcoh}
 &  C\big((\mathcal{U}_{SB}\otimes \text{id}_A)[\beta_B \otimes \rho_{SA}],H_{SBA}\big) \geq C(\tcV_S \otimes \mathcal{I}_A (\rho_{AS}),H_{SA}).
 \end{align}

\end{proof}
The general idea now is to somehow apply this bound to our channel of interest $\tcV_S$   approximating $\mathcal{V}_S$. However, following \cite{Chiribella} we note that a tighter result can be obtained when applying the bound to several copies of this channel. Therefore, we now consider $2m$ copies of the system $S$, labeled $S_i$ for $i \in \{1, ... 2m\}$, and denote their composite system with $\bm S =S_1\dots S_{2m}$. For any (single copy) total-energy-preserving unitary channel $\mathcal{U}_{SB}$, its adjoint $\mathcal{U}^*_{SB}[\bullet]=U_{SB}^\dag \bullet U_{SB}$, and any state of the battery $\beta_B$ we define the following channels 
\begin{align} \label{eq:manyU}
&\mathcal{U}^{(m)}_{\bm S B}:=\mathcal{U}_{S_{2m}B}^*\circ \mathcal{U}_{S_{2m-1}B}\circ...\circ \mathcal{U}_{S_{2}B}^*\circ\mathcal{U}_{S_{1}B}\\
& \tcV_{\bm S}^{(m)} [\bullet_{\bm S}] :=\tr_B \mathcal{U}^{(m)}_{\bm S B}[\bullet_{\bm S} \otimes \beta_B],
\end{align}
where $\mathcal{U}^{(m)}_{\bm S B}$ is a unitary channel on $\bm S B$, 

and $\tcV_{\bm S}^{(m)}$ 

is a CPTP map on $\bm S$. Similarly, let us define the target unitary channel on the $2m$ system copies
\begin{equation} \label{eq:manyV}
   \mathcal{V}^{(m)}_{\bm S}:=\underset{i=1}{\overset{m}{\bigotimes}} \, \mathcal{V}_{S_{i}} \otimes \mathcal{V}_{S_{i+1}}^{*}.
\end{equation}

Our choice for these specific definitions of the many copy maps $\mathcal{U}_{\bm S B}^{(m)}$ and  $\mathcal{V}^{(m)}_{\bm S}$ in Eqs.~(\ref{eq:manyU}-\ref{eq:manyV}) in motivated by the following theorem, relating their distance to the single-copy error $\epsilon_{wc}(\tcV_S;\cV_{S})$.

\begin{thm}
\cite{Chiribella} Consider the maps defined in Eqs. \eqref{eq:manyU} and \eqref{eq:manyV}. For the following  CPTP maps from $L(\cH_{\bm S})\to L(\cH_{\bm S}\otimes \cH_B)$,  where $L(\mathcal{H})$ denotes the space of linear operators acting on the Hilbert space $\mathcal{H}$, we have

 \begin{equation} \label{Th3}
    \lVert \mathcal{U}^{(m)}_{\bm S B}[\bullet_{\bm S} \otimes \beta_B]- \mathcal{V}^{(m)}_{\bm S}[\bullet_{\bm S}] \otimes \beta_B \rVert_{\diamond} \leq 4 m \sqrt{\epsilon_{wc}(\tcV_S;\cV_{S})}.
\end{equation}

\end{thm}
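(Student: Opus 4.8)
The plan is to combine the single-copy diamond-norm estimate recorded in App.~\ref{Sec:ChannelDist} with a telescoping (hybrid) argument over the $2m$ sequential interactions. First I would note the single-copy starting point: writing $\epsilon:=\epsilon_{wc}(\tcV_S,\mathcal{V}_S)$, the Fuchs--van der Graaf inequality of App.~\ref{Sec:ChannelDist} gives $\lVert\tcV_S-\mathcal{V}_S\rVert_{\diamond}\leq\sqrt{\epsilon}$, which controls only the \emph{reduced} channel on $S$. The key intermediate claim I would isolate as a lemma is the \emph{joint} single-step bound
\begin{equation}
\lVert \mathcal{U}_{SB}[\bullet\otimes\beta_B]-\mathcal{V}_S[\bullet]\otimes\beta_B\rVert_{\diamond}\leq 2\sqrt{\epsilon},
\end{equation}
together with its analogue for the adjoints $\mathcal{U}^*_{SB}$, $\mathcal{V}^*_S$ (obtained by conjugating the whole argument). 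This asserts that one interaction not only reproduces $\mathcal{V}_S$ on the system but also returns the battery to $\beta_B$ up to an error $O(\sqrt{\epsilon})$.

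To prove the lemma I would purify the battery as $\ket{\beta}_{BB'}$ and, for an arbitrary pure input $\ket{\psi}_{SA}$, consider the joint output $\ket{\Omega}=(\mathcal{U}_{SB}\otimes\id_{AB'})\ket{\psi}_{SA}\ket{\beta}_{BB'}$. Because the target $\mathcal{V}_S$ is unitary, $\ket{\phi}_{SA}=(V_S\otimes\id_A)\ket{\psi}$ is pure, and the fidelity bound forces $\ket{\Omega}\approx\ket{\phi}_{SA}\otimes\ket{\xi}_{BB'}$ to within $\sqrt{\epsilon}$ (Uhlmann; a pure $SA$-marginal decouples the complement). This decouples the battery, and the genuinely hard point is to show that $\ket{\xi}$ returns to $\beta_B$, i.e. $D(\tr_{B'}\ket{\xi}\bra{\xi},\beta_B)=O(\sqrt{\epsilon})$. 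I expect this to be the main obstacle: a priori the battery supplies the energy $\tr[H_S(V_S\psi V_S^\dag-\psi)]$ demanded by the gate and is therefore displaced in energy. Here the constraint $[U_{SB},H_S+H_B]=0$ is essential — energy conservation ties the battery's displacement to the gate itself, and one can check on the maximally mixed (or maximally entangled) input, whose energy is $\mathcal{V}_S$-invariant, that the returned battery cannot drift from $\beta_B$ by more than the gate infidelity permits. Without energy conservation the lemma is simply false, since one can construct a perfect reduced gate that nevertheless swaps $\beta_B$ for an orthogonal battery state; so this step must invoke $[U_{SB},H_S+H_B]=0$ in an essential way.

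Granting the lemma, the statement follows by a standard hybrid telescoping. I would define, for $k=0,\dots,2m$, interpolating maps $\Phi_k$ in which $S_1,\dots,S_k$ receive their ideal gates $\mathcal{V}_{S_i}$ (resp.\ $\mathcal{V}^*_{S_i}$) and never touch the battery, while $S_{k+1},\dots,S_{2m}$ interact sequentially with the single battery initialised in $\beta_B$. Then $\Phi_0=\mathcal{U}^{(m)}_{\bm S B}[\bullet\otimes\beta_B]$ and $\Phi_{2m}=\mathcal{V}^{(m)}_{\bm S}[\bullet]\otimes\beta_B$, so the triangle inequality gives $\lVert\Phi_0-\Phi_{2m}\rVert_{\diamond}\leq\sum_{k=1}^{2m}\lVert\Phi_{k-1}-\Phi_k\rVert_{\diamond}$. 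Consecutive hybrids differ only in how $S_k$ is handled, and since all subsequent operations act as one and the same channel on both outputs, monotonicity of the diamond norm under composition reduces each term to the joint single-step bound, $\lVert\Phi_{k-1}-\Phi_k\rVert_{\diamond}\leq 2\sqrt{\epsilon}$ (using the $\mathcal{U}^*/\mathcal{V}^*$ version for even $k$). Summing the $2m$ contributions yields $4m\sqrt{\epsilon}$, as claimed. Note the alternating $\mathcal{U}/\mathcal{U}^*$ structure enters only to ensure that each step admits the joint single-step bound against the same reference $\beta_B$.
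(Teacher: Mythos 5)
The paper does not actually prove this statement --- it is imported verbatim from \cite{Chiribella} --- so your reconstruction can only be judged against the standard argument there. Your telescoping skeleton is the right one, but the key lemma you rest it on, the single-step joint bound $\lVert \mathcal{U}_{SB}[\bullet\otimes\beta_B]-\mathcal{V}_S[\bullet]\otimes\beta_B\rVert_{\diamond}\leq 2\sqrt{\epsilon}$, is \emph{false}, and your justification for it (energy conservation pins the returned battery to $\beta_B$) does not hold. Concretely: take the ladder battery $H_B=\sum_n n\ketbra{n}{n}$ with $\ket{\beta}=N^{-1/2}\sum_{n=1}^N\ket{n}$, let $U'_{SB}$ be the standard shift implementation of a qubit flip (with shift operators $\Delta_k\ket{n}=\ket{n+k}$), and set $U_{SB}=U'_{SB}(\id_S\otimes W_B)$ with $W_B=\sum_n(-1)^n\ketbra{n}{n}$. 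Since $[W_B,H_B]=0$, $U_{SB}$ is still energy preserving; the induced channel on $S$ is unchanged because it depends only on $\bra{\beta}W_B^\dagger\Delta_{\pm2}W_B\ket{\beta}=\bra{\beta}\Delta_{\pm2}\ket{\beta}$, so $\epsilon=O(1/N)$; yet the post-interaction battery state is $O(\sqrt{\epsilon})$-close to $\Delta_{\mp1}W_B\ket{\beta}$, which is essentially orthogonal to $\ket{\beta}$. Energy conservation only constrains the battery's energy \emph{distribution}, not its state: phases and rotations within energy shells are free and can carry $\beta_B$ to an orthogonal state without degrading the gate, so your check on an energy-invariant input cannot close this step. (For the same reason, your claim that energy conservation is essential to the theorem is off the mark: the bound \eqref{Th3} holds for arbitrary $U_{SB}$ and unitary target, and energy conservation enters the paper's argument only later, through the monotonicity property (C1).)

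The fix --- and the actual reason for the alternating $\mathcal{U},\mathcal{U}^*$ structure --- is to pair consecutive steps. Because the target is unitary, the output marginal on $SA$ is $\epsilon$-close to a pure state, which forces decoupling: $U_{SB}\ket{\psi}_{SA}\ket{\beta}_B\approx\ket{V\psi}_{SA}\otimes\ket{\xi}_B$ up to $O(\sqrt{\epsilon})$, for a single state $\ket{\xi}_B$ that can be shown (by running the argument on superpositions of inputs) to be independent of $\psi$ but \emph{not} close to $\ket{\beta}_B$ in general. Hence $\mathcal{U}_{SB}[\bullet\otimes\beta_B]\approx\mathcal{V}_S[\bullet]\otimes\xi_B$, and applying $U_{SB}^\dagger$ to both sides gives $\mathcal{U}^*_{SB}[\bullet\otimes\xi_B]\approx\mathcal{V}^*_S[\bullet]\otimes\beta_B$: the battery returns to $\beta_B$ only after each $(\mathcal{U},\mathcal{U}^*)$ pair. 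The hybrids must therefore be built with the reference battery state alternating between $\beta_B$ and $\xi_B$ along the chain; each of the $2m$ replacements then costs $2\sqrt{\epsilon}$ in diamond norm and the sum gives $4m\sqrt{\epsilon}$ as claimed.
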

Thanks to this theorem and using the definition of diamond norm in Eq. \eqref{Def:DiamondNorm} we have
\begin{equation} \label{Def:DiamondNormused} D\left(
\mathcal{U}^{(m)}_{\bm S B}[{\rho}^{(m)}_{\bm S \bm A}\otimes \beta_B], \mathcal{V}^{(m)}_{\bm S}[\rho^{(m)}_{\bm S \bm A}]\otimes \beta_B \right) \leq  4 m \sqrt{\epsilon_{wc}(\tcV_S;\cV_{S})},
\end{equation}
that holds for any state $\rho^{(m)}_{\bm S \bm A}$ of the system $\bm S$ extended with any auxiliary system $\bm A$. Since the trace distance is non-increasing under partial trace, this inequality still holds when the battery is traced out
\begin{equation} \label{eq:distance4m}
   D\left(\tcV_{\bm S}^{(m)}[\rho_{\bm S\bm A}^{(m)}],\cV_{\bm S}^{(m)}[\rho_{\bm S \bm A}^{(m)}]  \right) = D\left( \widetilde\nu_{\bm S \bm A}^{(m)},\nu_{\bm S\bm A}^{(m)} \right)   \leq  4 m \sqrt{\epsilon_{wc}(\tcV_S;\cV_{S})}.
\end{equation}
Here we introduced the states 
\begin{equation}
   \widetilde{\nu}_{\bm S\bm A}^{(m)}: =\tcV_{\bm S}^{(m)}[\rho_{\bm S\bm A}^{(m)}] \qquad \text{and}\qquad  \nu_{\bm S\bm A}^{(m)}: = \mathcal{V}_{\bm S}^{(m)}[\rho^{(m)}_{\bm S \bm A}],
\end{equation}
which implicitly depend on $\rho^{(m)}_{\bm S \bm A}$ and are central for the following discussion.

In addition, note that the global unitary $\mathcal{U}_{\bm SB}^{(m)}$ manifestly commutes with the total Hamiltonian of the $2m$ systems and the battery $H_{\bm S} +H_{B}$, with $H_{\bm S} =\sum_{i=1}^{2m} H_{S_i}$. Hence, Lemma~\ref{lem: REC bound gen} immediately implies the following bound
\begin{align}\label{eq: LB-UB}
C(\beta_{B},H_{B}) &\geq C\big(\tcV^{(m)}_{\bm S}[\rho^{(m)}_{\bm S\bm A}], H_{\bm S}+H_{\bm A}\big) - C\big(\rho^{(m)}_{\bm S\bm A}, H_{\bm S}+H_{\bm A} \big)   \\
&= \label{eq: term to LB} C\big(\nu_{\bm S\bm A}^{(m)}, H_{\bm S}+H_{\bm A}\big) 
 - C\big(\rho_{\bm S\bm A}^{(m)}, H_{\bm S}+H_{\bm A} \big) \\
 \label{eq: term to UB}
&-\left(C\big(\nu_{\bm S\bm A}^{(m)}, H_{\bm S}+H_{\bm A}\big) -C\big(\tnu_{\bm S\bm A}^{(m)}, H_{\bm S}+H_{\bm A}\big) \right) 
\end{align}
for any choice of auxiliary system $\bm A$ (Hilbert space $\cH_{\bm A}$), its Hamiltonian $H_{\bm A}$ and the initial state $\rho_{\bm S \bm A}$. Clearly, we want to select them in a way to maximize the rhs in the above bound. With this in mind we  decompose the rest of the proof in  the following steps:

\begin{itemize}[leftmargin=46 pt]
    \item[\bf Step 1.] 
    We obtain a lower bound on the resource production $$ C\big(\nu_{\bm S\bm A}^{(m)}, H_{\bm S}+H_{\bm A}\big) 
 - C\big(\rho_{\bm S\bm A}^{(m)}, H_{\bm S}+H_{\bm A} \big)\geq (\ast) $$ in Eq.~\eqref{eq: term to LB}, as a function of the properties of the ideal gate $\mathcal{V}_S$. 
 A convenient choice is to restrict the analysis to initial states that are pure and eigenstates of the total Hamiltonian. In this way $C\big(\rho_{\bm S\bm A}^{(m)}, H_{\bm S}+H_{\bm A} \big)=0$ and 
 $C\big(\nu_{\bm S\bm A}^{(m)}, H_{\bm S}+H_{\bm A}\big)$ is mapped into the entropy of a sum of i.i.d. random variables (see Lemma \ref{Def:Energy-Prob-vector-lemma}).  Furthermore, the auxiliary systems are chosen to be copies of $S$, but with a flipped Hamiltonian $H_{A_i}=-H_S$.
    \item[\bf Step 2.] We obtain an upper bound on the resource regularity 
    $$C\big(\nu_{\bm S\bm A}^{(m)}, H_{\bm S}+H_{\bm A}\big) -C\big(\tnu_{\bm S\bm A}^{(m)}, H_{\bm S}+H_{\bm A}\big) \leq  (\ast\ast) $$ in Eq.~\eqref{eq: term to UB}
    as a function of the single-channel error $\epsilon_{wc}(\tcV_{S};\mathcal{V}_{S})$. Here, the idea consists of bounding the term with the trace distance between the states $\nu_{\bm S\bm A}^{(m)}$ and $\tnu_{\bm S\bm A}^{(m)}$, and then applying the bound \eqref{Th3}. 
     For this purpose, we will use a refinement of the Fannes-Audenaert inequality (see Lemma \ref{lem:ref_FanAud}) which holds in full generality. We then derive a specific bound for the choices of $H_{\bm A}$ and $\rho_{\bm S\bm A}^{(m)}$ used in step 1.
    We derive a general bound, and a refined one assuming that the battery is \textit{proportionate}.

    \item[\bf Step 3.] 
  { We combine the two bounds obtained in steps 1 and 2, which still depend on the choice of the initial pure state $\rho^{(m)}_{\bm S\bm A}$. We then select this state so that the difference $(\ast) - (\ast\ast) \leq C(\beta{B}, H_{B})$, which lower bounds our quantity of interest via Eq.~\eqref{eq: LB-UB}, is provably large.}  Finally, we maximize the obtained bound with respect to the number of copies $m$. 
\end{itemize}

The step 1 of the proof holds for any resource $C$ fulfilling properties (C1-C4). In contrast, in the following we explicitly consider the entropic coherence of energy. 

\subsection{Step 1: Bounding the resource production} \label{App:BoundingProduction}

Our goal here, is to choose specific initial states $\rho_{\bm S\bm A}^{(m)}$ that `nicely' lower bounds the expression in Eq.~\eqref{eq: term to LB}, which corresponds to the amount of resource 
produced by the gate $\mathcal{V}_{\bm S}^{(m)}$ acting on the state. Using the definition of the entropic coherence we get 
\begin{align}\label{eq: rec to ent-0}
     C\big(\nu_{\bm S\bm A}^{(m)}, H_{\bm S}+H_A\big) 
 - C\big(\rho_{\bm S\bm A}^{(m)}, H_{\bm S}+H_{\bm A} \big) =   S\left(\mathcal{G}_{H_{\bm S \bm A}}[\nu_{\bm S\bm A}^{(m)}]\right) -S(\nu_{\bm S \bm A}^{(m)})  - S  \left(\mathcal{G}_{H_{\bm S \bm A}}[\rho_{\bm S\bm A}^{(m)}]\right) + S(\rho_{\bm S\bm A}^{(m)}).
\end{align}
Since the states are assumed pure and eigenstates of the total Hamiltonian, we have $ S\left(\rho_{\bm S\bm A}^{(m)}\right) = S\left(\nu_{\bm S\bm A}^{(m)}\right) =  S \left(\mathcal{G}_{H_{\bm S \bm A}}[\rho_{\bm S\bm A}^{(m)}]=\rho_{\bm S\bm A}^{(m)}\right)=0.$
Hence, the `production of resource' is equal to the only remaining term, i.e. the entropy of the final state after twirling 
\begin{align}\label{eq: rec to ent}
     C\big(\nu_{\bm S\bm A}^{(m)}, H_{\bm S}+H_A\big) 
 - C\big(\rho_{\bm S\bm A}^{(m)}, H_{\bm S}+H_{\bm A} \big) = S\left(\mathcal{G}_{H_{\bm S \bm A}}[\nu_{\bm S\bm A}^{(m)}]\right).
\end{align}
To lower bound the right hand side it is convenient to identify it with the entropy of a random variable, given by the following definition.
\begin{defn} \label{Def:Energy-Prob-vector}
     Let $\psi=\ketbra{\psi}$ be a pure state of a $d$-dimensional quantum system, and $H= \sumab{j=1}{k}E_{j} \Pi_{E_{j}}$ the associated Hamiltonian. We call $X_{\psi}$ the discrete random variable taking values in the set
     \begin{equation} 
     \chi_{\psi} :=\{ E_{j}\in  \mathcal{E}(H): \bra{\psi}\Pi_{E_j}\ket{\psi} >0 \},
     \end{equation}
     and distributed accordingly to ${\rm Pr}(X_\psi = E_j) = \bra{\psi}\Pi_{E_j}\ket{\psi}$, i.e. the random variable describing the outcome of an energy measurement on the state $\psi$. It is immediate to see that 
     \begin{equation}\label{eq: twirl = rv}
    S\left(\mathcal{G}_H[ \psi ] \right) = S(X_\psi).
\end{equation}
\end{defn}

The following lemma will also be useful.
\begin{lem} \label{Def:Energy-Prob-vector-lemma}
     Let $\Psi_{\bm T}$ be a pure state of  systems $\bm T = T_1 \dots T_m$, with associated Hamiltonians $H_{T_i}= \suma{j} E_i^{(j)} \Pi_{T_i}^{(j)}$  and $H_{\bm T} = \sumab{i=1}{m} H_{T_i}$. Then  
     \begin{equation}
         S\left(\mathcal{G}_{H_{\bm T}}[ \Psi_{\bm T} ] \right) = S\left(\sum_{i=1}^m X_{\Psi}^{(i)}\right),
     \end{equation}
where the random variables $X_\Psi^{(i)}$ take the values in $\cE(H_{T_i})$ and are distributed accordingly to $
    {\rm Pr}\Big(X_\Psi^{(1)} = E_1^{(j_1)},\dots,  X_\Psi^{(m)} = E_m^{(j_m)}\Big) = \bra{\Psi} \bigotimes_{i=1}^m  \Pi_{T_i}^{(j_i)} \ket{\Psi}.$
In addition, if the state is product $\Psi_{\bm T}=\bigotimes_{i=1}^m \ketbra{\psi_i}_{T_i}$, the random variables are independent and given by $X_\Psi^{(i)}= X_{\psi_i}$.
\end{lem}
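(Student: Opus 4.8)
The plan is to compute the spectrum of the twirled state $\mathcal{G}_{H_{\bm T}}[\Psi_{\bm T}]$ explicitly and recognise it as the probability distribution of the total energy. Writing the twirling map with respect to the additive Hamiltonian as $\mathcal{G}_{H_{\bm T}}[\bullet]=\sum_{E}\Pi_{E}\bullet\,\Pi_{E}$, where $E$ runs over the distinct eigenvalues in $\cE(H_{\bm T})$ and $\Pi_{E}$ projects onto the total-energy eigenspace, the first thing I would record is the decomposition $\Pi_{E}=\sum_{\vec{j}:\,\sum_{i}E_{i}^{(j_{i})}=E}\bigotimes_{i=1}^{m}\Pi_{T_{i}}^{(j_{i})}$, with $\vec{j}=(j_{1},\dots,j_{m})$ a multi-index over local energy levels. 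This is the only place where the additive structure $H_{\bm T}=\sum_{i}H_{T_{i}}$ is used, and it is exactly what groups local configurations according to their energy \emph{sum}.

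The key step is that twirling a \emph{pure} state by these mutually orthogonal projectors produces a state whose eigenvalues can be read off immediately. Since $\Psi_{\bm T}=\ketbra{\Psi}$ is pure, each block is rank one, $\Pi_{E}\Psi_{\bm T}\Pi_{E}=\Pi_{E}\ket{\Psi}\bra{\Psi}\Pi_{E}$, with a single nonzero eigenvalue $p_{E}:=\bra{\Psi}\Pi_{E}\ket{\Psi}$ (using $\Pi_{E}^{2}=\Pi_{E}$). Because the blocks for distinct $E$ live in orthogonal subspaces, the twirled state is a direct sum of these rank-one pieces, so its spectrum is exactly $\{p_{E}\}_{E}$ together with zeros. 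Hence $S\big(\mathcal{G}_{H_{\bm T}}[\Psi_{\bm T}]\big)=-\sum_{E}p_{E}\log p_{E}$, which is already a Shannon entropy of the classical distribution $\{p_{E}\}$.

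It then remains to identify $\{p_{E}\}$ with the law of $\sum_{i}X_{\Psi}^{(i)}$. Combining the decomposition of $\Pi_{E}$ with the joint distribution stated in the lemma gives $p_{E}=\bra{\Psi}\Pi_{E}\ket{\Psi}=\sum_{\vec{j}:\,\sum_{i}E_{i}^{(j_{i})}=E}\bra{\Psi}\bigotimes_{i}\Pi_{T_{i}}^{(j_{i})}\ket{\Psi}={\rm Pr}\big(\sum_{i}X_{\Psi}^{(i)}=E\big)$, so that $S\big(\mathcal{G}_{H_{\bm T}}[\Psi_{\bm T}]\big)=S\big(\sum_{i}X_{\Psi}^{(i)}\big)$; this also recovers Eq.~\eqref{eq: twirl = rv} as the $m=1$ case. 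For a product state $\Psi_{\bm T}=\bigotimes_{i}\ketbra{\psi_{i}}$ the joint probability factorizes, $\bra{\Psi}\bigotimes_{i}\Pi_{T_{i}}^{(j_{i})}\ket{\Psi}=\prod_{i}\bra{\psi_{i}}\Pi_{T_{i}}^{(j_{i})}\ket{\psi_{i}}$, so the $X_{\Psi}^{(i)}$ become independent and equal in law to $X_{\psi_{i}}$, completing the argument. The proof is elementary, and the one point worth stressing — which is also where the statement would fail if the hypotheses were relaxed — is that \emph{both} purity (so that each total-energy block is rank one and contributes no entropy \emph{within} a degenerate total-energy sector) and additivity (so that the grouping by energy sum coincides with the total-energy eigenspace decomposition) are indispensable.
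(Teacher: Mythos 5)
Your proof is correct and follows essentially the same route as the paper: decompose the total-energy projector $\Pi_{E}$ into sums of tensor products of local projectors using additivity of $H_{\bm T}$, identify $\bra{\Psi}\Pi_{E}\ket{\Psi}$ with the law of $\sum_{i}X_{\Psi}^{(i)}$, and factorize for product states. The only difference is that the paper simply invokes its earlier identity $S(\mathcal{G}_{H}[\psi])=S(X_{\psi})$ (Eq.~\eqref{eq: twirl = rv}) for the spectral step, whereas you re-derive it by noting each block $\Pi_{E}\ket{\Psi}\bra{\Psi}\Pi_{E}$ is rank one — a worthwhile clarification, but not a different argument.
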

\begin{proof} By definition the random variable $X_\Psi:=\sumab{i=1}{m} X_{\Psi}^{(i)}$ takes values in $\cE(H_{\bm T})$ and is distributed accordingly to 
\begin{equation}
    {\rm Pr}(X_\Psi = E) = \bra{\Psi} \Pi_E \ket{\Psi}, \qquad \text{where} \qquad \Pi_E = \sum_{j_1,\dots j_m | \sum_i E_{i}^{(j_i)} =E} \quad \bigotimes_{i = 1}^m \Pi_{T_i}^{(j_i)}
\end{equation}
is the projector on the subspace with total energy $E$, i.e. $H_{\bm T} = \suma{E\in \cE(H_{\bm T})} E \, \Pi_E$. Hence, by Eq.~\eqref{eq: twirl = rv} we obtain 
 \begin{equation}
    S\left(\mathcal{G}_{H_{\bm T}}[ \Psi_{\bm T} ] \right) = S(X_\Psi)= S\left(\sum_{i=1}^m X_{\Psi}^{(i)}\right).
\end{equation}

To see that for a product initial state $\Psi_{\bm T}=\bigotimes_{i=1}^m \ketbra{\psi_i}_{T_i}$ the random variables are independent and satisfy $X_\Psi^{(i)}= X_{\psi_i}$, remark that
    \begin{equation}
    {\rm Pr}\Big(X_\Psi^{(1)} = E_1^{(j_1)},\dots,  X_\Psi^{(m)} = E_m^{(j_m)}\Big) =  \bra{\Psi} \bigotimes_{i=1}^m  \Pi_{T_i}^{(j_i)} \ket{\Psi}= \prod_{i=1}^m \bra{\psi_i}   \Pi_{T_i}^{(j_i)} \ket{\psi_i} =  \prod_{i=1}^m {\rm Pr}\left(X_{\psi_i} = E_i^{(j_i)}\right).
\end{equation}
\end{proof}

To proceed further we consider separately the general case and the case of a single qubit, where the derived bound is tighter. We start with the general case.

\subsubsection{Resource production for general gates}

For the general case of a $d$-dimensional system, we will take the initial state $\psi_{\bm S \bm A}$ to be a product. Specifically we now group the subsystems $S_1,\dots S_{2m}$ and $A_1,\dots A_{2m}$ into $m$ subsystems
$$\bm T = \bm S \bm A = T_1 \dots T_m \qquad \text{with} \qquad T_i = S_{2i-1}S_{2i}A_{2i-1}A_{2i},$$
and restrict the initial state to be a product of identical states
\begin{equation}\label{app: rho in d}
    \rho_{\bm T}^{(m)} = \rho_{\bm S\bm A}^{(m)} = \bigotimes_{i=1}^m \ketbra{\varphi}_{T_i} \qquad \text{with} \qquad 
    \ket{\varphi}_{T_i} \in \aut[H_{S_{2i-1}}+H_{S_{2i}}+H_{A_{2i-1}}+H_{A_{2i}}].
\end{equation}
After the application of the gate the state becomes
\begin{equation}
\nu_{\bm T}^{(m)} =  \mathcal{V}_{\bm S}^{(m)}[\rho_{\bm S\bm A}^{(m)}]= \bigotimes_{i=1}^m \ketbra{\psi }_{T_i} \qquad \text{with identical} \qquad  \ket{\psi}_T=(V_{S_{1}}\otimes V_{S_{2}}^\dag\otimes \id_{A_{1}A_{2}}) \ket{\varphi}_T.
\end{equation}

By Lemma~\ref{Def:Energy-Prob-vector-lemma} we conclude that the entropy of the final state after twirling is given by 
\begin{equation}
 S\left(\mathcal{G}_{H_{\bm S \bm A}}[\nu_{\bm S\bm A}^{(m)}]\right) = 
S\left(\sum_{i=1}^m X_\psi^{(i)} \right),
\end{equation}
where $X_{\psi}^{(i)}$ are iid random variables describing the energy measurement of the state $\ket{\psi}_T$ and taking values in the set $\chi_\psi$. Combining with Eq.~\eqref{eq: rec to ent} and using the lower bound on the entropy of the sum of iid random variables (corollary \ref{cor:GeneralIIDEntropy}) derived in the dedicated Appendix~\ref{app: iid}, we obtain the following bound on the resource production term  
\begin{equation}\label{app: bound prod 1}
      C\big(\nu_{\bm S\bm A}^{(m)}, H_{\bm S}+H_A\big) 
 - C\big(\rho_{\bm S\bm A}^{(m)}, H_{\bm S}+H_{\bm A} \big) = S\left(\sum_{i=1}^m X_\psi^{(i)} \right)\geq  \frac{r(\chi_{\psi})}{2}\log\big(2 \pi e \,m \, \lambda(X_{\psi})\big )-o(1),
\end{equation}
where $X_\psi$ is the energy random variable of the definition~\ref{Def:Energy-Prob-vector},  $r(\chi_\psi)$ given in definition~\ref{Def:r-Set} depends on the spectrum of $X_\psi$,  $\lambda(X_\psi)$ defined in Eq.~\eqref{eq: lambda1} also depends on its distributions, and $o(1)$ refers to the $m \rightarrow \infty$ limit.\\

Next, we show that for any NEPG $V_S$ it is possible to chose the initial state $\ket{\varphi}_T$ such that $r(\chi_\psi)\geq 1$ and $\lambda(X_\psi)>0$, which (by definition of these quantities) is equivalent to the random variable $X_\psi$ taking at least two different values ($|\chi_{\psi}|\geq 2$). This is summarized by the following lemma

\begin{lem} \label{cor:r-non-zero}
 If $[V_{S},H_{S}]\neq 0$ then for four copies of $S$ denoted $T= S_1S_2A_1A_2$ with Hamiltonians $H_T=H_{S_1}+H_{S_2}+H_{A_1}+H_{A_2}$ and $H_{S_i}= -H_{A_i}=H_{S}$, there are the states
 \begin{equation}
     \ket{\varphi}_T \in \aut[H_T] \qquad \text{and} \qquad \ket{\psi}_T = V_{S_1}\otimes V_{S_2}^{\dag}\otimes \id_{A_1A_2}\ket{\varphi}_T 
 \end{equation}
 such that $|\chi_\psi|\geq 2$, i.e. $\ket{\psi}_T$ is not an eigenstate of $H_T$.
\end{lem}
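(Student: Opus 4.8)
The plan is to exhibit an explicit eigenstate of $H_T$ that the gate sends to a non-eigenstate, using a maximally entangled pair to ``read out'' all matrix elements of $V_S$ at once. Writing $\{\ket{n}\}_{n=1}^{d_S}$ for an orthonormal energy eigenbasis of $H_S$, with $H_S\ket{n}=E_n\ket{n}$, I would take
\begin{equation}
\ket{\varphi}_T = \ket{\Phi}_{S_1A_1}\otimes\ket{\Phi}_{S_2A_2}, \qquad \ket{\Phi}=\frac{1}{\sqrt{d_S}}\sum_{n}\ket{n}\ket{n}.
\end{equation}
Since $H_{A_i}=-H_{S}$ and $H_S$ is diagonal in $\{\ket{n}\}$, a one-line computation gives $(H_{S_i}+H_{A_i})\ket{\Phi}=0$, so $\ket{\varphi}_T\in\aut[H_T]$ with eigenvalue $0$. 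This first step is routine.

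Applying the gate produces the product state $\ket{\psi}_T=\big[(V_{S_1}\otimes\id_{A_1})\ket{\Phi}\big]\otimes\big[(V^\dagger_{S_2}\otimes\id_{A_2})\ket{\Phi}\big]$. Expanding the first factor, $(V_S\otimes\id)\ket{\Phi}=\tfrac{1}{\sqrt{d_S}}\sum_{m,n}V_{mn}\ket{m}\ket{n}$ with $V_{mn}=\bra{m}V_S\ket{n}$, so under $H_{S_1}+H_{A_1}=H_S\otimes\id-\id\otimes H_S$ its energy is supported exactly on $\mathcal{D}:=\{E_m-E_n: V_{mn}\neq 0\}$. Because $\ket{\psi}_T$ is a tensor product, its energy random variable under $H_T$ is the independent sum of the two factors' energies, whose support has cardinality at least that of either factor (by Lemma~\ref{Def:Energy-Prob-vector-lemma}); hence it suffices to prove $|\mathcal{D}|\geq 2$, which yields $|\chi_\psi|\geq 2$ by Definition~\ref{Def:Energy-Prob-vector}.

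The crux is therefore the purely spectral claim: if $[V_S,H_S]\neq 0$ then $\mathcal{D}$ has at least two elements. I would argue by contradiction, assuming $\mathcal{D}=\{\Delta_0\}$, i.e.\ $V_{mn}\neq 0 \Rightarrow E_m-E_n=\Delta_0$. Taking a column index $n$ with $E_n$ maximal, unitarity of $V_S$ forces some $V_{mn}\neq 0$, so an energy $E_m=E_n+\Delta_0$ must occur in the spectrum, whence $\Delta_0\leq 0$; the same reasoning at a column of minimal energy gives $\Delta_0\geq 0$, so $\Delta_0=0$. But $\mathcal{D}=\{0\}$ means $V_{mn}=0$ whenever $E_m\neq E_n$, i.e.\ $V_S$ is block-diagonal in the energy eigenspaces and $[V_S,H_S]=0$, contradicting the hypothesis. (Equivalently, $\mathcal{D}=\{\Delta_0\}$ forces $[H_S,V_S]=\Delta_0 V_S$, and applying $\tr(V_S^\dagger\,\cdot\,)$ gives $0=\Delta_0 d_S$ by cyclicity and $V_S^\dagger V_S=\id$, again forcing $\Delta_0=0$.)

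I expect this last spectral step to be the only genuine obstacle; it is precisely where finite-dimensionality of $S$ is used essentially, since the max/min-energy argument fails for unbounded spectra, where honest ladder operators with $\Delta_0\neq 0$ exist. The remaining bookkeeping—verifying the eigenstate property and the support-of-a-sum inequality—is immediate from Lemma~\ref{Def:Energy-Prob-vector-lemma} and Definition~\ref{Def:Energy-Prob-vector}.
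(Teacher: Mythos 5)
Your proof is correct, but it takes a genuinely different route from the paper's. The paper first shows that $[V_{S},H_{S}]\neq 0$ implies $[V_{S_1}\otimes V_{S_2}^{\dag},H_{S_1}+H_{S_2}]\neq 0$ (via a trace argument ruling out $[V_S,H_S]\propto V_S$), then picks two energy eigenstates of the composite system whose average energy respectively increases and decreases under the gate (these must exist because unitary conjugation preserves $\tr H$), and uses the two-term superposition $\tfrac{1}{\sqrt2}(\ket{E_1}_S\ket{-E_1}_A+\ket{E_2}_S\ket{-E_2}_A)$; the final energy distribution then mixes a positive-mean and a negative-mean component and so cannot be a point mass. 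You instead take maximally entangled pairs on each $S_iA_i$, observe that the energy support of the first factor is exactly $\mathcal{D}=\{E_m-E_n: V_{mn}\neq 0\}$ (no cancellation, since the weights $|V_{mn}|^2$ are non-negative), and reduce everything to the purely spectral claim $|\mathcal{D}|\geq 2$, which your max/min-column (or trace) argument settles cleanly using unitarity and finite dimension. Your route is more self-contained -- it bypasses the doubled-system commutator lemma entirely and makes explicit exactly where finite-dimensionality enters (the argument indeed fails for ladder operators on unbounded spectra, mirroring the paper's remark that the main result relies on $d_S<\infty$). One thing the paper's specific two-eigenstate construction buys is that it is reused downstream as an easy witness for lower-bounding $\lambda_2(V_S,H_S)$ in Definition~\ref{app: defn r lambda new}; your maximally entangled state would serve the same purpose (and typically populates more energy values, which can only help $r(\chi_\psi)$), so nothing is lost.
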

\begin{proof}
 Let us first show that $[V_{S},H_{S}]\neq 0 \implies [V_{S_{1}}\otimes V_{S_{2}}^{\dagger},H_{S_{1}}+H_{S_2}]\neq 0$. We have 
\begin{align}
    [V_{S_{1}}\otimes V_{S_{2}}^{\dagger},H_{S_{1}}+H_{S_2}] &= [V_{S_1},H_{S_1}]\otimes V_{S_2}^\dag + V_{S_1}\otimes [V_{S_2}^\dag,H_{S_2}]\\
    & = [V_{S_1},H_{S_1}]\otimes V_{S_2}^\dag - V_{S_1}\otimes [V_{S_2},H_{S_2}]^\dag.
\end{align}
Since $[V_{S},H_{S}]\neq 0$, from the last identity we see that $[V_{S_{1}}\otimes V_{S_{2}}^{\dagger},H_{S_{1}}+H_{S_2}]=0$ iff $ [V_{S},H_{S}] = V_{S}$. Multiplying by $V_S^\dag$ from the left, this becomes equivalent to $H_{S} - V_S^\dag H_S V_S =\id_S$. Finally, taking the trace on both sides of the last identity we conclude that  $[V_{S_{1}}\otimes V_{S_{2}}^{\dagger},H_{S_{1}}+H_{S_2}]$ implies 
\begin{equation}
   0= \tr (H_{S} - V_S^\dag H_S V_S) = \tr \id_S,
\end{equation}
which is a contradiction. Hence, $[V_{S_{1}}\otimes V_{S_{2}}^{\dagger},H_{S_{1}}+H_{S_2}]\neq 0$.  

Next we group the systems as $S=S_1S_1$ and $A=A_1A_2$, with $H_S=H_{S_1}+H_{S_2}$, $H_A=H_{A_1}+H_{A_2}$ and $V_S = V_{S_1}\otimes V_{S_2}^\dag$, and show that there is a state $\ket{\varphi}_{SA} \in \aut[H_S+H_A]$ such that $V_{S}\otimes \id_A \ket{\varphi}_T$ is not an energy eigenstate. Since $H_A=-H_S$ we can chose the initial state of the form 
\begin{align}
\ket{\varphi}_{SA} &:=  \frac{\ket{E_{1}}_{S} \otimes \ket{-E_{1}}_{A} + \ket{E_{2}}_{S} \otimes \ket{-E_{2}}_{A}}{\sqrt{2}} \in\aut(H_S+H_A),
\end{align}
for any two $H_S \ket{E_i}_S = E_i \ket{E_i}_S$. Since $[V_S,H_S]\neq 0$ there must exist two energy eigenstates such that 
\begin{equation}
\tr H_S \, V_S \ketbra{E_1} V_S^\dag  > E_1 \quad \text{and} \quad \tr H_S\,  V_S \ketbra{E_2} V_S^\dag  < E_2.
\end{equation}
Therefore, $V_S \ket{E_1}_S\otimes \ket{-E_1}_A$ has positive average total energy, $V_S \ket{E_2}_S\otimes \ket{-E_2}_A$ negative, and the energy distribution for $\ket{\psi}_T$ is a mixture of the two. Hence, it can not be an eigenstate of the total energy, concluding the proof.
\end{proof}

The lemma guarantees that the bound~\eqref{app: bound prod 1} can be non-trivial for and NEPG. Nevertheless, we want to make it as tight as possible. To do so we introduce a formal maximization of the rhs of Eq.~\eqref{app: bound prod 1} with respect to the choice of the initial state. Since $r(\chi_\psi)$ enters linearly in the bound, and $\lambda(X_\psi)$  logarithmically, we are primarily interested in maximizing the former. This gives rise to the following definitions.
\begin{defn}\label{app: defn r lambda new}
    Consider  an Hamiltonian $H_S$ and a NEPG $V_S$. Let $T=S_1S_2 A_1 A_2$ be composed of four copies of the system $S$ with $H_S=H_{S_i}= -H_{A_i}$ and $H_T=H_{S_1}+H_{S_2}+H_{A_1}+H_{A_2}$. Define
    \begin{align}
        r_2(V_S,H_S) = \max \quad &r(\chi_\psi) \\
        {\rm such \, that }\quad &\ket{\varphi}_T\in \aut[H_T]
        \\ 
        &\ket{\psi}_T = V_{S_1}\otimes V_{S_2}^\dag \otimes \id_{A_1A_2} \ket{\varphi}_T
    \end{align}
    where $\chi_\psi$ is the energy spectrum of the state $\ket{\psi}_T$ (see  Def.~\ref{Def:Energy-Prob-vector}),  and $r(\chi_\psi)$ is given in the definition~\ref{Def:r-Set}. In addition, define
    \begin{align}
        \lambda_2(V_S,H_S) = \max \quad & \lambda(X_\psi) \\
        {\rm such \, that }\quad &\ket{\varphi}_T\in \aut[H_T] \\
        &\ket{\psi}_T = V_{S_1}\otimes V_{S_2}^\dag \otimes \id_{A_1A_2} \ket{\varphi}_T \\
        & r(\chi_\psi) = r_2(V_S,H_S) 
    \end{align}
    where $X_\psi$ is the energy distribution of the state $\ket{\psi}_T$ (see  Def.~\ref{Def:Energy-Prob-vector}),  and $\lambda(\chi_\psi)$ is given in Eq~\eqref{eq: lambda1}.
\end{defn}

Computing $\lambda_2(V_{S},H_S)$ explicitly may be difficult, however lower bounds are very easy to compute for example using the state guessed in Lemma \ref{cor:r-non-zero}.  Combining everything we summarize the results of this section in the following proposition.

\begin{rslt}[Entropic coherence production  for a general gate]  \label{lem:Entropy-twirled}
For a $d$-dimensional quantum system with Hamiltonian $H_S$, and a NEPG $V_S$ there is a choice of pure initial state $\rho_{\bm S \bm A}$, which is an eigenstate of the total energy $H_{\bm S \bm A}$, such that the resource production term satisfies 
\begin{equation}\label{eq: bound r prod gen}
      C\big(\nu_{\bm S\bm A}^{(m)}, H_{\bm S}+H_{\bm A}\big) 
 - C\big(\rho_{\bm S\bm A}^{(m)}, H_{\bm S}+H_{\bm A} \big) \geq  \frac{r_2(V_S,H_S)}{2} \log\big(2 \pi e \,m \, \lambda_2(V_S,H_S)\big)-o(1),
\end{equation}
where $r_2(V_S,H_S) \in [1,\dots,d]$ and $\lambda_2(V_S,H_S)> 0$ are given in the definition~\ref{app: defn r lambda new} and $o(1)$ refers to the $m \rightarrow \infty$ limit.
\end{rslt}

\begin{proof}
To obtain the bound we simply combined 
the definitions~\ref{app: defn r lambda new} with the bound on the entropy in Eq.~\eqref{app: bound prod 1}. The bounds $r_2(V_S,H_S)\geq 1$ and $\lambda_2(V_S,H_S)> 0$ follow from the lemma~\ref{cor:r-non-zero}, while $r_2(V_S,H_S)\leq d$ follows from the definition of $r$.
\end{proof}

Finally we prove a simple lower bound to $r_{2}(V_{S},H_{S})$ that holds for \textit{generic} $V_{S},H_{S}$. We phrase the statement choosing specific measure for unitaries and Hamiltonians, but as can be readily verifies the specific choice is not crucial. 
\begin{lem}
    Let $V_{S},H_{S}$ be respectively sampled from the Haar measure \cite{HaarTutorial} and the Gaussian Unitary ensemble \cite{GUEBook} on the $d_{S}$ dimensional Hilbert space, (or from absolutely continuous mesures with respect to them). Then, with probability one
    \begin{equation}
        r_{2}(V_{S},H_{S})\geq d_{S}-1.
    \end{equation}
\end{lem}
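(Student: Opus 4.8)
The plan is to exploit that $r_2(V_S,H_S)$ is defined (Def.~\ref{app: defn r lambda new}) as a \emph{maximum} of $r(\chi_\psi)$ over admissible initial states, so it suffices to exhibit a single eigenstate $\ket{\varphi}_T\in\aut[H_T]$ whose image $\ket{\psi}_T=(V_{S_1}\otimes V_{S_2}^\dagger\otimes\id_{A_1A_2})\ket{\varphi}_T$ has an energy support $\chi_\psi$ of incommensurability rank at least $d_S-1$. Reading $r(\chi_\psi)$ (Def.~\ref{Def:r-Set}) as the dimension over $\mathbb{Q}$ of the rational span of the pairwise differences of $\chi_\psi$, the task reduces to arranging that $\chi_\psi$ contain energies generating a $(d_S-1)$-dimensional rational space.

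First I would isolate two genericity events. Write $E_1,\dots,E_{d_S}$ and $\{\ket{E_j}\}$ for the eigenvalues and eigenbasis of $H_S$, and $V_{aj}:=\bra{E_a}V_S\ket{E_j}$. Event (a): for $H_S$ drawn from the GUE, the gaps $E_a-E_1$ with $a=2,\dots,d_S$ are linearly independent over $\mathbb{Q}$ almost surely; indeed any rational dependence amounts to $\sum_a c_a E_a=0$ with integer $c_a$ not all zero and $\sum_a c_a=0$, which cuts out a hyperplane in the space of spectra, and the exceptional set is a countable union of such hyperplanes, hence Lebesgue-null and assigned zero probability by the absolutely continuous GUE eigenvalue density. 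Event (b): for $V_S$ Haar-distributed independently of $H_S$, all $V_{aj}\neq 0$ almost surely, since for a fixed orthonormal basis each $\{V:V_{aj}=0\}$ is a proper real-analytic subvariety of $U(d_S)$ of zero Haar measure and a finite union over $(a,j)$ remains null. Conditioning on $H_S$ fixes the basis, so by Fubini both events hold jointly with probability one; the same argument applies verbatim to any measures absolutely continuous with respect to the GUE and Haar measures.

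Next I would build the witness. Since $H_{A_i}=-H_S$, the vector $\ket{\Phi}:=d_S^{-1/2}\sum_j\ket{E_j}\ket{E_j}$ is a zero-energy eigenstate of $H_{S_i}+H_{A_i}$, so
\[
\ket{\varphi}_T:=\ket{\Phi}_{S_1A_1}\otimes\ket{\Phi}_{S_2A_2}\in\aut[H_T]
\]
with total energy $0$. A direct computation gives that the component of $\ket{\psi}_T$ on $\ket{E_a}_{S_1}\ket{E_b}_{S_2}\ket{E_j}_{A_1}\ket{E_k}_{A_2}$, which carries total energy $E_a+E_b-E_j-E_k$, has amplitude $d_S^{-1}V_{aj}\overline{V_{kb}}$. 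By event (b) every such amplitude is nonzero, so $\chi_\psi=\{E_a+E_b-E_j-E_k:a,b,j,k\in\{1,\dots,d_S\}\}$ is the full set of total energies.

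Finally I would read off the rank: taking $b=k$ shows $\chi_\psi$ contains all differences $E_a-E_j$, in particular all $E_a-E_1$, which by event (a) span a $(d_S-1)$-dimensional rational space. Hence $r(\chi_\psi)\geq d_S-1$ and therefore $r_2(V_S,H_S)\geq d_S-1$ with probability one. The only delicate point, and the main obstacle, is the measure-theoretic genericity of the second step — specifically pinning down that rational independence of the gaps can fail only on a countable union of null hyperplanes, and that the nonvanishing of all Haar matrix elements holds generically — since everything else is a direct computation once the correct product-of-maximally-entangled-pairs witness state is selected.
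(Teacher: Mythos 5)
Your overall strategy matches the paper's: exhibit an explicit energy eigenstate $\ket{\varphi}_T$, use Haar-genericity to argue every relevant amplitude of $\ket{\psi}_T$ is nonzero so that $\chi_\psi$ is the full set of attainable total energies, use GUE-genericity for rational independence of the spectrum, and then convert a rational-span statement into a bound on $r(\chi_\psi)$. Your witness (a product of two maximally entangled pairs, giving $\chi_\psi=\{E_a+E_b-E_j-E_k\}$) differs from the paper's choice $\ket{E_1,E_1}_{S_1S_2}\otimes\ket{E_1,E_1}_{A_1A_2}$ (giving $\chi_\psi=\{E_i+E_j-2E_1\}$), and your measure-theoretic justifications for the two genericity events are fine and in fact more detailed than the paper's.

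The gap is in the very last step. You "read" $r(\chi_\psi)$ as the dimension of the rational span of the pairwise differences of $\chi_\psi$; but Definition~\ref{Def:r-Set} defines $r$ via canonical partitions, and the only bridge to rational spans is Lemma~\ref{lem: part span}, which gives $r(\chi)\geq \dim_{\mathbb{Q}}[{\rm span}_{\mathbb{Q}}(\chi)]-1$ for the span of the set \emph{itself}, not of its differences. Applied directly to your $\chi_\psi$ this fails to deliver the claim: every element $E_a+E_b-E_j-E_k$ is a sum of two spectral differences, so ${\rm span}_{\mathbb{Q}}(\chi_\psi)={\rm span}_{\mathbb{Q}}(\{E_a-E_1\}_a)$ has dimension $d_S-1$ generically, and Lemma~\ref{lem: part span} only yields $r(\chi_\psi)\geq d_S-2$. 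The missing idea is the shift-invariance of the incommensurability rank, which is exactly what the paper exploits: it replaces $\{E_i+E_j-2E_1\}$ by the shifted set $\{E_i+E_j\}$, whose own rational span has full dimension $d_S$ (since $2E_i$ belongs to it), and only then applies Lemma~\ref{lem: part span}. Your argument is repairable the same way — shift $\chi_\psi$ by any $c$ rationally independent of ${\rm span}_{\mathbb{Q}}(\{E_a-E_1\}_a)$ to raise the span dimension to $d_S$ before invoking the lemma — but as written the identification of $r$ with the span of differences is asserted rather than derived, and it is precisely the step that carries the factor separating $d_S-1$ from $d_S-2$.
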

\begin{proof}   
   Following the Def.~\ref{app: defn r lambda new} consider the initial state 
    \begin{equation}
    \ket{\varphi}_{T}:=\ket{E_{1},E_{1}}_{S_{1}S_{2}}\otimes \ket{E_{1},E_{1}}_{A_{1}A_{2}}\in \aut[H_T].
    \end{equation}
By definition we have $r_2(V_S,H_S)\geq r(\chi_\psi)$ where $\ket{\psi}_T= V_{S_{1}}\otimes V^{\dagger}_{S_{2}} \otimes \id_{A_1A_2}\ket{\varphi}_T$ and $\chi_\psi$ is its energy spectrum. With probability one, the random gate $V_S$ has no zeros (when written in the energy eigenbasis), implying 
\begin{equation}
    \bra{E_{i},E_{j}}_{S_{1}S_{2}}\bra{E_{1},E_{1}}_{A_{1}A_{2}} \ket{\psi}_T\neq 0 \qquad \forall i,j =1,\dots, d_S
\end{equation}
Hence, almost certainty the target set contains all sums of energies of the form $\chi_\psi=\{E_i+E_j-2E_1\}_{i,j}$, and 
\begin{equation}
r_2(V_S,H_S)\geq r(\{E_i+E_j-2E_1\}_{i,j})= r(\{E_i+E_j\}_{i,j}),
\end{equation}
where in the last equality we used the fact that the incommensurability rank of a set is invariant under shifts [...].

Next,  we recall that by Lemma \ref{lem: part span}, $r(\{E_i+E_j\}_{i,j})\geq {\rm dim}_{\mathbb{Q}}[{\rm span}_{\mathbb{Q}}[\{E_i+E_j\}_{i,j}]]-1$, where ${\rm span}_{\mathbb{Q}}[\chi]= \sum_{i=1}^{|\chi|} q_i x_i$ with $q_i \in \mathbb{Q}$ and $x_i \in \chi$ is the rational span of a set.  Furthermore, it is easy to see that ${\rm span}_{\mathbb{Q}}[\{E_i+E_j\}_{i,j}]= {\rm span}_{\mathbb{Q}}[\{E_i\}_{i}]$, leading to 
\begin{equation}
  r_2(V_S,H_S) \geq   {\rm dim}_{\mathbb{Q}}[{\rm span}_{\mathbb{Q}}[\{E_i\}_{i}]]-1.
\end{equation}

Finally, note that for a random Hamiltonian $H_S$ its $d_{S}$ eigenergies $E_i$ are real numbers that are \textit{almost certainly} linearly independent over $\mathbb{Q}$, i.e. for $q_{i}\in \mathbb{Q}$ we have $\sumab{i=1}{d_{S}}q_{i}E_{i}=0$ if and only if $q_{i}=0$. Hence, with probability one $ {\rm dim}_{\mathbb{Q}}[{\rm span}_{\mathbb{Q}}[\{E_i\}_{i}]]=d_S$, completing the proof.

\end{proof}

\subsubsection{Resource production for a qubit gate}

We now specifically consider the case of a qubit gate $\cV_S$, with $\ket{0}_S$ and $\ket{1}_S$ denoting the two eigenstates of $H_S$ with different energies. The gate can then be represented by the matrix
\begin{equation}\label{app: gate qubit}
    V_S = \left(\begin{array}{cc}
        \cos(\theta) & \sin(\theta) e^{\ii \varphi} \\
         -\sin(\theta)e^{-\ii \varphi} &\cos(\theta)
    \end{array}\right)
\end{equation}
in the energy basis, and for the purpose of this section we can assume that the energies are $E_0=0$ and $E_1=1$ without loss of generality. 

Let us chose the pure initial state $\rho_{\bm S\bm A}^{(m)} =\ketbra{\psi^{(m)}}_{\bm S\bm A}$ to be given by
\begin{equation} \label{Eq:DefInitialQubit}
\ket{\psi^{(m)}}_{\bm S\bm A} =  \frac{1}{\sqrt{2m+1}}\sum_{k=0}^{2m}\bigotimes_{i=1}^k \ket{1,1}_{S_iA_i} \bigotimes_{i=k+1}^{2m} \ket{0,0}_{S_iA_i}  =\frac{1}{\sqrt{2m+1}}\sum_{k=0}^{2m}\ket{1^{\otimes k},0^{\otimes (2m-k)}}_{\bm S}\otimes \ket{1^{\otimes k},0^{\otimes (2m-k)}}_{\bm A} ,  
\end{equation}
where $\ket{1^{\otimes k},0^{\otimes 2m-k}}_{\bm S}$ means that the systems $S_1,\dots, S_k$ are in the state $\ket{1}_{S_i}$ and the remaining systems are in the state $\ket{0}_{S_i}$. Note that since all $A$ are copies of $S$ with inverted energy scale  we have $(H_{S_i}+ H_{A_i})\ket{1,1}_{S_iA_i} =(H_{S_i}+ H_{A_i})\ket{0,0}_{S_iA_i}=0$, and $\ket{\psi^{(m)}}_{\bm S\bm A}$ is an eigenstate of $H_{\bm S \bm A}$ of total energy zero.

The $2m$-copy  gate of interest can be written as $\cV_{\bm S}^{(m)}=\bigotimes_{j=1}^{m}(\cV_{S_{2j-1}}\otimes \cV_{S_{2j}}^*) = \bigotimes_{i=1}^{2m} \bar \cV_{\bm S_i}$ where slightly abusing the notation we introduced $ \bar \cV_{S_i} = \cV_{S_i}$ for odd $i$ and $ \bar \cV_{S_i} = \cV_{S_i}^*$ for even $i$. After its application the state becomes
\begin{align} 
\ket{\nu^{(m)}}_{\bm S\bm A} \coloneqq \cV_{\bm S}^{(m)} \ket{\psi^{(m)}}_{\bm S\bm A} = \frac{1}{\sqrt{2m+1}}\sum_{k=0}^{2m} \ket{\zeta_k^{(m)} }_{\bm S \bm A} \qquad \text{with} \label{Eq:DefFinalQubit} \\
\ket{\zeta_k^{(m)} }_{\bm S \bm A}:= \bigotimes_{i=1}^k \bar V_{S_i} \ket{1}_{S_i} \bigotimes_{i=k+1}^{2m} \bar V_{S_i} \ket{0}_{S_i} \otimes \ket{1^{\otimes k},0^{\otimes (2m-k)}}_{\bm A}.
\end{align}

\begin{rslt} \label{res: res prod qubit}
[Entropic coherence production for a qubit gate]
For a qubit NEPG $V_S$ in Eq.~\eqref{app: gate qubit},  let $\rho_{\bm S\bm A}^{(m)}$ and $\nu^{(m)}_{\bm S\bm A}$ be defined as in Eqs.\eqref{Eq:DefInitialQubit},\eqref{Eq:DefFinalQubit}
then  in the $m\rightarrow \infty$ limit the following bound holds
\begin{equation}
       C\big(\nu_{\bm S\bm A}^{(m)}, H_{\bm S}+H_A\big) 
 - C\big(\rho_{\bm S\bm A}^{(m)}, H_{\bm S}+H_{\bm A} \big)\geq \log(4m\sin^2(\theta))-\delta_{1,\sin^2(\theta)}-o(1).
  \end{equation}    
Note that this bound can be put in the form of the general Result~\ref{lem:Entropy-twirled} by setting
\begin{equation}
r_2(V_S,H_S)=2 \qquad \text{and}\qquad \lambda_2(V_{S},H_S) =  \frac{ \sin^2(\theta)}{\pi (1+\delta_{1,\sin^2(\theta))})}
\end{equation}
in Eq.~\eqref{eq: bound r prod gen} for all qubit gates.
\end{rslt}

\begin{proof}    
By virtue of Definition \ref{Def:Energy-Prob-vector} we have that 
\begin{equation}
    S\left(\mathcal{G}_{H_{\bm S \bm A}}[ \ket{\nu^{(m)}}\bra{\nu^{(m)}}_{\bm S\bm A} ] \right) = S(X_{\nu^{(m)}}),
\end{equation}
where $X_{\nu^{(m)}}$ is the random variable describing the energy distribution of the state $\ket{\nu^{(m)}}_{\bm S\bm A}$. 

Now, remark that these states satisfy $\bra{\zeta_k^{(m)} } H_{\bm S \bm A} \ket{\zeta_{k'}^{(m)}} = 0$ for $k\neq k'$ since they are supported on orthogonal eigenstates for $\bm A$. Hence, $X_{\nu^{(m)}}$ is a statistical mixture (not a sum) of the random variables $X_{\zeta_k^{(m)}}$ describing the energy distribution of $\ket{\zeta_k^{(m)} }_{\bm S \bm A}$. The weigths of the mixture are uniform and equal to $\frac{1}{2m+1}$ (see Eq. \eqref{Eq:DefFinalQubit}). 
Here and below, for a discrete respectively continuus random variable $X$ we denote its probability distribution respectively density with square brakets, i.e. $X[n]:={\rm Pr}(X=n)$, the we have 
\begin{equation}
    X_{\nu^{(m)}}[n]=\frac{1}{2m+1}\sum_{k=0}^{2m+1}X_{\zeta_k^{(m)}}[n].
\end{equation}

To compute these random variables note that $\ket{\zeta_k^{(m)} }_{\bm S \bm A}$ and therefore $X_{\zeta_k^{(m)}}$ is the sum random variable describing energy distribution for all systems, i.e.
\begin{equation}
    X_{\zeta_k^{(m)}} = \sum_{i=1}^k B^{(i)}_{\cos^2(\theta)} + \sum_{i=k+1}^{2m} B^{(i)}_{\sin^2(\theta)} - k
\end{equation}
where $B^{(i)}_{p}$ are independent Bernoulli variables with 
$B^{(i)}_{p}[1]= p$ and $B^{(i)}_{p}[0]=1-p$,
and $B^{(i)}_{\cos^2(\theta)}$ with $B^{(i)}_{\sin^2(\theta)}$ describe the energy distribution of the states $\bar V_{S_i} \ket{1}_{S_i}$ and $\bar V_{S_i} \ket{0}_{S_i}$ respectively.

In the special case $\sin^{2}(\theta)=1$ the computation of the entropy is trivial,  as $X_{\zeta_k^{(m)}}[n]= \delta_{n,2m-2k}$ and
\begin{equation}
    S(X_{\nu^{(m)}})=\log(2m+1)\geq \log(4m)-1
\end{equation}
consistently with the thesis. While the  case $\sin^{2}(\theta)=0$ describes the trivial gate.

Thus, from now on we will consider $\sin^{2}(\theta)\in (0,1)$,
and characterize the probability distribution of $X_{\zeta_k^{(m)}}$ in the  $m\to \infty$ limit. $X_{\zeta_k^{(m)}}$ is the sum of $2m$ independent Bernoulli random variables (plus a constant), thus according to   \cite{NormalApproxStein2011, Bery-essenExpanderWalks2023,DiscretizeNormalApprox2011} its probability distribution converge to a discretized normal distribution of equal average and variance. More precisely, let 
\begin{align}
    \mu_{m,k,\theta} &:= \mathds{E}[X_{\zeta_k^{(m)}}]= 2 (m-k) \sin ^2(\theta ) \\
    \sigma_{m,\theta}^{2} &:={\rm Var}[X_{\zeta_k^{(m)}}]=2m\cos^2(\theta)\sin^2(\theta)
\end{align}  

and 
\begin{equation}  
 N_{d}(\mu,\sigma)[n]:={\rm Pr}\big(n-1/2 < N(\mu,\sigma^2)\leq n+1/2 \big),
\end{equation}
where $N(\mu,\sigma^2)$ is normal random variable with average $\mu$ and variance $\sigma^{2}$. Then, the total variation distance satisfies \cite{DiscretizeNormalApprox2011}
\begin{equation}
     D\left( X_{\zeta_k^{(m)}},N_{d}\big(\mu_{m,k,\theta},\sigma_{m,\theta}\big)\right) :=\frac{1}{2}  \sum_{z\in \mathbf{Z}} \abs{X_{\zeta_k^{(m)}}[z]-N_{d}(\mu_{m,k,\theta},\sigma_{m,\theta})[z]}\leq \frac{C}{\sqrt{m}} \quad = \frac{7.6}{\sigma_{m,\theta}}
\end{equation}

By joint convexity of the total variation distance this property is preserved when mixing the random variables, implying
\begin{equation}
    D\left(X_{\nu^{(m)}},Z_{m}\right) \leq \frac{C}{\sqrt{m}},
\end{equation}
where  $Z_{m}$ is the uniform mixture of the $2m+1$ random variables $\{N_{d}\big(\mu_{m,k,\theta},\sigma_{m,\theta}\big)\}_{k=0,1,..2m}$, i.e., it's probability distribution reads
\begin{equation}
   Z_{m}[n]:= \frac{1}{2m+1}\sum_{k=0}^{2m}N_{d}\big(\mu_{m,k,\theta},\sigma_{m,\theta}\big)[n].
\end{equation}
 Notice that by construction $P(X_{\nu^{(m)}}<-2m)=P(X_{\nu^{(m)}}>2m)=0$, 
implying that, by Fannes-Audenaert \cite{EntropyReg} inequality 
\begin{equation} \label{Eq:EntropyDiff}
    \sum_{n=-2m}^{2m}-Z_{m}[n]\log\big(Z_{m}[n] \big)-S(X_{\nu^{(m)}})
      \leq  D\left(X_{\nu^{(m)}},Z_{m}\right)\log(4m+1)\leq \frac{C\log(4m+1)}{\sqrt{m}}=o(1)
\end{equation}
Thus the entropy production can be approximated with negligible error by the entropy of $Z_{m}$ on the $n\in [-2m,2m]$ interval. Its probability distribution function is given by 
\begin{align}
    Z_m[n] &= \frac{1}{2m+1}\sum_{k=0}^{2m} \int_{n-1/2}^{n+1/2} N(\mu_{m,k,\theta},\sigma_{m,\theta})[x]\, \dd x \\
    &= \frac{1}{2m+1}\sum_{k=0}^{2m} \int_{\frac{n-1/2-\mu_{m,k,\theta}}{\sigma_{m,\theta}}}^{\frac{n+1/2-\mu_{m,k,\theta}}{\sigma_{m,\theta}}} N(0,1)[x]\, \dd x \\
    &=   \frac{1}{2m+1}\int_{-\infty}^{\infty} \kappa_{n,m,\theta}(x)\,  N(0,1)[x]\, \dd x,
\end{align}
where we introduced the function
\begin{equation}
    \kappa_{n,m,\theta}(x) := \sum_{k=0}^{2m} \int_{\frac{n-1/2-\mu_{m,k,\theta}}{\sigma_{m,\theta}}}^{\frac{n+1/2-\mu_{m,k,\theta}}{\sigma_{m,\theta}}} \, \delta(x-x')\dd x' 
\end{equation}
counting the number of intervals $\left[\frac{n-1/2-\mu_{m,k,\theta}}{\sigma_{m,\theta}}, \frac{n+1/2-\mu_{m,k,\theta}}{\sigma_{m,\theta}}\right]$, labelled by $k=0,1,...,2m$, containing the value $x$. This is the case for $k$ fulfilling
\begin{align}
n-x \sqrt{\frac{m}{2}}|\sin(2\theta)| - \frac{1}{2}\leq 2(m-k) \sin^2(\theta) \leq n-x \sqrt{\frac{m}{2}}|\sin(2\theta)|+ \frac{1}{2}.
\end{align}
We see that in the bulk, i.e. for $x$ satisfying
\begin{equation}
    \big|n-x \sqrt{\frac{m}{2}}|\sin(2\theta)| \big| \leq 2m \sin^2(\theta)-\frac{1}{2} \quad \Longleftrightarrow \quad  \frac{n-2m \sin^2(\theta)+1/2}{\sqrt{\frac{m}{2}}|\sin(2\theta)|}\leq x\leq \frac{n+2m \sin^2(\theta)-1/2}{\sqrt{\frac{m}{2}}|\sin(2\theta)|},
\end{equation} 
the value of $\kappa_{n,m,\theta}(x)$ oscillates between 
\begin{align}
    \kappa_{n,m,\theta}(x) = \left\lfloor{\frac{1}{2\sin^2 (\theta)}} \right\rfloor \quad \text{and} \quad \kappa_{n,m,\theta}(x) = \left\lfloor{\frac{1}{2\sin^2 (\theta)}} \right\rfloor + 1,
\end{align}
with a frequency $\frac{\sqrt{\frac{m}{2}}|\sin(2\theta)|}{2\sin^2(\theta)}$
increasing with $m$. Furthermore, its average value over a period is simply $\kappa_{n,m,\theta}(x) \approx \frac{1}{2\sin^2 (\theta)}$. We therefore find that 
\begin{align}
     Z_m[n] &= \frac{1}{2m+1}\int_{-\infty}^{\infty} \kappa_{n,m,\theta}(x)\,  N(0,1)[x]\, \dd x \geq  \frac{1}{(2m+1)2 \sin^2(\theta) }\int_\frac{n-2m \sin^2(\theta)+1/2}{\sqrt{\frac{m}{2}}|\sin(2\theta)|}^\frac{n+2m \sin^2(\theta)-1/2}{\sqrt{\frac{m}{2}}|\sin(2\theta)|} \,  N(0,1)[x]\, \dd x +o(1),
\end{align}
where we restricted the integration to the bulk. Let us now denote $-2m \leq n := 2m  r \leq 2m$ with $-1\leq r \leq 1$, we see that for $|r|< \sin^2(\theta)$ the integration boundaries satisfy
\begin{align}
    \frac{n-2m \sin^2(\theta)+1/2}{\sqrt{\frac{m}{2}}|\sin(2\theta)|} &= \frac{2m( r - \sin^2(\theta))+1/2}{\sqrt{\frac{m}{2}}|\sin(2\theta)|} \to -\infty \\
    \frac{n+2m \sin^2(\theta)-1/2}{\sqrt{\frac{m}{2}}|\sin(2\theta)|} &= \frac{2m(r+ \sin^2(\theta))-1/2}{\sqrt{\frac{m}{2}}|\sin(2\theta)|} \to \infty
\end{align}
and the integral converges to one. Hence, there are $4m \sin^2(\theta)$ values of $n$ for which 
$Z_m[n] \geq \frac{1}{2(2m+1) \sin^2(\theta)},$
implying
\begin{align}
  S(X_{\nu^{(m)}}) &\geq   \sum_{n=-2m}^{2m}-Z_{m}[n]\log\big(Z_{m}[n] \big) +o(1)\geq \frac{4m \sin^2(\theta)}{(2m+1)2 \sin^2(\theta)} \log(2(2m+1)\sin^2(\theta))+o(1)\\
& =\log(4 m \sin^2(\theta))+o(1).
\end{align}

\end{proof}

\subsection{Step 2: Refining resource regularity}

The next step is to upper bound the difference in the relative entropy produced by the ideal and approximate gates
\begin{align}C( {\nu}^{(m)}_{\bm S\bm A},H_{\bm S\bm A}) - C({\tnu}^{(m)}_{\bm S\bm A},H_{\bm S \bm A}) &= S(\mathcal{G}_{H_{\bm S \bm A}}[{\nu}^{(m)}_{\bm S\bm A}]) -S(\mathcal{G}_{H_{\bm S \bm A}}[{\tnu}^{(m)}_{\bm S\bm A}]) \label{eq: FA1} 
+ S({\tnu}^{(m)}_{\bm S\bm A}) - S({\nu}^{(m)}_{\bm S\bm A}), 
\end{align}
as a function of the trace distances between the final states  $D\left( \tnu_{\bm S \bm A}^{(m)},\nu_{\bm S\bm A}^{(m)} \right)$, which in turn is bounded by the gate error via Eq.~\eqref{eq:distance4m}. A key ingredient here is the following regularity condition for the von Neumann entropy.

\begin{lem}[Refined Fannes-Audenaert inequality  \cite{EntropyContinuity2024}] \label{lem:ref_FanAud}
    For density operators $\rho$ and $\sigma$ one has
\begin{equation}
    S(\rho)-S(\sigma)\leq D(\rho,\sigma)(S(\Delta_{+})-S(\Delta_{-}))+h_2\big(D(\rho,\sigma)\big)
\end{equation}
where $\Delta_{+/-}$ are the unique positive, othogonal operators satisfying  $D(\rho,\sigma)(\Delta_{+}-\Delta_{-})=\rho-\sigma$ (Jordan-Hahn decomposition).
\end{lem}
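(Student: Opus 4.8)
The plan is to reduce the inequality to a comparison of two convex decompositions of one common operator, so that the sharp binary-entropy remainder $h_2$ appears automatically. First I would record the elementary consequences of the Jordan--Hahn data. Writing $T:=D(\rho,\sigma)$, the defining relation $T(\Delta_{+}-\Delta_{-})=\rho-\sigma$ together with $\tr(\rho-\sigma)=0$ and $\tfrac12\|\rho-\sigma\|_{1}=T$ forces $\tr\Delta_{+}=\tr\Delta_{-}=1$; hence the $\Delta_{\pm}$ are genuine density operators and $S(\Delta_{\pm})$ are well defined, with $T\Delta_{+}=(\rho-\sigma)_{+}$ and $T\Delta_{-}=(\rho-\sigma)_{-}$.

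The core of the argument is a cancellation. Introduce the Hermitian operator $\xi:=\rho-T\Delta_{+}=\sigma-T\Delta_{-}$, which satisfies $\tr\xi=1-T$. \emph{If} $\xi\ge 0$, then $\tfrac{\xi}{1-T}$ is a state and both $\rho$ and $\sigma$ are convex combinations sharing the same component,
\[ \rho=(1-T)\,\tfrac{\xi}{1-T}+T\Delta_{+},\qquad \sigma=(1-T)\,\tfrac{\xi}{1-T}+T\Delta_{-}. \]
Applying the mixing upper bound $S(\sum_{i}p_{i}\omega_{i})\le\sum_{i}p_{i}S(\omega_{i})+h(\{p_{i}\})$ to $\rho$ and concavity of $S$ (the matching lower bound) to $\sigma$, the shared term $(1-T)S(\tfrac{\xi}{1-T})$ cancels and one is left with exactly
\[ S(\rho)-S(\sigma)\le T\big(S(\Delta_{+})-S(\Delta_{-})\big)+h_{2}(T), \]
since the mixing entropy of the weights $(1-T,T)$ is precisely $h_{2}(T)$. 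In the commuting case $\rho=\mathrm{diag}(p)$, $\sigma=\mathrm{diag}(q)$ this is unconditional, because there $\xi=\mathrm{diag}(\min(p_{i},q_{i}))\ge 0$, so the lemma holds verbatim for classical distributions.

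The main obstacle is that for non-commuting $\rho,\sigma$ the operator $\xi$ need not be positive: already for the pure states $\rho=\ket{0}\bra{0}$ and $\sigma=\ket{+}\bra{+}$ one checks that $\rho-(\rho-\sigma)_{+}$ has negative determinant, hence a negative eigenvalue, so the two displays above are no longer valid convex decompositions and the cancellation breaks. Thus the entire difficulty is to transport the clean classical argument to the quantum setting while retaining the operator quantities $S(\Delta_{\pm})$ and $T=D(\rho,\sigma)$. A naive spectral reduction (using Mirsky's inequality $\tfrac12\|\lambda^{\downarrow}(\rho)-\lambda^{\downarrow}(\sigma)\|_{1}\le T$ to pass to the sorted eigenvalue vectors, where positivity of the classical $\xi$ is automatic) does \emph{not} suffice, since it replaces the state-dependent $\Delta_{\pm}$ by spectral data and destroys the refinement we want.

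The route I would pursue to close this gap is to embed $\rho$ and $\sigma$ as the two diagonal blocks of a single operator on $\mathbb{C}^{2}\otimes\mathcal{H}$ (a common ``coupling'' built from $\rho,\sigma,\Delta_{\pm}$) engineered so that the analogue of $\xi$ is positive by construction, then run the mixing/concavity pair in the enlarged space and project back via data processing; this is precisely the point at which the sharp binary-entropy remainder of the refined Fannes--Audenaert estimate is needed. Since establishing that operator positivity in full generality is the delicate technical step, in the proof I would invoke the refined continuity result of \cite{EntropyContinuity2024} directly rather than re-derive the operator inequality from scratch, and merely verify that its hypotheses hold for the states appearing in the application.
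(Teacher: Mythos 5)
The paper does not actually prove this lemma: it is imported verbatim from the cited reference \cite{EntropyContinuity2024}, so the ``paper's proof'' is just that citation, and your proposal ultimately lands in the same place by deferring the non-commutative case to the same reference. Your added partial derivation is sound --- the normalization $\tr\Delta_{\pm}=1$, the exact cancellation argument via $\xi=\rho-D(\rho,\sigma)\Delta_{+}=\sigma-D(\rho,\sigma)\Delta_{-}$ in the commuting case, and the observation that $\xi\not\geq 0$ in general (so that neither this decomposition nor a sorted-eigenvalue reduction preserves the $S(\Delta_{\pm})$ refinement) are all correct --- but since the coupling construction for the general case is only sketched, the proposal establishes no more and no less than the paper itself does.
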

Since ${\rm Rank}(\rho)\geq {\rm Rank}\Delta_{+}$, we also have 
\begin{equation}
    S(\rho) - S(\sigma) \leq \log\big({\rm Rank}(\rho)\big)D(\rho,\sigma)+h_2\big(D(\rho,\sigma)\big),
\end{equation}
Applying this bound to the entropy differences in Eq.~\eqref{eq: FA1} we obtain
\begin{align}
S({\tnu}^{(m)}_{\bm S\bm A}) - S({\nu}^{(m)}_{\bm S\bm A}) &\leq \log\big({\rm Rank}({\tnu}^{(m)}_{\bm S\bm A})\big) D({\tnu}^{(m)}_{\bm S\bm A},{\nu}^{(m)}_{\bm S\bm A}) + h_2\big(D({\tnu}^{(m)}_{\bm S\bm A},{\nu}^{(m)}_{\bm S\bm A})\big)
\\
    S(\mathcal{G}_{H_{\bm S \bm A}}[{\nu}^{(m)}_{\bm S\bm A}]) -S(\mathcal{G}_{H_{\bm S \bm A}}[{\tnu}^{(m)}_{\bm S\bm A}]) &\leq \log\big({\rm Rank}(\mathcal{G}_{H_{\bm S \bm A}}[{\nu}^{(m)}_{\bm S\bm A}])\big) D(\mathcal{G}_{H_{\bm S \bm A}}[{\nu}^{(m)}_{\bm S\bm A}],\mathcal{G}_{H_{\bm S \bm A}}[{\tnu}^{(m)}_{\bm S\bm A}]) \\ &+ h_2\big(D(\mathcal{G}_{H_{\bm S \bm A}}[{\nu}^{(m)}_{\bm S\bm A}],\mathcal{G}_{H_{\bm S \bm A}}[{\tnu}^{(m)}_{\bm S\bm A}])\big)\\
    &\leq \log\big({\rm Rank}(\mathcal{G}_{H_{\bm S \bm A}}[{\nu}^{(m)}_{\bm S\bm A}])\big)  D({\tnu}^{(m)}_{\bm S\bm A},{\nu}^{(m)}_{\bm S\bm A}) \\
    &+h_2\big(D({\tnu}^{(m)}_{\bm S\bm A},{\nu}^{(m)}_{\bm S\bm A})\big),
\end{align}
where we used the monotonicity of the trace distance $D(\mathcal{G}_{H_{\bm S \bm A}}[{\nu}^{(m)}_{\bm S\bm A}],\mathcal{G}_{H_{\bm S \bm A}}[{\tnu}^{(m)}_{\bm S\bm A}])  \leq D({\tnu}^{(m)}_{\bm S\bm A},{\nu}^{(m)}_{\bm S\bm A})$, and the fact that the binary entropy $h_2(x)$ is an increasing function for $x\leq \frac{1}{2}$ (assuming $D({\tnu}^{(m)}_{\bm S\bm A},{\nu}^{(m)}_{\bm S\bm A})\leq \frac{1}{2}$). Combining the two bounds we obtain the following intermediate results
\begin{equation}\label{eq: taming bound}
C( {\nu}^{(m)}_{\bm S\bm A},H_{\bm S\bm A}) - C({\tnu}^{(m)}_{\bm S\bm A},H_{\bm S \bm A}) \leq \log\big({\rm Rank}({\tnu}^{(m)}_{\bm S\bm A})\, {\rm Rank}(\mathcal{G}_{H_{\bm S \bm A}}[{\nu}^{(m)}_{\bm S\bm A}])\big) D({\tnu}^{(m)}_{\bm S\bm A},{\nu}^{(m)}_{\bm S\bm A}) + 2\,  h_2\big(D({\tnu}^{(m)}_{\bm S\bm A},{\nu}^{(m)}_{\bm S\bm A})\big).
\end{equation}
The rest of the section is hence devoted to bound the ranks of ${\tnu}^{(m)}_{\bm S \bm A}$ and $\mathcal{G}_{H_{\bm S \bm A}}[{\nu}^{(m)}_{\bm S\bm A}]$.

\subsubsection{Bounding the rank of  $\mathcal{G}_{H_{\bm S \bm A}}[{\nu}^{(m)}_{\bm S\bm A}]$}

To bound the rank of $\mathcal{G}_{H_{\bm S \bm A}}[{\nu}^{(m)}_{\bm S\bm A}]$, notice that the state ${\nu}^{(m)}_{\bm S\bm A}$ is pure, therefore after twirling its rank is bounded by the number of different energies of the total Hamiltonian (number of projectors in the twirling map)
 \begin{equation}
    {\rm Rank}(\mathcal{G}_{H_{\bm S \bm A}}[{\nu}^{(m)}_{\bm S\bm A}]) \leq \abs{\mathcal{E}(H_{\bm S \bm A})}.
\end{equation}
Recall, that the total system consists of $2m$ copies of $SA$ with Hamiltonian $H_{SA}=H_S +H_A$. Since $H_S$ and $H_A$ are identical up to the sign flip, one sees that $\abs{\mathcal{E}(H_{S A})}\leq d_S^2 - (d_S-1)$, since the energy $0$ is $d_S$-degenerate. Now, since all  $2m$ copies of  $SA$ have the same Hamiltonian, the total Hamiltonian $H_{\bm S\bm A}$ is invariant under their permutation, and the number of different energies it admits is upper bounded by the number of different ways to pick $2m+1$ symbols from an alphabet of $\# =\frac{d_S(d_S-1)}{2}$ symbols, and then choosing their sign i.e.
 \begin{equation}\label{app eq: bound rank gen}
    {\rm Rank}(\mathcal{G}_{H_{\bm S \bm A}}[{\nu}^{(m)}_{\bm S\bm A}]) \leq \abs{\mathcal{E}(H_{\bm S \bm A})}\leq 2^{\frac{d_S(d_S-1)}{2}}\frac{(2m +\#)!}{(2m +1)!(\#-1)!} = 2^{\frac{d_S(d_S-1)}{2}}\binom{2m +\frac{d_S(d_S-1)}{2}}{2m+1}.
\end{equation}
Notice however that for $d_{S}=2$ we have $\abs{\mathcal{E}(H_{\bm S \bm A})}\leq 4m+1$.

\subsubsection{Bounding the rank of  ${\tnu}^{(m)}_{\bm S\bm A}$}

Next, to bound the rank of  ${\tnu}^{(m)}_{\bm S\bm A}$ we consider three different approaches, depending on what is known about the battery system. \\

{\it (i)} A straightforward but generic bound is given by the dimension of the Hilbert space on which the total system lives 
\begin{equation}\label{eq: bound rank no ass}
      {\rm Rank}({\tnu}^{(m)}_{\bm S\bm A}) \leq {\rm dim}( \mathcal{H}_{\bm S \bm A}) = d_S^{4m}.
\end{equation}

{\it (ii)}
Now we derive a strengthened bound on the rank of ${\tnu}^{(m)}_{\bm S\bm A}$ valid for a \textit{proportionate} battery.  For the sake of clarity we start by considering the simplified hypothesis of a fixed $d_{B}$-dimensional battery. Recall that the state of interest is obtained via ${\tnu}^{(m)}_{\bm S\bm A} = \tr_B \mathcal{U}_{\bm S B}^{(m)} [\rho^{(m)}_{\bm S \bm A} \otimes \beta_B]$ from  $\rho^{(m)}_{\bm S \bm A}=\ketbra{\psi^{(m)}}_{\bm S \bm A}$. Then, if the initial state of the battery is pure $\beta_B = \ketbra{\beta}_B$, the final state $U_{\bm S B}^{(m)} \ket{\psi^{(m)}}_{\bm S \bm A}\ket{\beta}_B$ is also pure. Hence, for the bipartition $\bm S \bm A |B$ it can be put in the Schmidt diagonal form with at most $d_B$ non-zero coefficients, ensuring that ${\rm Rank}(\tr_B \mathcal{U}_{\bm S B}^{(m)} [\rho^{(m)}_{\bm S \bm A} \otimes \beta_B])\leq d_B$. In turn, if $\beta_B$ is not pure, it is a mixture of ${\rm Rank}(\beta_B)\leq d_B$ pure states. Then ${\tnu}^{(m)}_{\bm S\bm A}$ is a mixture of at most ${\rm Rank}(\beta_B)$ states with rank at most $d_B$, leading to the following bound
\begin{equation}
      {\rm Rank}({\tnu}^{(m)}_{\bm S\bm A})  \leq {\rm Rank}(\beta_B) \, d_B \leq d_B^{2}.
\end{equation}

 Similarly, suppose that $\beta_B$ is supported on a subspace spanned by eigenstates of $H_B$ with energy range $[0,E_{max}(\beta_B)]$. Assume first that the battery $\beta_B=\ketbra{\beta_B}$ is pure. Since $\ket{\psi^{(m)}}_{\bm S \bm A}$ is an eigenstate of $H_{\bm S \bm A}$ with eigenvalue $E_\psi$, and $\mathcal{U}_{\bm S B}$ is energy preserving, the total energy of the final state 
\begin{equation}
\ket{\xi}_{\bm S\bm A B} =U_{\bm S B}^{(m)} \ket{\psi^{(m)}}_{\bm S \bm A}\ket{\beta}_B
\end{equation}
must be smaller than $E_{\rm max}(\beta_B)+E_\psi$. For any energy levels of the system $H_{\bm S\bm A} \ket{E}_{\bm S\bm A} = E \ket{E}_{\bm S\bm A}$  and the battery $H_B \ket{E'}_B =E' \ket{E}_B$, we must thus have
\begin{equation}
    \bra{E}_{\bm S\bm A}\bra{E'}_B \ket{\xi}_{\bm S\bm A B} \neq 0 \implies E+ E' \leq E_{\rm max}(\beta_B)+E_\psi.
\end{equation}
Denoting with $E^{(m)}_{\min}$ the minimum  eigenvalue of $H_{\bm S \bm A}$ we conclude that the final energy of the battery system must satisfy
\begin{equation}
    \bra{E'}_B \ket{\xi}_{\bm S\bm A B} \neq \bm 0 \implies E' \leq  E_{\rm max}(\beta_B)+E_\psi - E_{\min}^{(m)}.
\end{equation}
It follows that
\begin{align}
    {\rm Rank}({\tnu}^{(m)}_{\bm S\bm A}) = {\rm Rank}(\tr_{\bm S \bm A} \ketbra{\xi}_{\bm S\bm A B} ) \leq \mathcal{N}_B \big(E_{\rm max}(\beta_B)+E_\psi - E_{\min}^{(m)}\big), 
\end{align}
where $ \mathcal{N}_B \big(E\big)$ is, by definition, the number battery level with energy no greater than $E$ Recall that $H_{\bm S\bm A}$ is a sum of $2m$ operators $H_S$ acting on different $S_i$ and $2m$ operators $H_A=-H_S$ acting on different $A_i$, leading to $E_{\min}^{(m)} = - 2 m (E_{\max}-E_{\min})$, where $E_{\max(\max)}$ is the minimal (maximal) eigenvalue of $H_S$. Introducing the pseudo-norm $\|H_S\|_\Delta := E_{\max}-E_{\min}$ we obtain for pure $\beta_{B}$
\begin{align} \label{eq: rank 3 temp}
    {\rm Rank}({\tnu}^{(m)}_{\bm S\bm A}) \leq \mathcal{N}_B \big(E_{\rm max}(\beta_B)+E_\psi + 2m \|H_S\|_\Delta\big).
\end{align}
 Finally, if $\beta_B$ is not pure it is a mixture of ${\rm Rank}(\beta_B)$ terms, implying
\begin{align} 
     {\rm Rank}({\tnu}^{(m)}_{\bm S\bm A}) &\leq  {\rm Rank}(\beta_B) \,\mathcal{N}_B \big(E_{\rm max}(\beta_B) + 4m \|H_S\|_\Delta\big) \\
     & \leq \mathcal{N}_B \big(E_{\rm max}(\beta_B)\big) \times \mathcal{N}_B \big( E_{\rm max}(\beta_B)+ 4m \|H_S\|_\Delta\big).
\end{align} 

Now consider a family of batteries $\{ H_B^{(\epsilon)},\beta_B^{(\epsilon)}\}_\epsilon$ suitable for the target gate. As we will see later, in the limit of $\epsilon\to 0$ we will always chose $m$ such that $ 4m \|H_S\|_\Delta\leq  E_{\rm max}(\beta_B^{(\epsilon)})$. In this case we get 
\begin{align} \label{eq: bound rank prop}
     {\rm Rank}({\tnu}^{(m)}_{\bm S\bm A})  \leq \mathcal{N}_B \big(E_{\rm max}(\beta_B(\epsilon))\big) \times \mathcal{N}_B \big( 2 E_{\rm max}(\beta_B^{(\epsilon)})\big) \leq {\rm poly}(\epsilon^{-1}),
\end{align} 
for any \textit{proportionate} family of batteries. This is the inequality that we will use to prove (ii) of the main result.

\subsubsection{The final bounds}

We can now combine all of the above bounds to obtain the following result

\begin{rslt}[Regularity of the entropic coherence] \label{res: resource reg}
    In the limit $m\to\infty$ the difference in the relative entropy of coherence between the ideal ${\nu}^{(m)}_{\bm S\bm A}$ and the approximate ${\tnu}^{(m)}_{\bm S\bm A}$ states is upper bounded by
    \begin{align}  \label{Prop:RegularityRecDef}
        C&({\nu}^{(m)}_{\bm S\bm A},H_{\bm S\bm A}) - C({\tnu}^{(m)}_{\bm S\bm A},H_{\bm S \bm A})  
\leq   4 m \sqrt{\epsilon} \Big(\frac{d_S^2}{2}   \log(2m) + 4 m\log(d_S)\Big) + 2\, h_2\big(4m \sqrt{\epsilon}\big).
    \end{align}
    For a \textit{proportionate} battery the following bound also holds
    \begin{align}  \label{Prop:RegularityRecDef2}
        C&({\nu}^{(m)}_{\bm S\bm A},H_{\bm S\bm A}) - C({\tnu}^{(m)}_{\bm S\bm A},H_{\bm S \bm A})  
\leq   4 m \sqrt{\epsilon} \Big(\frac{d_S^2}{2}   \log(2m) +\log\left({\rm poly}(\epsilon^{-1}_{wc})\right)\Big) + 2\, h_2\big(4m \sqrt{\epsilon}\big),
    \end{align}
under the assumption $ 4m \|H_S\|_\Delta\leq  E_{\rm max}(\beta_B^{(\epsilon)})$.
\end{rslt}
\begin{proof}
    The bounds are an immediate consequence of combining 
    Eqs.~(\ref{eq: taming bound}, 
    \ref{app eq: bound rank gen}, \ref{eq:distance4m}) with Eqs.~\eqref{eq: bound rank no ass} or \eqref{eq: bound rank prop} depending on weather the battery is \textit{proportionate}. Finally, to simplify Eq.~\eqref{app eq: bound rank gen} we also used that for large enough $m$
    \begin{equation}
    \log \left[2^{d_{S}(d_{S}-1)}\binom{2m + \frac{d_{S}(d_{S}-1)}{2}}{2m} \right]\leq \log((2m)^{\frac{d_S^2}{2}}) = \frac{d_S^2}{2}\log(2m).
    \end{equation}
\end{proof}
\subsection{Step 3: Choosing the best initial state.}

As discussed in the proof outline, to obtain a lower bound on $C(\beta,H_{B})$ it now remains to combine the bounds on resource production (Results \ref{lem:Entropy-twirled} and \ref{res: res prod qubit}) and resource regularity (Result \ref{res: resource reg}) derived in the last two sections, and choose the optimal scaling of $m =  m( \epsilon)$ to take the limit $\epsilon\to 0$. To do so we consider separately the case of a general or \textit{proportionate} battery.

\subsubsection{Main result (i) i.e. any battery}

For a generic battery Eq.~\eqref{eq: LB-UB} and the previous results gives the following bound on the entropic coherence of the battery
\begin{equation}
     \label{Eq:Last-general-eq}
  C(\beta,H_{B}) \geq  \frac{r_{2}(V_{S},H_S)}{2} \log(2 \pi \; e \; m \lambda_2(V_{S},H_S)) -16 \log(d_S) \sqrt{\epsilon} \, m^2- 2 d_S^2\sqrt{\epsilon}  \,  m   \log(2m) - 2\, h_2\big(4m \sqrt{\epsilon}\big) - o(1).
\end{equation}
To obtain the strictest inequality we set the following scaling of the number of copies 
\begin{equation} \label{Eq:Best-m}
    m(\epsilon):=\left \lfloor \frac{\sqrt{r_{2}(V_{S},H_S)}}{8\sqrt{\log(d_{S})}\epsilon^{1/4}} \right \rfloor,
\end{equation}
with $ \sqrt{\epsilon}  \,  m   \log(2m), h_2\big(4m \sqrt{\epsilon}\big)  \to 0$. This choice leads to  
\begin{align}
 C({\beta}, {H}_B) &\geq \frac{r_{2}(V_{S},H_S)}{8} \log\left( \frac{(2 \pi e\;\lambda_2(V_{S},H_S)\sqrt{r_{2}(V_{S},H_S)} )^{4}}{8^4 \log^2(d_{S})\epsilon}\right) - \frac{r_{2}(V_{S})}{4} - o(1), \\ 
&=\frac{r_{2}(V_{S},H_S)}{8}  \log\left(\frac{\sigma(V_S,H_S)}{ \epsilon}\right)-o(1)
\end{align}
where we introduced
\begin{equation} \label{def:sigma(V_{S})}
    \sigma(V_S,H_S):=\frac{\pi^{4} e^{4}}{256} \, \frac{ (\lambda_2(V_S,H_S))^4 ( r_2(V_S,H_S))^{2}}{\log^2(d_S)}.
\end{equation}

\subsubsection{Main result (ii), \textit{proportionate} battery}

For a \textit{proportionate} battery combining \eqref{eq: LB-UB} with the previous results gives the following bound on the entropic coherence of the battery
\begin{equation}
     \label{Eq:Last-prop-eq}
  C(\beta,H_{B}) \geq  \frac{r_{2}(V_{S},H_S)}{2} \log(2 \pi \; e \; m \lambda_2(V_{S},H_S)) - 2 d_S^2 \sqrt{\epsilon}  \,  m   \log(2m) - 4 m \sqrt{\epsilon}\,  \log\left({\rm poly}(\epsilon^{-1}_{wc})\right)  - 2\, h_2\big(4m \sqrt{\epsilon}\big) - o(1).
\end{equation}
Here, we will chose he number of copies as
\begin{equation}
   m(\epsilon):=\left \lfloor\frac{\gamma}{\sqrt{\epsilon}\log{\epsilon^{-1}}}\right \rfloor,
\end{equation}
where $\gamma>0$ is a constant that will be determined later. First, we need to verify that this choice verifies the assumption $ 4 m(\epsilon) \|H_S\|_\Delta\leq  E_{\rm max}(\beta_B^{(\epsilon)})$. This follows from the theorem 1 of \cite{Chiribella}, which implies that the energy of a suitable battery must verify 
\begin{equation}
E_{\rm max}(\beta_B^{(\epsilon)})
\geq 
\frac{C}{\sqrt{\epsilon}}
\geq 
\frac{4 \gamma \|H_S\|_\Delta}{\sqrt{\epsilon}\log(\epsilon^{-1})}
\end{equation}
in the limit of small $\epsilon$.

In Eq.~\eqref{Eq:Last-prop-eq} we find that $h_2\big(4m \sqrt{\epsilon}\big)\to 0$,
\begin{equation}
    4  \frac{d_S^2}{2} \sqrt{\epsilon}  \,  m   \log(2m) =    \,  \frac{4  \frac{d_S^2}{2} \gamma}{\log{\epsilon^{-1}}}   \log\left(2\frac{\gamma}{\sqrt{\epsilon}\log{\epsilon^{-1}}}\right) =2 d_S^2 \gamma \frac{\log \epsilon^{-1/2} +2\gamma \ps{-} \log\log \epsilon^{-1}}{\log{\epsilon^{-1}}} =  d_S^2 \gamma +o(1),
\end{equation}
and
\begin{align}
4 m \sqrt{\epsilon}\,  \log\left({\rm poly}(\epsilon^{-1}_{wc})\right) = \frac{4\gamma \log\left({\rm poly}(\epsilon^{-1}_{wc})\right) }{\log(\epsilon^{-1})} \leq  \frac{4 \gamma \,  \alpha \log(\epsilon^{-1}_{wc})}{\log(\epsilon^{-1}_{wc})} +o(1) \leq 4 \gamma \,  \alpha + o(1)
\end{align}
where $\alpha>0$ is the dominant power of the polynomial, i.e. such that ${\rm poly}(\epsilon^{-1}_{wc}) \leq \epsilon^{-\alpha}$ in the limit $\epsilon\to 0$.
Combining with the positive term we obtain
\begin{align}
C({\beta}, {H}_B) &\geq \frac{r_{2}(V_{S},H_S)}{4} \log\left( \frac{(2 \pi  e  \lambda_2(V_{S},H_S))^2}{\epsilon}\right)  -\frac{r_2(V_S,H_S)}{2} \log ( \log\epsilon^{-1})+ \frac{r_{2}(V_{S},H_S)}{2}\log(\gamma)-  \gamma (d_S^2 +4 \alpha) -o(1).
\end{align}
The rhs is maximized for the choice $\gamma = \frac{r_2(V_S,H_S)}{4( \frac{d_S^2}{2}+2\alpha)}$, leading to the final bound
\begin{align}
    C({\beta}, {H}_B) &\geq \frac{r_{2}(V_{S},H_S)}{4} \log\left( \frac{(2 \pi  e  \lambda_2(V_{S},H_S))^2}{\epsilon}\right)  -\frac{r_2(V_S,H_S)}{2} \log ( \log\epsilon^{-1})+ \frac{r_2(V_S,H_S)}{2}\log(\frac{r_2(V_S,H_S)}{ e (2d_S^2+8\alpha)}) -o(1)\nonumber \\
    & = \frac{r_{2}(V_{S},H_S)}{4} \log\left(\frac{\sigma'(V_S,H_S)}{\epsilon(\log \epsilon^{-1})^2}\right),
\end{align}
were we introduced
\begin{equation} \label{def:sigma'(V_{S})}
    \sigma'(V_S,H_S) :=  \frac{\pi^2}{4} \left(\frac{\lambda_2(V_S,H_S)\,r_2(V_S,H_S)}{\frac{d_S^2}{2}+2\alpha}\right)^2.
\end{equation}

\section{Proving corollaries \ref{Cor:Energy} and \ref{Cor:QFI}}
\label{app: corrs}

We begin by proving that at fixed energy distribution, pure states have higher entropic coherence: 
\begin{lem} \label{lem:Pure>mixed}
For any Hamiltonian $H$ and state $\sigma$, let $\ket{\sigma}:=\underset{\ket{E} \in \mathcal{B}[H]}{\sum}\sqrt{\bra{E}\sigma\ket{E}}\ket{E}$ where $\mathcal{B}[H]\subseteq {\tt Eig}[H]$ is an orthonormal basis, then
    \begin{equation}
        C(\ket{\sigma},H) \geq C(\sigma,H),
    \end{equation}

\end{lem}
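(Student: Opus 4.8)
The plan is to show that both quantities reduce to explicit entropies, after which the lemma collapses to a single well-known entropic inequality. Write $p_E := \tr(\Pi_E \sigma)$ for the energy-level distribution and $\tilde\sigma_E := \Pi_E \sigma \Pi_E / p_E$ for the normalized post-measurement states, and set $S(p):=-\sum_E p_E \log p_E$ for the Shannon entropy of the distribution $\{p_E\}$. First I would note that $\ket{\sigma}$ carries the \emph{same} energy-level statistics as $\sigma$: since $\mathcal{B}[H]$ restricted to each eigenspace is an orthonormal basis of it, one has $\|\Pi_E\ket{\sigma}\|^2 = \sum_{\ket{E'}\in\mathcal{B}[H],\,E'=E}\bra{E'}\sigma\ket{E'} = \tr(\Pi_E\sigma)=p_E$. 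As $\mathcal{G}_{H}[\ketbra{\sigma}]=\sum_E \Pi_E\ketbra{\sigma}\Pi_E$ is block diagonal with rank-one blocks of weights $p_E$, its spectrum is exactly $\{p_E\}$, and because $\ketbra{\sigma}$ is pure we get $C(\ket{\sigma},H)=S(\mathcal{G}_{H}[\ketbra{\sigma}])=S(p)$.

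Next I would evaluate $C(\sigma,H)$ using the block-diagonal structure of $\mathcal{G}_{H}[\sigma]=\sum_E p_E\tilde\sigma_E$. Because the blocks have orthogonal support, the eigenvalues of $\mathcal{G}_{H}[\sigma]$ are $\{p_E\lambda_{E,j}\}$, with $\{\lambda_{E,j}\}_j$ the spectrum of $\tilde\sigma_E$, giving the standard grouping identity $S(\mathcal{G}_{H}[\sigma]) = S(p) + \sum_E p_E\, S(\tilde\sigma_E)$. Hence
\begin{equation}
C(\sigma,H) = S(\mathcal{G}_{H}[\sigma]) - S(\sigma) = S(p) + \sum_E p_E\, S(\tilde\sigma_E) - S(\sigma),
\end{equation}
and combining with the previous paragraph,
\begin{equation}
C(\ket{\sigma},H) - C(\sigma,H) = S(\sigma) - \sum_E p_E\, S(\tilde\sigma_E).
\end{equation}
The lemma is therefore \emph{equivalent} to the inequality $\sum_E p_E\, S(\tilde\sigma_E)\le S(\sigma)$, i.e.\ to the statement that an energy measurement does not increase the entropy of the state on average.

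This last inequality is the one nontrivial step and the main obstacle; I would establish it by purification and subadditivity. Let $\ket{\Psi}_{QR}$ purify $\sigma$ on an auxiliary system $R$, and record the outcome of the projective measurement $\{\Pi_E\}$ in a classical register $C$, producing
\begin{equation}
\rho_{RC} = \sum_E \tr_Q\big[(\Pi_E\otimes\id_R)\ketbra{\Psi}(\Pi_E\otimes\id_R)\big]\otimes\ketbra{E}_C.
\end{equation}
Since each vector $(\Pi_E\otimes\id_R)\ket{\Psi}$ is proportional to a pure state of $QR$, its $Q$- and $R$-marginals share the same spectrum, so the conditional $R$-block has entropy $S(\tilde\sigma_E)$; as the registers $\ket{E}_C$ are orthonormal with weights $p_E$, this yields $S(\rho_{RC}) = S(p) + \sum_E p_E\, S(\tilde\sigma_E)$. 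Moreover $S(\rho_C)=S(p)$, while $\rho_R = \tr_Q\ketbra{\Psi} = \sigma_R$ (applying the CPTP measurement channel on $Q$ leaves the $R$-marginal untouched), so $S(\rho_R)=S(\sigma)$. Subadditivity $S(\rho_{RC})\le S(\rho_R)+S(\rho_C)$ then reads $S(p) + \sum_E p_E\, S(\tilde\sigma_E) \le S(\sigma) + S(p)$, and cancelling $S(p)$ gives $\sum_E p_E\, S(\tilde\sigma_E)\le S(\sigma)$, which closes the argument. I expect the only delicate points to be the bookkeeping in the degenerate case, where $\Pi_E$ projects onto a multidimensional eigenspace and the basis sum must be grouped within it, and the justification that the conditional marginals are isospectral; both are routine.
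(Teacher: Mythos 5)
Your proof is correct and follows essentially the same route as the paper: both reduce the claim, via the grouping identity for the block-diagonal twirled states, to the inequality $S(\sigma)\geq\sum_E p_E\,S(\tilde\sigma_E)$, and both establish it by purifying $\sigma$ and recording the energy-measurement outcome in an orthonormal register. The only cosmetic difference is that you invoke ordinary subadditivity of the reduced state $\rho_{RC}$, whereas the paper applies strong subadditivity to the full pure tripartite state --- for a pure tripartite state these are the same inequality read on complementary marginals.
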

\begin{proof}
Let $\Pi_{E}$ be the projectors on the fixed energy subspaces of $H$, and  $\sigma_{E}:=\frac{\Pi_{E}\sigma\Pi_{E}}{P_{E}}$ where $P_{E}:=tr[\sigma \Pi_{E}]$, then we have 
\begin{align}
& C(\ket{\sigma},H) - C(\sigma,H)= S(\sigma)+S(\mathcal{G}_{H}(\ket{\sigma}))-S(\mathcal{G}_{H}(\sigma))\\
&=S(\sigma)+S(\{ P_{E}\}_{E})-S\left(\underset{E \in \mathcal{E}[H]}{\bigoplus}P_{E}\sigma_{E}\right) \geq 0
\end{align}
where the inequality is a consequence of the strong sub-additivity of entropy \cite{StrongSubadditivity2002}, infact  
Consider the system $S$ in the state $\sigma_S$, and its purification $\Psi_{SA}$ on $S$ composed with an auxiliary system $A$. Consider the isometry $W_{SR} = \suma{E} \Pi^{(E)}_S \otimes \ketbra{E}{0}_R$ (where $\braket{E}{E'}:=\delta_{E,E'}$), applied on $S$ and a register $R$, initially in the sate $\ketbra{0}_R$. The isometry can be realized by a joint unitary on $SR$, hence the final state $\rho_{SR} = W_{SR} \sigma_S \otimes \ketbra{0}_R W_{SR}^\dag$ has the same entropy as the initial state of the system $S(\rho_{SR})=S(\sigma_S\otimes \ketbra{0}_R)=S(\sigma_S)$. Denoting the final pure state of the three systems $\Phi_{SAR}=W_{SR} \Psi_{SA} \otimes \ketbra{0}_R W_{SR}^{\dagger}$ we find for this state and its marginals
\begin{align}
    S(SAR) &= S( \Phi_{SAR}) =0, \\
    S(SA) &= S(R) = S\left( \sum_E P_E \ketbra{E}_R \right),\\
    S(SR) &= S(\rho_{SR}) = S(\sigma_{S}),\\
    S(S)  &= S\left(\sum_E \Pi_S^{(E)} \rho_S \Pi_S^{(E)}\right) = S\left(\underset{E \in \mathcal{E}[H]}{\bigoplus}P_{E}\sigma_{E}\right).
\end{align}
Combining these identities with the strong sub-additivity of the entropy~\cite{StrongSubadditivity2002}  implies 
\begin{equation}
0\leq S(SA)+ S(SR)-S(S) -S(SAR) = S(\{P_{E}\}_{E}) + S(\sigma_S)-S\left(\underset{E \in \mathcal{E}[H]}{\bigoplus}P_{E}\sigma_{E}\right),
\end{equation}
proving the statement.

We now observe that $\ket{\sigma}, \sigma$ have the same energy distribution, meaning that $\tr [\sigma H]=\bra{\sigma}H\ket{\sigma},{\rm Var}(\sigma,H)={\rm Var}(\ket{\sigma},H)$. Thus the states optimizing Eq. (\ref{Eq:minenergy}) can always be choosen pure, and the only variable of optimization is probability distribution $\{P_{E}\}_{E}$. We have thus transformed the optimizations into their classical counterparts, finding the distribution with least average energy, respectively, variance, at fixed minimum entropy. The solution to these problems is well known \cite{MaxEntropyCoverTomas,MaxEntropySecondMomentFixCont}, and summarized by Eqs.(\ref{Eq:minenergy}) and (\ref{Eq:minVar}).

Consider now the set of Hamiltonians with a linear number of energy levels, i.e. $\mathcal{L}_{\eta}:= \{H: \mathcal{N}(H,E)\leq 1+\eta E\}$. We prove that among Hamiltonians $H\in \mathcal{L}_{\eta}$, the one with the least energetic states at fixed entropy of coherence is $H_{h.o.}^{(\eta)}=\underset{n=0}{\overset{\infty}{\sum}} \eta^{-1} \,  n \, \ketbra{n}{n}$ is the Harmonic oscillator of frequency $\eta^{-1}$. To see this consider a generic $H$, if it exist $\delta>0,\ket{E'}\in {\tt Eig}[H]$ such that $H':=H-\delta \ketbra{E'}{E'}\in \mathcal{L}_{\eta}$ and $(E'-\delta) \notin \mathcal{E}[H]$, then for all pure states
 $\ket{\sigma}$ we have
 \begin{align}
     \bra{\sigma}H'\ket{\sigma}\leq \bra{\sigma}H\ket{\sigma}; \quad C(\ket{\sigma},H')\geq C(\ket{\sigma},H),
 \end{align}
 where $C(\ket{\sigma},H')> C(\ket{\sigma},H)$ can occur when $E'$ is a degenerate eigenvalue and $\bra{E'}\sigma\ket{E'}< \tr[\Pi_{E'}\sigma]$ and the shift lifts the degeneracy. The proof is complete since $H_{h.o.}^{(\eta)}$ is the unique Hamiltonian (up to irrelevant changes of base) for which such $\delta,E'$ never exist. Thus for every $H\in \mathcal{L}_{\eta}$
\begin{equation}
    \underset{\rho: C(\rho,H)\geq C}{\min} \tr  [H\rho] \geq  \frac{\tr [H_{h.o.}^{(\eta)} e^{-\gamma_{C} H_{h.o.}^{(\eta)}}]}{\tr [e^{-\gamma_{C} H_{h.o.}^{(\eta)}}]}.
\end{equation}
Finally, by direct computation $\gamma_{C}=\frac{2^{-C+1}}{\eta}+O(\frac{2^{-2C}}{\eta})$ and $\frac{\tr [H_{h.o.}^{(\eta)} e^{-\gamma_{C} H_{h.o.}^{(\eta)}}]}{\tr [e^{-\gamma_{C} H_{h.o.}^{(\eta)}}]}=\eta^{-1} \, (2^{C-1}-1)+O(1)$.

The corollary \ref{Eq:minenergy} then follows immediately from (i) of the main result.
\end{proof}

We now prove corollary \ref{Cor:QFI}.
\begin{proof}
 Because of Lemma \ref{lem:Pure>mixed} the minimum variace state at fixed entropic coherence can be assumed to be pure, and the problem is reduced once again to its classical counterpart \cite{MaxEntropyCoverTomas,MaxEntropySecondMomentFixCont}. In this case, the analogue equation to Eq.\eqref{Eq:minenergy} is
\begin{align} \label{Eq:minVar}
    \mathrm{Var}(H_B,\beta_B) \geq \inf_{\mu} \; \mathrm{Var}\left(H_B,\frac{\exp\left(-\zeta_{C}(\mu)(H_B-\mu)^{2}\right)}{\tr\left[\exp\left(-\zeta_{C}(\mu)(H_B-\mu)^{2}\right)\right]}\right),
\end{align}
where the Lagrangian multiplier $\zeta_{C}(\mu)$ are determined by the equation \footnote{Since $S\left( \frac{\exp^{-\zeta(H-\mu)^{2}}}{\tr[\exp^{-\zeta(H-\mu)^{2}}]}\right)$ is decreasing in $\zeta$, the function $\zeta_{C}(\mu)$ is well difined.}
\begin{align}
    &  S\left( \frac{\exp\left(-\zeta_{C}(\mu)(H_B-\mu)^{2}\right)}{\tr\left[\exp\left(-\zeta_{C}(\mu)(H_B-\mu)^{2}\right)\right]}\right)=C.
\end{align}

We now use the asymptotic expression for the entropy of a Gaussian probability distribution \cite{EntropyCommonDistrib}, the inf is reached for $\mu \rightarrow \infty$, obtaining as a direct consequence of Eq. \eqref{Eq:minVar} that 
 ${\rm Var}(\beta_{B},H_{B})\geq \omega^{2}\frac{2^{2C(\beta_{B},H_{B})}}{e\pi}-o(1) $. This implies the following corrolary of the main result \eqref{Eq:RECforAbelian}.
 
 \begin{cor*}[Variance constraits to implement NEPGs]\label{Cor:var}
Let $H_{B}$ tha Hamiltonian of a Harmonic oscillator of frequency $\omega$ and $\{H_B,\beta_B^{(\epsilon)}\}_{\epsilon}$ be suitable for the NEPG $\mathcal{V}_S $, then 
\begin{equation}
  {\rm Var}(\beta,H_{B})\geq  \left(\frac{4\omega^{2}\sigma(V_{S},H_{S})^2}{e\pi}-o(1)\right)(\epsilon d_{S})^{-\frac{r_{2}(V_{S},H_{S})}{4}}.
\end{equation}
\end{cor*}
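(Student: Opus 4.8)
The plan is to split the argument into two stages: first establish a state-independent inequality bounding the variance from below by the entropic coherence on a harmonic-oscillator battery, and then feed in the lower bound on the entropic coherence coming from part (i) of the Main Result, Eq.~\eqref{Eq:RECforAbelian}. Throughout I take logarithms in base $2$, consistently with the $2^{2C}$ factor that will appear.

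For the first stage I would reduce the problem to a classical extremal question. By Lemma~\ref{lem:Pure>mixed}, replacing $\beta_B$ with the pure state $\ket{\beta_B}$ assembled from its energy populations leaves the energy distribution---hence ${\rm Var}(\beta_B,H_B)$---unchanged while it cannot decrease the entropic coherence. Thus at fixed $C(\beta_B,H_B)=C$ the minimizer of the variance can be taken pure, and minimizing ${\rm Var}(\beta_B,H_B)$ becomes the purely classical task of minimizing the second central moment of a probability distribution on the ladder $\{n\omega\}_{n\ge 0}$ subject to the single constraint that its Shannon entropy equals $C$. Maximum-entropy duality (the Gaussian maximizes entropy at fixed second moment, hence minimizes the second moment at fixed entropy) identifies the extremizer as the Gibbs state of $(H_B-\mu)^2$, which is exactly Eq.~\eqref{Eq:minVar}. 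I would then take $\mu\to\infty$, where the one-sided truncation at $n=0$ becomes irrelevant and the lattice looks locally like $\omega\mathbb{Z}$, and insert the asymptotic entropy of a discretized Gaussian (its continuous entropy $\tfrac12\log(2\pi e\,{\rm Var})$ shifted by $-\log\omega$) to invert the entropy--variance relation. This yields the state-independent bound ${\rm Var}(\beta_B,H_B)\ge \frac{\omega^2}{e\pi}\,2^{2C(\beta_B,H_B)}-o(1)$, valid for any battery state.

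For the second stage I would specialize to a suitable family $\{H_B,\beta_B^{(\epsilon)}\}_\epsilon$ and invoke Eq.~\eqref{Eq:RECforAbelian}, namely $C(\beta_B^{(\epsilon)},H_B)\ge \frac{r_2(V_S)}{8}\log\!\big(\sigma(V_S)/\epsilon\big)-o(1)$. Exponentiating gives $2^{2C}\ge\big(\sigma(V_S)/\epsilon\big)^{r_2(V_S)/4}(1-o(1))$, and substituting this into the variance inequality of the first stage produces a lower bound scaling as $\epsilon^{-r_2(V_S)/4}$; the explicit prefactor, including the $d_S$-dependence that is absorbed into $\sigma(V_S,H_S)$ through Eq.~\eqref{def:sigma(V_{S})}, then follows by collecting constants. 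No further optimization over the copy number is required here, since the optimal $m$ has already been fixed inside the Main Result.

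The main obstacle is the asymptotic control in the first stage: pinning down the constant $\frac{\omega^2}{e\pi}$ and justifying that the infimum over $\mu$ in Eq.~\eqref{Eq:minVar} is genuinely attained as $\mu\to\infty$. One must bound the discrepancy between the discretized-Gaussian entropy and its continuous value uniformly as $C\to\infty$, and verify that translating the Gaussian far up the semi-infinite ladder removes the $n\ge 0$ truncation without inflating the second moment. Once that inequality is secured, the remaining substitution of the entropic-coherence lower bound is routine.
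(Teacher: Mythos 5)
Your proposal follows essentially the same route as the paper: reduce to pure states via Lemma~\ref{lem:Pure>mixed}, solve the classical minimum-variance-at-fixed-entropy problem to get the Gaussian-Gibbs extremizer of Eq.~\eqref{Eq:minVar}, take $\mu\to\infty$ with the discretized-Gaussian entropy asymptotics to obtain ${\rm Var}\geq \frac{\omega^2}{e\pi}2^{2C}-o(1)$, and then substitute the entropic-coherence lower bound \eqref{Eq:RECforAbelian}. The technical caveat you flag about uniform control of the discretized-Gaussian entropy as $\mu,C\to\infty$ is precisely the step the paper handles by citation, so nothing essential is missing.
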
 
We now exploit the fact the QFI is the convex roof of the variance~\cite{QFI-ConvexRoof} to transform this inequality to one on  QFI. Consider the Choi-infidelity between channels, i.e. $\epsilon_{C}(\mathcal{V},\mathcal{W}):=1-F(\sigma^{C}_{\mathcal{V}},\sigma^{C}_{\mathcal{W}})$ where $\sigma^{C}_{\mathcal{V}},\sigma^{C}_{\mathcal{W}}$ are respectivily the Choi-states of the channels $\mathcal{V},\mathcal{W}$ \cite{Choi1975}. Let $\epsilon_{C}(\epsilon):=\epsilon_{C}(\tcV^{(\epsilon)},\cV)$ be the Choi-infidelity corresponding to a worse case infidelity $\epsilon$. As it was proved in \cite{UDpaper} $\epsilon_{C},\epsilon_{wc}$ are equivalent up to the dimension of the system, $\epsilon_{C}(\epsilon)d_{S} \geq \epsilon \geq \epsilon_{C}(\epsilon)$. Then we have 
\begin{equation}
  {\rm Var}(\beta_{B}^{(\epsilon)},H_{B})\geq \left(\zeta-o(1)\right)(d_{S}\epsilon_{C}(\epsilon))^{-r},
\end{equation}
where we operated the substitution $r:=\frac{r_{2}(V_{S},H_{S})}{4},\xi=\omega^{2}\frac{\sigma(V_{S},H_{S})^2}{e\pi}$.
We now recall that QFI
  equals 4 times the convex roof of the variance \cite{QFI-ConvexRoof} and thus exist an ensemble rappresentation of 
 $\beta^{(\epsilon)}$, which we call $\{ p_{j}^{\epsilon},\ket{j^{(\epsilon)}}\}_{j}$, 

such that
\begin{equation} \label{QFIConvexRoofExpression}
  {\rm QFI}(\beta^{(\epsilon)}_{B},H) = 4  \sum_{j} p_j^{\epsilon} \, \mathrm{Var}[\ket{j^{(\epsilon)}}, H_{B}^{(\epsilon)}]
\end{equation}
 Let $\tcV_{j^{(\epsilon)}}(\cdot):=tr[\mathcal{U}_{SB}(\cdot \otimes \ketbra{j^{(\epsilon)}}{j^{(\epsilon)}})]$, and define $\epsilon_{C}^{(j)}:=\epsilon_{C}(\tcV_{j},\mathcal{V}_{S})$, since the Choi-infidelity is linear in the second channel if the first is unitary, we have 
 $\suma{j}\epsilon_{C}^{(j)}(\epsilon)=\epsilon_{C}(\epsilon)$, thus 
 \begin{align}
  & d_{S}\epsilon_{C}(\epsilon)= \underset{j}{\sum}p_{i}^{(\epsilon)} \epsilon_{C}^{(j)}(\epsilon)\geq 
     \underset{j}{\sum}p_{i}^{(\epsilon)}  \left[\frac{4 \xi}{{\rm QFI}(\ket{j^{(\epsilon)}},H_{B})}-o(1) \right]^{\frac{1}{r}} \\
     &\geq  \left[\frac{4 \xi}{\underset{j}{\sum}p_{i}^{(\epsilon)}{\rm QFI}(\ket{j^{(\epsilon)}},H_{B})}-o(1) \right]^{\frac{1}{r}}= \left[\frac{4 \xi}{{\rm QFI}(\beta^{(\epsilon)}_{B},H_{B})}-o(1) \right]^{\frac{1}{r}}
 \end{align}
 Where we used convexity of the function $x^{-r}$ for $r>0$.
 Inverting the latter we finally proved 
 \begin{equation}
   {\rm QFI}(\beta^{(\epsilon)}_{B},H_{B})\geq  (4\xi-o(1)) (\epsilon_{C}(\epsilon)d_{S})^{-r}\geq (4\xi-o(1)) (\epsilon d_{S})^{-r}
 \end{equation}
  Substituting $r$ and $\xi$ with their values completes the proof.
\end{proof}

\section{Entropy of sums of discrete i.i.d. random variables }

\label{app: iid}

In this appendix we give a brief review of the bound in the entropy of sums of iid discrete random variables, used in result \ref{lem:Entropy-twirled}. The goal of the appendix is to present the minimal definition needed to apply the bound and relies entirely on the results published in~\cite{OurIIDEntropy}.\\

\renewcommand{\vec}[1]{{\bm{#1}}}

Let $\{X_{i}\}_{i \in \mathbb{N}}$ be a sequence of i.i.d random variables that take values in a finite set of real numbers $\chi:=\{x_{1},..x_{d} \}$, we also use the notation $\vec{x}:=(x_{1},x_{2},..x_{d})$ and $\vec{p}:=(p_{1},...p_{d})$ for the corresponding vector of  probabilities. We are interested to lower bound entropy of their sum 
\begin{equation}
T_{N}:= \sum_{i=1}^{N}X_{i}. 
\end{equation}

First, we briefly discussion the case of the so-called lattice random variables. Note in particular that deterministic and binary random variables $d=1,2$ are always lattice, as follows from the following definition. 
\begin{defn}\label{defn:lattice}
$X$ is a {\bf lattice random variable}, if it takes values in a {\bf lattice set} $\chi$, i.e. such that 
\begin{align}\label{app: lattice}
    \chi \subset \big \{ h \, n + a : n \in \mathbb{Z}\big\}
\end{align}
for  some $h, a \in \mathbb{R}$. The {\bf maximal span} $h(X)=h(\chi)$ is the maximal real number $h$ for which Eq.~\eqref{app: lattice} can be fulfilled. 
\end{defn}
For lattice random variable an asymptotically tight expression for the entropy is known.
\begin{thm}\cite{EntropySumIID}\label{Th:LatticeEntropy}
    Let $X$ be a lattice random variable with maximal span $h(\chi)>0$. Then in the large $N$ limit
    \begin{equation}\label{eq: greek thrm}
        S(T_{N})= \frac{1}{2}\log(N)+\frac{1}{2}\log(2\pi e \frac{{\rm Var}(X)}{h(\chi)^{2}})+o(1).
    \end{equation}
\end{thm}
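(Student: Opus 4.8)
The plan is to read the right-hand side of \eqref{eq: greek thrm} as the Shannon entropy of a \emph{discretized Gaussian} of variance $N\,{\rm Var}(X)$ sampled on a lattice of spacing $h(\chi)$, and to prove the asymptotic equality in two moves: a \emph{local central limit theorem} (LCLT) comparing the law of $T_N$ to such a discretized Gaussian, and a direct evaluation of the entropy of the latter. This is an entropic central limit theorem for lattice variables, and it mirrors the discretized-Gaussian estimate already carried out inside the proof of Result~\ref{res: res prod qubit}.

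First I would reduce to the unit lattice. Since Shannon entropy is invariant under any injective relabelling of outcomes, the affine bijection $x \mapsto (x-a)/h(\chi)$ sends the support of $X$ into $\mathbb{Z}$ and maps $T_N$ bijectively to $\widetilde T_N := (T_N - N a)/h(\chi)$; hence $S(T_N) = S(\widetilde T_N)$, while ${\rm Var}(\widetilde X) = {\rm Var}(X)/h(\chi)^2$ and the maximal span becomes $1$. Maximality of $h(\chi)$ is exactly what guarantees that the differences of the shifted, rescaled support have greatest common divisor $1$, i.e. that $\widetilde X$ is \emph{aperiodic} (its support generates $\mathbb{Z}$); this is the hypothesis needed below. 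The claim thus reduces to $S(\widetilde T_N) = \tfrac12\log N + \tfrac12\log(2\pi e\,{\rm Var}(\widetilde X)) + o(1)$ for an aperiodic integer variable.

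Next I would invoke the Gnedenko LCLT: for an aperiodic integer variable of finite variance, writing $\mu = \mathbb{E}[\widetilde X]$, $s^2 = N\,{\rm Var}(\widetilde X)$ and $g(x) = (2\pi s^2)^{-1/2}\exp(-(x-N\mu)^2/2s^2)$, one has $\sup_{n\in\mathbb{Z}}\big|\Pr(\widetilde T_N = n) - g(n)\big| = o(1/\sqrt N)$. Summed over the $O(\sqrt N)$ lattice points carrying appreciable mass this yields $D(\widetilde T_N, N_d) = o(1)$, where $N_d[n] = \int_{n-1/2}^{n+1/2} g$ is the discretized Gaussian of the same mean and variance (the object appearing in \eqref{Eq:EntropyDiff}). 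In parallel, a direct computation shows $S(N_d) = \tfrac12\log(2\pi e s^2) + o(1) = \tfrac12\log N + \tfrac12\log(2\pi e\,{\rm Var}(\widetilde X)) + o(1)$, since as $s\to\infty$ the Riemann sum $-\sum_n N_d[n]\log N_d[n]$ converges to the differential entropy $-\int g\log g = \tfrac12\log(2\pi e s^2)$, the unit bin width being negligible against the width $s\sim\sqrt N$.

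The last move is to transfer the entropy of $N_d$ to that of $\widetilde T_N$ using their closeness in total variation, via the refined Fannes--Audenaert inequality (Lemma~\ref{lem:ref_FanAud}). The main obstacle lies precisely here: both distributions live on an interval of width $O(N)$, so a naive continuity estimate only gives $|S(\widetilde T_N)-S(N_d)| \lesssim D\cdot\log N + h_2(D)$, and the $o(1/\sqrt N)$ pointwise LCLT error must be controlled more finely than $1/\log N$. I would handle this by splitting into a bulk $|n - N\mu|\le K\sqrt N$ and tails: on the bulk the LCLT controls $-p\log p$ term by term (both $p_n$ and $g(n)$ are $\Theta(1/\sqrt N)$, so the $\log$ factors are uniformly $\Theta(\log N)$ and the additive errors accumulate controllably), while the tail contributions to both entropies are bounded by a sub-Gaussian moment estimate $\sum_{|n-N\mu|>K\sqrt N} p_n\log(1/p_n) = O(K e^{-cK^2}) + o(1)$, made arbitrarily small by sending $K\to\infty$ after $N\to\infty$. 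Combining the two evaluations gives \eqref{eq: greek thrm}.
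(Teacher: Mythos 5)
First, a remark on scope: the paper does not prove Theorem~\ref{Th:LatticeEntropy} at all --- it is imported verbatim from the cited reference \cite{EntropySumIID} --- so there is no internal proof to compare yours against. Judged on its own terms, your architecture (reduce to an aperiodic integer-valued variable, invoke a local central limit theorem, evaluate the entropy of a discretized Gaussian, transfer by continuity) is the standard route to this statement, and the reduction step is exactly right: entropy is invariant under the affine relabelling, and maximality of $h(\chi)$ is precisely what makes the rescaled support generate $\mathbb{Z}$, which is the aperiodicity hypothesis of Gnedenko's theorem.

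The one step that does not close as written is the bulk comparison. Gnedenko's theorem gives only $\sup_n|p_n-g(n)|=o(1/\sqrt N)$; a term-by-term bound on $|p_n\log p_n-g(n)\log g(n)|$ costs a factor $|\log p_n|=\Theta(\log N)$ per site, and summing over the $\Theta(K\sqrt N)$ bulk sites yields $o(\log N)$, not $o(1)$ --- the same failure mode you correctly flag for the global Fannes bound reappears inside the bulk, and ``the additive errors accumulate controllably'' is exactly the assertion that still needs the work. Two standard repairs: (i) since $X$ here has finite support, third moments are finite and the Edgeworth-corrected local limit theorem upgrades the uniform error to $O(1/N)$, after which the accumulated bulk error is $O(\log N/\sqrt N)=o(1)$; or (ii) bypass the discretized Gaussian and use the exact cross-entropy identity $-\sum_n p_n\log g(n)=\tfrac12\log(2\pi s^2)+\mathrm{Var}(T_N)/(2s^2)=\tfrac12\log(2\pi e s^2)$, valid because the mean and variance of $g$ are matched to those of $T_N$; this reduces \eqref{eq: greek thrm} to showing $\sum_n p_n\log\bigl(p_n/g(n)\bigr)\to 0$, and there the additive $o(1/\sqrt N)$ bound already suffices, since on the bulk $g(n)=\Theta(1/\sqrt N)$ implies $p_n/g(n)=1+o(1)$ uniformly and the sum is weighted by $p_n$ rather than counted over sites, while the tails are handled by your sub-Gaussian estimate. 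With either patch the argument is complete; as drafted it has a genuine, though standard and fixable, quantitative gap.
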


Next, we consider general discrete random variables. We first introduce some notion of the incommensurability of the set of values $\chi$ it may take. 
For this, consider a partition of a set $\chi$ of into $k$ disjoint subsets $\{\chi_{j} \}_{j=1..k}$, and let 
\begin{equation}
\mathcal{S}_{m}(\chi_j) := \left\{ \sum_{x_{i}\in \chi_{j}}n_{i} x_{i} : 
n_i \in \mathbb{N}, \sum_i n_i = m
\right \}
\end{equation}
be the set of numbers that can be obtained by summing $m$, not necessarily different, numbers from $\chi_{j}$. For convenience we also introduce the notation 
$ \mathbb{N}^{k}_{N}:=\left\{ \bm{n}\in \mathbb{N}^{k} : \sumab{i=1}{k} n_{i}=N \right\}$,
for the set of natural $k$-vectors $\bm n$ whose elements sum up to $N$ (with $0 \in \mathbb{N}$).  With its help, we define the following property.

\begin{defn}\label{def: incom}
A collection of disjoint sets $\{\chi_{j} \}_{j=1..k}$ (e.g. a partition of $\chi$) is called \textbf{incommensurable} if all $\chi_j$ are non-empty, and for all $N\in \mathbb{N}$, $\bm{m},\bm{n}\in \mathbb{N}^{k}_{N}$ and $ \vec{y}= (y_1,\dots,y_k)$, $\vec{y}'= (y_1',\dots,y_k')$ with $y_j\in S_{m_{j}}(\chi_{j})$ and $y_j' \in \mathcal{S}_{n_{j}}(\chi_j)$  we have 
\begin{equation}
\sum_{j=1}^k y_j = \sum_{j=1}^k y_j' \implies y_j =y_j' \quad \forall j \in \{1, ...k\} .
\end{equation}
\end{defn}
The idea behind this definition is that for an \textit{incommensurable} partition, the observed value $t=T_N$ \textit{uniquely} specifies the totals $(y_1,\dots,y_k)$ obtained in each subset. Hence, there is a bijection between the random variable $T_{N}$ and the $k$ random variables $(Y_{1}^{(N)},\dots, Y_{k}^{(N)})$
\begin{equation}
Y_{j}^{(N)}:=\sum_{i=1}^{N} X_{i} \, \mathbf{1}_{\chi_{j}}(X_{i}), \qquad \text{where} \qquad 
\mathbf{1}_{\chi_{j}}(x) = 
\begin{cases}
1 & x \in  \chi_{j} \\
0 & \text{otherwise}
\end{cases}
\end{equation}
are indicator functions verifying if $X$ takes value in $\chi_j$, and each $Y_{j}^{(N)}$ only sums the values drawn from $\chi_j$. \\

The cases where $\chi$ can be exactly partitioned into incommensurable sets are quite specific, more generally one may consider the following definitions.

\begin{defn}\label{def: k can}
    Consider a discrete set of  real numbers $\chi$. A collection of disjoint sets $\{\chi_{1},..\chi_{k}\}$ is called a \textbf{k-canonical prepartition} (or canonical prepartition) of $\chi$ if it satisfies the following properties: 
\begin{enumerate}
\item $\chi_j \subset\chi$ 
 \item $\chi_{j}$ is a non-empty lattice for $j=1,..,k$ (definition \ref{defn:lattice}).
 \item $\{\chi_{j}\}_{j=1,..,k} $ is incommensurable (definition \ref{def: incom}).
\end{enumerate}
If $|\chi_j| =1$ for all $j=1,\dots,k$ we say that the canonical prepartition is {\bf degenerate}. 
\end{defn}

One notes that if the sets $\chi_{1},..\chi_{k}$ all have a single element (i.e. the prepartition is degenerate), one can merge any two of them to obtain a $(k-1)$-canonical prepartition (non-degenerate). This observation, motivates the following definition.

\begin{defn} \label{Def:r-Set}
    Let $\chi:=\{x_{1},..x_{d}\}$  be a discrete set of real numbers with $d\geq 2$.  We define its {\bf incommensurability rank} $r(\chi)$ as the maximal integer $1\leq r\leq d-1$ such that $\chi$ admits a non-degenerate $r$-canonical partition. For a random variable $X$ taking values in $\chi$ we also call $r(X)=r(\chi)$ its incommensurability rank.
\end{defn}
One can also show that the incommensurability rank of a set is invariant under shifts, and $r(X)=1$ if and only if $X$ is a nondeterministic lattice random variable. For a general random variable determining $r(X)$ is not necessarily straightforward.  However the following result provides a useful lower bound on $r(X)$.

\begin{lem}\label{lem: part span}
 Any finite set $\chi=\{x_1,\dots,x_d\}$ admits a $Q$-\textit{canonical partition}, where $Q:={\rm dim}_{\mathbb{Q}} [{\rm span}_{\mathbb{Q}}(\chi)]$ is the dimension of the rational span of $\chi$ seen as  vector space over the field of rational numbers $\mathbb{Q}$. That is,
\begin{equation}
{\rm span}_{\mathbb{Q}}(\chi):= \left \{ \underset{x \in \chi}{\sum} q_{x} x : q_{x} \in \mathbb{Q}\right\}
\end{equation}
 is the set containing all linear combination of elements of $\chi$ with {\it rational} coefficients. It follows then that a random variable X taking values in $\chi$ satisfy $r(X)\geq Q-1$
\end{lem}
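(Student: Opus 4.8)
The plan is to build an explicit $Q$-canonical prepartition from a rational basis that already sits inside $\chi$, verify its incommensurability directly from the definition, and then descend by one via the merging remark to extract the stated lower bound on $r(\chi)$. The whole argument is elementary linear algebra once the combinatorial definitions are unpacked, so I expect no deep obstacle; the only delicate point is matching the verification to the exact quantifiers of Definition~\ref{def: incom}.

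First I would note that, since $\chi$ generates $V:=\mathrm{span}_{\mathbb{Q}}(\chi)$ over $\mathbb{Q}$ and $\dim_{\mathbb{Q}}V=Q$, any spanning set of a finite-dimensional space contains a basis, so I can extract $Q$ elements $x_{i_1},\dots,x_{i_Q}\in\chi$ that are linearly independent over $\mathbb{Q}$. Set $\chi_j:=\{x_{i_j}\}$ for $j=1,\dots,Q$. Because distinct basis elements are distinct reals, these are $Q$ pairwise disjoint non-empty subsets of $\chi$, and every singleton is trivially a lattice in the sense of Definition~\ref{defn:lattice} (take $a=x_{i_j}$ and any span $h$). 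Thus conditions $1$ and $2$ of Definition~\ref{def: k can} hold by construction, and it remains to check incommensurability.

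The core step is verifying that $\{\chi_1,\dots,\chi_Q\}$ is incommensurable. For singletons one has $\mathcal{S}_{m_j}(\chi_j)=\{m_j x_{i_j}\}$, so for any $N$ and any $\bm m,\bm n\in\mathbb{N}^{Q}_{N}$ the only admissible choices are $y_j=m_j x_{i_j}$ and $y_j'=n_j x_{i_j}$. The hypothesis $\sum_j y_j=\sum_j y_j'$ then reads $\sum_{j}(m_j-n_j)x_{i_j}=0$ with integer, hence rational, coefficients; linear independence of the $x_{i_j}$ over $\mathbb{Q}$ forces $m_j=n_j$ for all $j$ (which is consistent with the constraints $\sum_j m_j=\sum_j n_j=N$), and therefore $y_j=y_j'$ for every $j$. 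This establishes the first assertion, namely that $\chi$ admits a $Q$-canonical prepartition. This is the only place where the rational-span dimension enters, and also the place where the argument could slip if one only checked that the totals agree rather than that each component $y_j$ is individually pinned down.

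Finally, the constructed prepartition is degenerate, since all $\chi_j$ are singletons. Assuming $Q\ge 2$, I would invoke the merging observation recorded just before Definition~\ref{Def:r-Set}: merging $\chi_1$ and $\chi_2$ into the two-element set $\{x_{i_1},x_{i_2}\}$, which is a lattice of span $|x_{i_2}-x_{i_1}|$, yields a \emph{non-degenerate} $(Q-1)$-canonical prepartition. By Definition~\ref{Def:r-Set} this gives $r(\chi)\ge Q-1$, and hence $r(X)\ge Q-1$ for any $X$ valued in $\chi$; the case $Q=1$ is immediate since $r\ge 1\ge Q-1$. Incommensurability is preserved under this merge because $\mathcal{S}_{m_1}(\{x_{i_1},x_{i_2}\})=\{a x_{i_1}+(m_1-a)x_{i_2}:0\le a\le m_1\}$, and matching the coefficients of the independent pair $x_{i_1},x_{i_2}$ again determines both $a$ and $m_1$ uniquely; this is exactly the content of the cited remark, so I would reference it rather than redo the computation. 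The main obstacle, such as it is, is thus purely bookkeeping, ensuring the incommensurability condition is checked against the precise quantifiers of Definition~\ref{def: incom} and not a weaker same-total statement.
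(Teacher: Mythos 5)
Your argument is correct. Note that the paper itself does not prove Lemma~\ref{lem: part span} --- it is imported wholesale from the cited reference \cite{OurIIDEntropy} --- so there is no in-paper proof to compare against; but your construction is the natural one and every step checks out against the definitions. Extracting a $\mathbb{Q}$-basis $x_{i_1},\dots,x_{i_Q}$ from the spanning set $\chi$, taking the singleton collection $\{\{x_{i_j}\}\}_j$, and observing that $\mathcal{S}_{m_j}(\{x_{i_j}\})=\{m_j x_{i_j}\}$ reduces incommensurability exactly to linear independence over $\mathbb{Q}$ of the $x_{i_j}$, which pins down each $y_j$ individually (the point you correctly flag as the place where a weaker "same-total" check would not suffice). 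The merge of two singletons into a two-element lattice preserves incommensurability for the same reason --- matching coefficients on the independent pair determines both the count $a$ and the total $m_1$ --- and produces the non-degenerate $(Q-1)$-canonical prepartition required by Definition~\ref{Def:r-Set}, giving $r(\chi)\geq Q-1$; your separate treatment of $Q=1$ is also fine. One cosmetic remark: what you build is a \emph{prepartition} in the sense of Definition~\ref{def: k can} (the singletons drawn from the basis need not cover $\chi$), whereas the lemma and Definition~\ref{Def:r-Set} say "partition"; the paper uses the two words interchangeably and the entropy bounds only require the prepartition notion, so this does not affect the validity of your proof, but it is worth stating explicitly that covering $\chi$ is not needed.
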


The definitions of $k$-canonical prepartitions and incommensurability rank give rise to the following results and its immediate corollary. The latter was used in the proof of  the result~\ref{lem:Entropy-twirled}.

\begin{thm}
 \label{Th:General-iid-entropy} \cite{OurIIDEntropy}
 Let $X_1,\dots, X_N$ be a collection of iid random variables taking values in $\chi$, a set of real numbers that admits a k-canonical prepartition $\{\chi_{1},..\chi_{k}\}$ (definition \ref{def: k can}) with $p_j = {\rm Pr}(X_i \in \chi_j)$ for $j= 1,\dots,k$, and  $q =\sumab{j=1}{k} p_j$. In addition, let $\chi_{1},..\chi_{s\leq k}$ be the only subsets with one element, then the following bounds hold
 \begin{align}\label{eq: lambda1}
      s<k: \quad    S\left(\sum_{i=1}^N X_i \right) &\geq 
          \frac{k}{2} \log(2\pi e  N \lambda_1)-o(1) \qquad \quad \,\, \, \, \lambda_{1}:=\left( p_1\dots p_s \left(1- \sum_{j=1}^s \frac{p_j}{q}\right)\prod_{j=s+1}^k p_j \frac{{\rm Var}(\tilde X^{(j)})}{h(\chi_j)^{2}}\right )^{1/k} \\
    s = k: \quad   S\left(\sum_{i=1}^N X_i \right) &\geq  \frac{k-1}{2} \log(2\pi e  N \lambda_{2})-o(1)  \qquad \lambda_{2}:= \left (\frac{p_{1}\dots p_{k} }{q^{k}}\right )^{\frac{1}{k-1}}
 \end{align}
where we used the notation $\tilde X^{(j)}\in \chi_j$ with ${\rm Pr}(\tilde X^{(j)}=x) ={\rm Pr}(X=x| X \in \chi_j)$ and $h(\chi_j)$ is the maximal span of the lattice $\chi_j$.
\end{thm}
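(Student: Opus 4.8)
The plan is to reduce the entropy of the scalar variable $T_N=\sum_i X_i$ to the \emph{joint} entropy of the $k$ partial sums collected inside the incommensurable lattices, and then to evaluate that joint entropy with a multivariate local central limit theorem, using Theorem~\ref{Th:LatticeEntropy} as the one-dimensional building block. I introduce the bin counts $M_j=\#\{i:X_i\in\chi_j\}$ and partial sums $Y_j=\sum_{i:X_i\in\chi_j}X_i$ for $j=1,\dots,k$, together with the out-of-bin count $M_0$ and sum $Y_0$ gathering the draws in $\chi\setminus\bigcup_j\chi_j$, so that $T_N=Y_0+\sum_{j=1}^k Y_j$. Definition~\ref{def: incom} is precisely the statement that, once the total in-bin count $M=\sum_{j\ge1}M_j$ is fixed, the number $\sum_{j\ge1}Y_j$ determines the whole tuple $(Y_1,\dots,Y_k)$; this is the injectivity that makes the scalar sum as informative as the vector of per-bin totals, and it is the only place where incommensurability is used.

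First I would establish the reduction $S(T_N)\gtrsim S(Y_1,\dots,Y_k)$ up to controlled corrections. Conditioning cannot increase entropy, and given the in-bin count the shift $Y_0$ is independent of $(Y_1,\dots,Y_k)$, so that $S(T_N)\ge S\big(\sum_{j\ge1}Y_j\,\big|\,M_0\big)=S(Y_1,\dots,Y_k\mid M_0)$ via the above bijection and $S(U+V)\ge S(U)$ for independent $U,V$. Retaining the fluctuations of $M_0$, which by themselves carry $\Theta(\log N)$ of entropy, is what produces the factor $q=\sum_j p_j$ inside $\lambda_1,\lambda_2$, and tracking it carefully is needed so that the logarithm ends up carrying $N$ rather than $qN$; this bookkeeping is one of the delicate points.

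Next I would evaluate the joint entropy. Conditioned on the in-bin count, $(Y_1,\dots,Y_k)$ is a sum of $\Theta(N)$ i.i.d.\ contributions, each placing a conditional draw $\tilde X^{(j)}\in\chi_j$ into coordinate $j$ with probability $\pi_j=p_j/q$. A lattice multivariate local limit theorem (the $k$-dimensional refinement of Theorem~\ref{Th:LatticeEntropy}) then shows that the pmf of $(Y_1,\dots,Y_k)$ is uniformly approximated by a Gaussian density sampled on the product lattice $\bigoplus_j\big(h(\chi_j)\mathbb{Z}+c_j\big)$, so its entropy converges to the differential entropy of the limiting Gaussian plus the log-covolume $\sum_j\log h(\chi_j)$. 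The single-draw covariance is $\Sigma=\Lambda-\bm w\bm w^{\mathsf T}$ with $\Lambda=\mathrm{diag}\big(\pi_j(\sigma_j^2+\mu_j^2)\big)$ and $\bm w=(\pi_j\mu_j)_j$, where $\mu_j=\mathbb{E}[\tilde X^{(j)}]$ and $\sigma_j^2=\mathrm{Var}(\tilde X^{(j)})$; by the matrix-determinant lemma $\det\Sigma=\big(\prod_j\pi_j(\sigma_j^2+\mu_j^2)\big)\big(1-\sum_j \pi_j\mu_j^2/(\sigma_j^2+\mu_j^2)\big)$.

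The conceptual crux is the rank of $\Sigma$, which selects the prefactor $\tfrac{k}{2}$ versus $\tfrac{k-1}{2}$. A singleton bin $\chi_j=\{a_j\}$ has $\sigma_j^2=0$, so its term in the bracket equals $\pi_j$; if every bin is a singleton ($s=k$) the bracket collapses to $1-\sum_j\pi_j=0$, reflecting that fixing the in-bin total removes one degree of freedom, $\Sigma$ drops to rank $k-1$, and the reduced multinomial covariance (whose $(k-1)$-minor determinant equals $\prod_j\pi_j=\prod_j p_j/q^k$) reproduces exactly $\lambda_2=(p_1\cdots p_k/q^k)^{1/(k-1)}$. As soon as one bin is non-singleton ($s<k$) the corresponding term obeys $\mu_j^2/(\sigma_j^2+\mu_j^2)<1$, the bracket is strictly positive, $\Sigma$ has full rank $k$, and combining $\det\Sigma$ with the covolumes $h(\chi_j)$ and the variances $\mathrm{Var}(\tilde X^{(j)})$ yields $\lambda_1$. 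I expect the main obstacle to be the rigorous multivariate local limit theorem with explicit, uniform error control: one must turn pointwise Gaussian-density estimates into an entropy estimate with an honest $o(1)$ remainder (via a Fannes--Audenaert--type continuity argument and tail truncation), uniformly over the $\Theta(\sqrt N)$ fluctuations of the conditioning, and treat the rank-deficient direction of the $s=k$ case through the multinomial local limit theorem on the simplex rather than on a full-dimensional lattice.
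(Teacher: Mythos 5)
First, a point of reference: the paper does not prove Theorem~\ref{Th:General-iid-entropy} at all --- Appendix~\ref{app: iid} explicitly states that it ``relies entirely on the results published in~\cite{OurIIDEntropy}'' and only restates the theorem. So there is no in-paper proof to compare against, and your attempt has to be judged on its own. Your architecture is the natural (and almost certainly the intended) one: use Definition~\ref{def: incom} to turn the scalar sum, \emph{at fixed total in-bin count}, into the vector of per-bin totals $(Y_1,\dots,Y_k)$; evaluate the joint entropy by a multivariate lattice local limit theorem; and read off the prefactor $\tfrac{k}{2}$ versus $\tfrac{k-1}{2}$ from the rank of the limiting covariance $\Sigma=\Lambda-\bm w\bm w^{\mathsf T}$. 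Your determinant computation via the matrix-determinant lemma is correct, and the observation that the bracket $1-\sum_j\pi_j\mu_j^2/(\sigma_j^2+\mu_j^2)$ vanishes exactly when every bin is a singleton and the prepartition exhausts the conditional mass is the right explanation of the $s=k$ degeneracy.

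There is, however, one concrete gap that your own chain of inequalities cannot close. You reduce via $S(T_N)\ge S(T_N\mid M_0)\ge S(Y_1,\dots,Y_k\mid M_0)$, and conditionally on $M_0$ the in-bin sample size is $M\approx qN$. The local-CLT evaluation then yields $\tfrac{k-1}{2}\log\!\big(2\pi e\, (qN)\,(\textstyle\prod_j \pi_j)^{1/(k-1)}\big)$ in the $s=k$ case, which equals the stated bound \emph{minus} $\tfrac{k-1}{2}\log(1/q)$; an analogous constant shortfall appears for $s<k$. You flag this as ``bookkeeping'' to be fixed by ``retaining the fluctuations of $M_0$'', but the inequality $S(T_N)\ge S(T_N\mid M_0)$ discards precisely those fluctuations, and recovering them would require upper-bounding $S(M_0\mid T_N)$ strictly below $S(M_0)$, which you do not address and which is not obviously true. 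So as written your route proves the theorem with $\lambda$ replaced by $q\lambda$ (equivalently, with $qN$ inside the logarithm) --- correct leading order $\tfrac{r}{2}\log N$, which is all the present paper ever uses, but not the stated constant. The second, more standard, gap is that the uniform multivariate local limit theorem and its conversion into an entropy estimate (including the rank-deficient $s=k$ direction) is where essentially all of the technical work of~\cite{OurIIDEntropy} lives, and in your proposal it is named rather than carried out; the one-dimensional Theorem~\ref{Th:LatticeEntropy} does not by itself supply the $k$-dimensional statement you need.
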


\begin{cor} \label{cor:GeneralIIDEntropy} \cite{OurIIDEntropy}
Let $X_1,\dots, X_N$ be a collection of iid random variable taking values in $\chi=\{x_1\dots,x_d\}$. 
Then the following bounds hold
 \begin{align}
           H\left(\sum_{i=1}^N X_i \right) \geq \frac{r(\chi)}{2} \log( 2\pi e \lambda_1 N )-o(1)
    \end{align}
where $r(\chi)$ is the incommeasurability rank of $X$ and $\lambda_1$ is given in Eq.~\eqref{eq: lambda1} for any non-degenerate $r(\chi)$-canonical partition of $\chi$.
\end{cor}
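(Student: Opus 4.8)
The plan is to obtain this corollary as an immediate specialization of Theorem~\ref{Th:General-iid-entropy}, by choosing the prepartition that realizes the incommensurability rank. By Definition~\ref{Def:r-Set}, $r(\chi)$ is \emph{by construction} the largest integer for which $\chi$ admits a non-degenerate $r$-canonical prepartition, so at least one such prepartition $\{\chi_1,\dots,\chi_{r(\chi)}\}$ exists. First I would fix any such prepartition and set $k=r(\chi)$ in the notation of the theorem.

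The crux is to verify that non-degeneracy forces the first branch ($s<k$) of Theorem~\ref{Th:General-iid-entropy}, whose bound is exactly the one claimed. By Definition~\ref{def: k can}, a prepartition is degenerate iff every one of its blocks is a singleton; hence non-degeneracy guarantees that some block $\chi_j$ contains at least two elements. Since $s$ counts the singleton blocks among the $k$ of them, this yields $s<k=r(\chi)$, so the $s<k$ line of the theorem applies directly and gives
\begin{equation*}
S\!\left(\sum_{i=1}^N X_i\right)\geq \frac{k}{2}\log\!\big(2\pi e N \lambda_1\big)-o(1)=\frac{r(\chi)}{2}\log\!\big(2\pi e N \lambda_1\big)-o(1),
\end{equation*}
with $\lambda_1$ given by Eq.~\eqref{eq: lambda1} evaluated on the chosen prepartition. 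This is precisely the asserted inequality, and since the argument works for \emph{any} non-degenerate $r(\chi)$-canonical prepartition, it matches the ``for any'' phrasing of the statement.

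There is no genuine obstacle here; the remaining work is bookkeeping on the boundary cases rather than a substantive computation. The statement implicitly assumes $d\geq 2$, since the incommensurability rank is only defined for nondeterministic variables (for $d=1$ the sum is deterministic and the claim is vacuous). The extreme case $r(\chi)=1$ deserves a separate check: there the unique block equals $\chi$ itself, which is a nondeterministic lattice by Definition~\ref{defn:lattice}, so $s=0<1=k$ and $\lambda_1$ collapses to $\mathrm{Var}(X)/h(\chi)^2$, recovering Theorem~\ref{Th:LatticeEntropy}. The only conceptual content, then, is recognizing that the extremal prepartition attaining $r(\chi)$ necessarily lands in the $s<k$ regime of the theorem, after which the bound is inherited verbatim.
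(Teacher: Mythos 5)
Your proposal is correct and is exactly the intended derivation: the paper states Corollary~\ref{cor:GeneralIIDEntropy} as an immediate specialization of Theorem~\ref{Th:General-iid-entropy} (citing \cite{OurIIDEntropy} without spelling out the step), and your observation that non-degeneracy of an $r(\chi)$-canonical prepartition forces $s<k$, so the first branch of the theorem applies with $k=r(\chi)$, is precisely that step.
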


\end{widetext}

\end{document}